\newcommand{\ALGtikzmarkcolor}{black}
\newcommand{\ALGtikzmarkextraindent}{4pt}
\newcommand{\ALGtikzmarkverticaloffsetstart}{-.5ex}
\newcommand{\ALGtikzmarkverticaloffsetend}{-.5ex}
\newcounter{ALG@tikzmark@tempcnta}
\newcommand\ALG@tikzmark@start{
    \global\let\ALG@tikzmark@last\ALG@tikzmark@starttext
    \expandafter\edef\csname ALG@tikzmark@\theALG@nested\endcsname{\theALG@tikzmark@tempcnta}
    \tikzmark{ALG@tikzmark@start@\csname ALG@tikzmark@\theALG@nested\endcsname}
    \addtocounter{ALG@tikzmark@tempcnta}{1}
}
\def\ALG@tikzmark@starttext{start}
\newcommand\ALG@tikzmark@end{
    \ifx\ALG@tikzmark@last\ALG@tikzmark@starttext
    \else
        \tikzmark{ALG@tikzmark@end@\csname ALG@tikzmark@\theALG@nested\endcsname}%
        \tikz[overlay,remember picture] \draw[\ALGtikzmarkcolor] let \p{S}=($(pic cs:ALG@tikzmark@start@\csname ALG@tikzmark@\theALG@nested\endcsname)+(\ALGtikzmarkextraindent,\ALGtikzmarkverticaloffsetstart)$), \p{E}=($(pic cs:ALG@tikzmark@end@\csname ALG@tikzmark@\theALG@nested\endcsname)+(\ALGtikzmarkextraindent,\ALGtikzmarkverticaloffsetend)$) in (\x{S},\y{S})--(\x{S},\y{E});
    \fi
    \gdef\ALG@tikzmark@last{end}
}
\apptocmd{\ALG@beginblock}{\ALG@tikzmark@start}{}{\errmessage{failed to patch}}
\pretocmd{\ALG@endblock}{\ALG@tikzmark@end}{}{\errmessage{failed to patch}}
\theoremstyle{plain}
\newcommand{\alg}{\text{ALG}}
\newcommand{\opt}{\text{OPT}}
\newcommand{\floor}[1]{\left\lfloor #1 \right\rfloor}
\newcommand{\ceil}[1]{\left\lceil #1 \right\rceil}
\title{The \texorpdfstring{\( k \)}{k}-Server with Preferences Problem}
\author{Jannik Castenow}{Heinz Nixdorf Institute \& Department of Computer Science, Paderborn University, F\"urstenallee 11, 33102 Paderborn, Germany}{jannik.castenow@upb.de}{https://orcid.org/0000-0002-8585-4181}{}
\author{Bj\"orn Feldkord}{Heinz Nixdorf Institute \& Department of Computer Science, Paderborn University, F\"urstenallee 11, 33102 Paderborn, Germany}{bjoernf@mail.upb.de}{https://orcid.org/0000-0001-6591-2420}{}
\author{Till Knollmann}{Heinz Nixdorf Institute \& Department of Computer Science, Paderborn University, F\"urstenallee 11, 33102 Paderborn, Germany}{tillk@mail.upb.de}{https://orcid.org/0000-0003-2014-4696}{}
\author{Manuel Malatyali}{Heinz Nixdorf Institute \& Department of Computer Science, Paderborn University, F\"urstenallee 11, 33102 Paderborn, Germany}{malatya@mail.upb.de}{}{}
\author{Friedhelm Meyer auf der Heide}{Heinz Nixdorf Institute \& Department of Computer Science, Paderborn University, F\"urstenallee 11, 33102 Paderborn, Germany}{fmadh@upb.de}{}{}
\authorrunning{Castenow et al.}
\keywords{K-Server Problem, Heterogeneity, Online Caching}
\begin{document}

\maketitle

\begin{abstract}
    The famous $k$-Server Problem covers plenty of resource allocation scenarios, and several variations have been studied extensively for decades. We present a model generalizing the $k$-Server Problem by \emph{preferences} of the requests, where the servers are not identical and requests can express which specific servers should serve them.
    In our model, requests can either be answered by any server (\emph{general} requests) or by a specific one (\emph{specific} requests). If only general requests appear, the instance is one of the original $k$-Server Problem, and a lower bound for the competitive ratio of $k$ applies. If only specific requests appear, a solution with a competitive ratio of $1$ becomes trivial. We show that if both kinds of requests appear, the lower bound raises to $2k-1$.

    We study deterministic online algorithms and present two algorithms for uniform metrics. The first one has a competitive ratio dependent on the frequency of specific requests. It achieves a worst-case competitive ratio of $3k-2$ while it is optimal when only general requests appear or when specific requests dominate the input sequence. The second has a worst-case competitive ratio of $2k+14$. For the first algorithm, we show a lower bound of $3k-2$, while the second algorithm has a lower bound of $2k-1$ when only general requests appear. The two algorithms differ in only one behavioral rule that significantly influences the competitive ratio. We show that there is a trade-off between performing well against instances of the $k$-Server Problem and mixed instances based on the rule. Additionally, no deterministic online algorithm can be optimal for both kinds of instances simultaneously.

    Regarding non-uniform metrics, we present an adaption of the Double Coverage algorithm for $2$ servers on the line achieving a competitive ratio of $6$, and an adaption of the Work-Function-Algorithm achieving a competitive ratio of $4k$.
\end{abstract}

\section{Introduction}\label{sec:introduction}

Consider the following situation in a distributed system:
There are \( k \) virtual machines, each stored in some node of the system.
Each of them offers an instance of the same service.
Over time, requests for the service appear at the locations of the system.
A request gets served by migrating a virtual machine to the request's location.
The described scenario is an instance of the famous \( k \)-Server Problem.
The virtual machines are the servers, and an algorithm aims to minimize the total movement (migration) cost for serving all requests.
A metric space abstracts the transmission cost in the distributed system and supplies the movement cost.

However, the assumption that every virtual machine offers the same service seems quite restrictive.
It is reasonable to assume that each virtual machine belongs to some provider offering only its services as a bundle.
Presumably, there still exist elementary services that each of the providers offers.
Additionally, each provider offers unique services.
Following this line of thought, each request either only needs elementary services and does not care which virtual machine is provided, or it needs unique services such that specified virtual machines need to be moved.
The new problem statement generalizes the forenamed \( k \)-Server Problem by enabling the requests to express a \emph{preference} on how to be served.
Either a request can be served by any server or by a specific one.
Note, if a request needs multiple specific servers, we can split it without consequences into several requests, one for each specific server.

In this paper, we present a new model -- the \emph{\( k \)-Server with Preferences Problem} -- capturing the above idea and generalizing the \( k \)-Server Problem.
We study our problem in its online version regarding deterministic algorithms mostly on uniform metrics (the paging problem).
We are interested in the \emph{competitive ratio}, a standard analysis technique for the performance of online algorithms.
An algorithm has a competitive ratio of \( c \) if, for any input sequence, its cost is at most \( c \) times the cost of an optimal offline solution on the same input sequence.

At first sight, one might think that it is sufficient to use the following approach:
For all requests that can be answered by any server, apply a standard \( k \)-Server algorithm.
For all requests that require a specific server, move the respective servers to them.
For example, consider a uniform metric space and the Least-Recently-Used (LRU) algorithm, a deterministic marking algorithm achieving an optimal competitive ratio for the \( k \)-Server Problem.
If one applies the strategy from above, there are inputs for which already in a metric space with only three locations and \( k = 3 \) servers the algorithm has an \emph{unbounded competitive ratio} (see \Cref{sec:simple-algorithm:unbounded-competitive}).

Simple adaptations such as the one above fail because our model introduces substantial novelties to the \( k \)-Server Problem.
Due to the requests' preferences, multiple servers possibly must be at identical locations simultaneously.
Furthermore, since the servers each have a unique identity, it no longer suffices to cover the same set of positions as an optimal solution.

\paragraph{Applications for Uniform Metrics}

While our extension of the \( k \)-Server Problem is motivated by general metrics, most of our algorithmic results hold for uniform metrics.
In addition to the encountered analytical depths, this case is motivated by the following applications.

In some practical situations, the migration costs could be dominated by shutting down, wrapping, setting up, and resuming a virtual machine.
In comparison, the data transfer itself is relatively cheap.
Thus, the migration costs can be seen as equal at each location such that the metric space is close to uniform.

For a more paging-related application, imagine a shared memory system where \( k \) caches are tied to a set of processors.
For a fixed processor, the cost to read is the same for all caches while the performance of more complex operations varies due to the caches being wired differently to the processor.
Then it can be required to store a page in a specific cache performing best for the processor executing a complex task on it.
Alternatively, security concerns might require a page to be loaded to a specific cache in some computational context.
The \( k \)-Server with Preferences Problem on uniform metrics provides a first way to model such systems.

\subsection{Our Results}\label{sec:introduction:our-results}

We present a generalization of the \( k \)-Server Problem where requests can (but need not) specify if they want to be served by one specific server.
We call this generalization the \emph{\( k \)-Server with Preferences Problem} and introduce it formally in \Cref{sec:model-and-notation}.
If there are only requests that do not care which server answers them (\emph{general requests}), the problem reduces to the classical \( k \)-Server Problem.
We call such input sequences \emph{pure general inputs}.
If there are only requests that specify which server should serve them (\emph{specific requests}), the problem is trivial as it is clear which server has to move for each request.
Such input sequences are called \emph{pure specific inputs}.
Input sequences consisting of general and specific requests are called \emph{mixed inputs}.

We study the problem regarding deterministic algorithms in the online version.
In that version, requests arrive over time, and each must immediately and irrevocably get served.
The classical \( k \)-Server Problem (pure general inputs) has a lower bound of \( k \) on the competitive ratio, and pure specific inputs pose a competitive ratio of \( 1 \).
We show that mixed inputs pose a higher lower bound of \( 2k-1 \) (see \Cref{sec:lower-bound}).
Additionally, regarding the share \( s \) of specific requests on the total number of requests, we show a lower bound dependent on \( s \) providing a detailed picture of the power of the adversary in our model.

The lower bound already holds in uniform metric spaces, and the design of algorithms with a bounded competitive ratio in these spaces already becomes non-trivial.
Hence, we study online algorithms for the problem in uniform metric spaces.
Our main results are two online algorithms:
\textnormal{\textsc{Conf}} has a competitive ratio dependent on \( s \) as well (see \Cref{sec:k-confident-algorithm}).
It achieves an optimal competitive ratio on pure general (\( s=0 \)) inputs.
For larger \( s \), the ratio increases and overshoots the lower bound up to a competitive ratio of \( 3k - 2 \) (\( s \approx 1/3 \)).
Thereafter, the competitive ratio decreases again with \( s \) until  the optimal ratio for mixed inputs is reached (\( s=\nicefrac{1}{2} \)).
Thereafter, the ratio drops rapidly (and \emph{independently of \( k \)}).
More specifically, the algorithm's competitive ratio matches the lower bound for \( s>\nicefrac{1}{2} \) until it reaches \( 1 \) for pure specific inputs (\( s=1 \)).
The bound of \textnormal{\textsc{Conf}} nicely shows how the adversary is significantly stronger for \( s \leq \nicefrac{1}{2} \) and loses its power as soon as specific requests dominate the input.
Our second algorithm, \textnormal{\textsc{Def}}, is designed to optimize the worst-case competitive ratio.
It achieves a competitive ratio of \( 2k + 14 \) on all inputs, but its competitive ratio for pure general inputs is inherently lower bounded by \( 2k - 1 \) (see \Cref{sec:k-defensive-algorithm}).
As a side-note, \textnormal{\textsc{Def}} can be straightforwardly improved such that, for \( s > \nicefrac{1}{2} \) it also achieves an optimal competitive ratio.
Consider \Cref{figure:competitive-ratio-plot} for an overview of the competitive ratios of our algorithms dependent on \( s \).
Our results indicate a trade-off between performing well on mixed inputs and performing well on pure general inputs.
In fact, we show that \emph{no} algorithm can achieve an optimal competitive ratio for all input types (\Cref{th:no-algorithm-can-be-super} in \Cref{sec:defensive-confident}).

It turns out that there is \emph{one critical behavior} an algorithm can implement for each server that significantly influences the competitive ratio.
We call this behavior \emph{acting defensively}.
If an algorithm acts defensively for \( j \), it keeps track of the last known optimal position \( p^{*}(j) \) of \( j \) (given by the initial configuration or specific requests) and, if a general request appears on \( p^{*}(j) \), always moves \( j \) to the request.
Simple adapted algorithms from the original \( k \)-Server Problem (e.g., LRU, FIFO) usually do not always act defensively for all servers (as they move them in an order independent of the last known optimal positions).
If an algorithm acts not always defensively for some server, we say, it acts \emph{confidently} for that server.
We present the precise definition, more intuitions, and the precise influence of acting defensively/confidently on the competitive ratio in \Cref{sec:defensive-confident}.

As mentioned before, getting to the bound of \( k \) for pure general inputs yields a sub-optimal worst-case bound for mixed inputs and vice versa.
We study this trade-off by considering defensive and confident algorithms.
Our main findings here are the following:
For an extreme class of algorithms that \emph{always} act confidently for \( \ell \) servers (strictly-\( \ell \)-confident algorithms), we show that there is an even higher worst-case lower bound of \( 2k + \ell - 2 \) on mixed inputs.
Our algorithm \textnormal{\textsc{Conf}} (\Cref{sec:k-confident-algorithm}) does not fall into this class, as it may not always act confidently for all servers (it is a \( k \)-confident algorithm).
However, by a similar lower bound, it suffers the same higher worst-case bound of \( 3k-2 \) (see \Cref{sec:lower-bound-conf}).
Therefore, we strongly believe that a similar lower bound exists for all \( \ell \)-confident algorithms.
The advantage of confident algorithms is that they can achieve an optimal competitive ratio of \( k \) on pure general inputs.
To get closer to the worst-case lower bound of \( 2k-1 \), we shift our focus to algorithms that always act defensively for \( \ell > 0 \) servers (\( \ell \)-defensive algorithms).
We show that any such algorithm also has a lower bound of \( k + \ell - 1 \) on the competitive ratio for pure general inputs.
Our algorithm \textnormal{\textsc{Def}} is a \( k \)-defensive algorithm (\Cref{sec:k-defensive-algorithm}).
\\

\begin{minipage}[c]{0.46\textwidth}
    \centering
    \includegraphics[page=1, width=\textwidth, clip=true, trim=0.35cm 0.5cm 0.7cm 0.5cm]{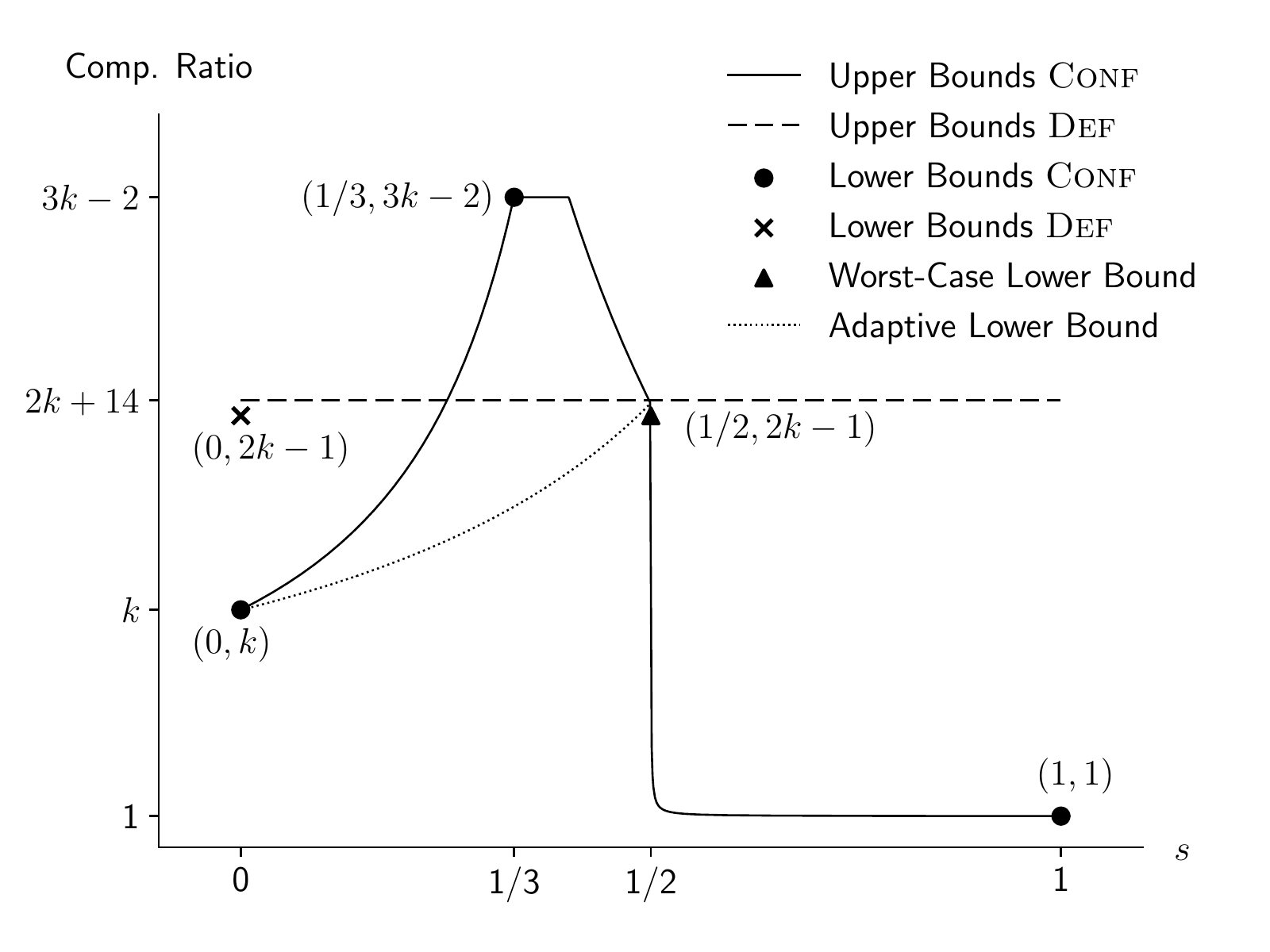}
\end{minipage}
\begin{minipage}[c]{0.02\textwidth}
    \hfill
\end{minipage}
\begin{minipage}[c]{0.46\textwidth}
    \centering
    \captionof{figure}{The competitive ratios of our algorithms depend on \( s \), the number of specific requests divided by the total number of requests (regarding only requests that require a movement by the online algorithm).
    \textnormal{\textsc{Conf}} acts confidently on all servers, achieves an optimal competitive ratio in instances of the \( k \)-Server Problem (\( s=0 \)) and when specific requests dominate the input sequence (\( s \geq 1/2 \)), and a worst-case competitive ratio of \( 3k - 2 \) (\( s \approx 1/3 \)).
    \textnormal{\textsc{Def}} acts defensively on all servers and achieves a worst-case competitive ratio of \( 2k + 14 \).
    The price for this improvement is that the competitive ratio for instances of the \( k \)-Server Problem cannot be better than \( 2k - 1 \) (\( s=0 \)).}
\label{figure:competitive-ratio-plot}
\end{minipage}
\noindent\\

Our algorithms can also be combined because they only differ in acting defensively or confidently for each server.
A combined algorithm that is \( \ell \)-confident and \( (k - \ell) \)-defensive can be analyzed by applying the analysis of \textnormal{\textsc{Conf}} for all servers that act confidently and applying the analysis of \textnormal{\textsc{Def}} for all servers that act defensively.
The resulting algorithm then achieves a competitive ratio of at most \( 2k - \ell + 14 \) on pure general inputs and at most \( 2k + \ell + 14 \) on mixed inputs.
Intuitively, each server for which the algorithm acts defensively increases the competitive ratio on pure general inputs and decreases it on mixed inputs.
\bigskip

In \Cref{sec:general}, we extend our results by considering non-uniform metrics.
We present an algorithm for \( 2 \) servers on the real line which is based on the Double Coverage algorithm.
This algorithm achieves a competitive ratio of \( 6 \).
Designing algorithms for more than \( 2 \) servers is surprisingly more difficult as the algorithm has to decide on how to cover \( k \) previously known locations with \( k-1 \) remaining servers when a server gets forced away from its location due to a specific request.
Further, we present a simple adaption of the Work-Function-Algorithm that achieves a competitive ratio of \( 4k \).
It remains open if and how one can achieve a competitive ratio closer to the lower bound of \( 2k-1 \) in non-uniform metrics.

\subsection{Related Work}\label{sec:introduction:related-work}

The \( k \)-Server Problem was introduced in 1988~\cite{manasse_competitive_1988} shortly after the introduction of the competitive ratio (amortized efficiency in \cite{sleator_amortized_1985}), which evolved into a standard analysis technique for online algorithms.
The problem is a special case of Metrical Task Systems \cite{borodin_optimal_1992}.
In \cite{manasse_competitive_1988}, the authors showed a lower bound of \( k \) for deterministic algorithms and stated the famous \( k \)-Server Conjecture, claiming that there is an algorithm with a competitive ratio of \( k \) for every metric space.
Various work tackled the conjecture, but until today it was only proven for several restricted cases (restricted \( k \)) and simple metrics (\cite{chrobak_new_1990,chrobak_optimal_1991,koutsoupias_2-evader_1996}).
Notably, the Double Coverage algorithm~\cite{chrobak_new_1990} achieves an optimal competitive ratio of \( k \) on the real line.
The most general result as of today is the Work Function Algorithm presented in \cite{koutsoupias_k-server_1995} that achieves a competitive ratio of \( 2k - 1 \) on general metric spaces and an optimal competitive ratio on several specific metrics.
Besides the work toward the conjecture, several variations of the original model were introduced.
Among others, there is the generalized \( k \)-Server Problem~\cite{koutsoupias_cnn_2000}, where each server lies in its own metric space, a model where requests can get served at later time steps for some penalty payment \cite{azar_online_2021}, and a model where requests can be rejected \cite{bittner_online_2014}.
With a focus on the analysis, recently, models with advice were considered, where restricted knowledge on future requests is granted \cite{bockenhauer_advice_2011}.
For a more detailed overview of the history of the \( k \)-Server Problem, we refer to the excellent survey by Koutsoupias~\cite{koutsoupias_k-server_2009}.

The \( k \)-Server Problem generalizes paging (caching), i.e., paging is the \( k \)-Server Problem on a uniform metric.
Later in this paper, we focus on algorithms for this special case of the \( k \)-Server Problem.
For paging, there is a class of optimal deterministic algorithms called \emph{marking algorithms} (see \cite[pp.~752-758]{kleinberg_algorithm_2014} for an explanation).
The general idea of marking algorithms is to operate in phases such that the optimal solution pays at least a cost of \( 1 \) per phase, while the algorithm moves each server once per phase and remembers already used servers by marking them.
They include well-established algorithms such as the Least-Recently-Used (LRU) and the First-In-First-Out (FIFO) policy.
While our algorithms for uniform metrics rely on the marking technique, we want to stress again (as mentioned in the introduction) that simple adaptions of marking algorithms end up with an unbounded competitive ratio.

\subsection{A simple LRU adaption has an unbounded Competitive Ratio}\label{sec:simple-algorithm:unbounded-competitive}

In the following section, we show how a simple adaption of the Least-Recently-Used (LRU) policy fails in our model on uniform metrics.
More specifically, the presented algorithm yields an unbounded competitive ratio already on a tiny and simple metric.
Therefore, we require more involved algorithms.
In the following, we use the notation introduced in \Cref{sec:model-and-notation}.
\bigskip

\textbf{Example Algorithm.}
Use the least recently used server for a general request.
For a specific request, use the requested server.
This server is then also counted as the most recently used one.
\bigskip

For pure general inputs, i.e., instances of the $k$-Server Problem, the LRU approach is optimal.
Unfortunately, we can show the following for mixed inputs:

\begin{theorem}\label{th:simple-algorithm-not-competitive}
    The Example Algorithm has an unbounded competitive ratio even for \( k = 3 \) servers on a uniform metric with \( k \) locations.
\end{theorem}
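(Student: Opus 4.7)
The plan is to exhibit a short cyclic adversarial input on a three-location uniform metric that drives the Example Algorithm's cost to infinity while $\opt$ stays at zero, ruling out any finite competitive ratio (even up to an additive constant).

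Concretely, I would fix locations $a,b,c$ with initial configuration $s_1$ at $a$, $s_2$ at $b$, $s_3$ at $c$, and initial LRU order $s_1 \prec s_2 \prec s_3$ (least-to-most recent). The input is the indefinite repetition of a four-request \emph{trap cycle}: (i) specific $s_1$ at $a$, (ii) specific $s_2$ at $b$, (iii) general at $b$, (iv) specific $s_3$ at $c$.

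The key calculation is to trace $\alg$ on one cycle. Requests (i) and (ii) cost nothing because the requested servers already sit at the requested locations, but they promote $s_1$ and $s_2$ to the two most-recently-used positions, leaving $s_3$ as the unique LRU server even though $s_3$ is at $c$. On request (iii), the algorithm, which always uses the LRU server to answer a general request, therefore moves $s_3$ from $c$ to $b$ at cost $1$; the crucial point is that the rule fires even though $s_2$ is already at $b$. On request (iv), $s_3$ is moved back from $b$ to $c$ at cost $1$. I would then do the short bookkeeping to confirm that after step (iv) both the configuration $(s_1,a),(s_2,b),(s_3,c)$ and the LRU order $s_1 \prec s_2 \prec s_3$ are restored, so the cycle can be repeated indefinitely without the algorithm ``learning'' anything.

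For $\opt$, I would simply verify that keeping the initial configuration throughout costs $0$: every specific request is at the home location of the requested server, and the general request at $b$ is served by $s_2$, which does not move. After $N$ repetitions of the trap cycle, $\alg = 2N$ while $\opt = 0$, so $\alg / \opt$ (and also $\alg - c \cdot \opt$ for any constant $c$) is unbounded. The only genuinely delicate point is finding a cycle that simultaneously (a) induces a paid move by $\alg$ and (b) leaves both the positions and the LRU ranking invariant so that the construction iterates; once this cycle is identified, the remainder of the argument is a direct unrolling.
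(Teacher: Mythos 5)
Your construction has a genuine gap: it stands or falls with the claim that on request (iii), a general request at $b$, the Example Algorithm moves the LRU server $s_3$ onto $b$ even though its server $s_2$ already occupies $b$. Nothing in the algorithm's description licenses this non-lazy reading. The Example Algorithm is an adaption of LRU, and on a ``hit'' (a request at a location the algorithm already covers) no movement is needed to answer the request; the paper itself notes that any algorithm may be assumed lazy, i.e., it moves a server only when the request is at an uncovered location, and accordingly its own lower-bound sequences issue general requests only at unoccupied locations. Under this natural reading, every request of your four-request trap cycle is served at cost $0$ (the two specific requests are at the requested servers' current positions, and the general request at $b$ is covered by $s_2$), so the algorithm pays nothing per cycle, the optimum pays nothing, and no lower bound on the competitive ratio follows. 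In effect you have proved unboundedness only for a strawman variant of the algorithm that gratuitously moves servers onto already covered points.

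The paper's proof avoids exactly this pitfall by making every move genuinely forced. A single specific request (server $1$ at $v_2$) first displaces the algorithm's configuration; thereafter a repeatable cycle is built in which each general request is placed at the currently unoccupied location and each specific request asks a server to a location it does not currently occupy, so the algorithm incurs cost on each such request while returning to the same configuration and the same LRU order, whereas the optimal solution pays a one-time cost of $3$ to reach a configuration that serves the entire infinite sequence with no further movement. To repair your argument you would need a cycle of this kind, where the paid moves are triggered by requests at locations the algorithm does not cover, rather than relying on the algorithm volunteering a move to a location it already covers.
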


\begin{proof}
    Consider the locations \( v_{1}, v_{2}, v_{3} \), \alg{}'s servers \( a_{1}, a_{2}, a_{3} \) and \opt{}'s servers \( o_{1}, o_{2}, o_{3} \).
    Let initially \( p(a_{i}) = p(o_{i}) = v_{i} \) for all servers.
    We assume that the algorithm starts to move its \( a_{i} \) in the order of the indices.
    \opt{} moves each server once such that \( p(o_{1}) = v_{2} \), \( p(o_{2}) = v_{3} \) and \( p(o_{3}) = v_{1} \) for a cost of \( 3 \).
    Afterward, \opt{} never moves its servers.

    Start the sequence by requesting server \( 1 \) specifically at \( v_{2} \).
    Now, the algorithm has its servers \( a_{1}, a_{2} \) at \( v_{2} \) and \( a_{3} \) at \( v_{3} \), while the order of movements for the servers is \( a_{2}, a_{3}, a_{1} \).
    Next, we show how to construct a sequence of requests such that the algorithm has a cost and ends up in the same situation again.
    First, apply a general request on \( v_{1} \) causing \( a_{2} \) to move there.
    Then, request server \( 3 \) specifically at \( v_{1} \), such that \( a_{3} \) moves.
    Do a general request on \( v_{3} \) to move \( a_{1} \) there and afterward, a general request on \( v_{2} \) such that \( a_{2} \) moves there.
    Specifically request server \( 1 \) on \( v_{2} \) to move \( a_{1} \).
    Finally, make a general request on \( v_{3} \) such that the algorithm moves \( a_{3} \) there.
    Repeating this sequence infinitely long yields the theorem.
\end{proof}

\Cref{th:simple-algorithm-not-competitive} already shows us that there are multiple pitfalls in our extended model.
It seems disadvantageous to cover a position with two servers if not needed due to specific requests.
Also, the algorithm might make a mistake when moving a server if that server is immediately specifically requested at its previous position.
It seems to be a good idea to keep servers at the position where they were lastly specifically requested.
However, we have to move a server away from this position eventually.

\section{Model and Notation}\label{sec:model-and-notation}

We consider a generalization of the online \( k \)-Server Problem on a metric space \( M \).

\paragraph{The \( k \)-Server Problem.}
The \( k \)-Server Problem is defined as follows:
There are \( k \) servers \( 1,\dots, k \) located at the points of the metric space.
Their initial position is predetermined.
Denote by \( K \) the set of all servers.
An input sequence consists of a sequence of requests \( r_{1}, \dots, r_{n} \) arriving in time steps \( 1,\dots, n \).
Each request \( r \) appears at a location of \( M \) and must be answered by moving a server onto \( r \).
The cost of an algorithm for this problem is the total distance moved by all servers.
An algorithm aims at minimizing the total cost.
In the online version, the requests are revealed one by one and must immediately be answered, i.e., request \( r_{i} \) reveals after the algorithm answered request \( r_{i-1} \).

\paragraph{The \( k \)-Server with Preferences Problem.}
We generalize the \( k \)-Server Problem to the \( k \)-Server with Preferences Problem as follows:
Each request \( r \) is either \emph{general} or \emph{specific}.
A general request is answered by moving \emph{any} server on \( r \).
A specific request asks for one server \( s \in K \) and is answered by moving \( s \) onto \( r \).

Note that solutions to our problem also solve a more general problem where each request \( r \) asks for a set of servers \( s(r) \subseteq K \) that all have to move onto it.
For this, simply replace any specific request \( r \) for \( s(r) \) by one specific request for each server \( j \in s(r) \) in the input sequence.
Any solution to the transformed input also is a solution for the more general problem with the original input.

To remove the dependency on the initial configuration in the competitive ratio, we assume for all our algorithms that initially, every server \( j \) of the online algorithm is at the same location as server \( j \) of the optimal solution.
This assumption can be dropped while adding only an additive term independent of the optimal solution in the competitive ratio.

\paragraph{Additional Notation.}
When an input sequence contains only general requests, the \( k \)-Server with Preferences Problem becomes equal to the original \( k \)-Server Problem.
We call such input sequences \emph{pure general inputs}, input sequences containing both general and specific requests \emph{mixed inputs}, and input sequences solely consisting of specific requests \emph{pure specific inputs}.
For any server \( j \), we refer to its current position by \( p(j) \) and to the last position where it was specifically requested by \( p^{*}(j) \).
\( p^{*}(j) \) is initially set to \( j \)'s initial position.
If we have a set of servers \( U \), we denote by \( p(U) \) the set of positions of servers of \( U \).
We denote by \opt{} the optimal offline solution that knows all requests beforehand.
In our analyses, we need a precise understanding of time.
For this, we define \( t \) to be the time-step at which the \( t \)-th request appears.
We denote by \emph{right before} \( t \), the point in time at the beginning of \( t \) before the algorithm and the optimal solution have moved their servers and answered the request, while the request is already revealed.
We denote by \emph{right after} \( t \), the point in time at the end of \( t \) after the algorithm and \opt{} moved their servers.
Observe that considering time \( t \), right after \( t \) is before right before \( t+1 \).

\section{The Lower Bound is \texorpdfstring{\( 2k - 1 \)}{2k-1}}\label{sec:lower-bound}

In the following section, we show that the lower bound for the \( k \)-Server with Preferences Problem is significantly higher than the lower bound of \( k \) of the classical \( k \)-Server Problem.

\begin{restatable}{thm}{GeneralLowerBound}\label{th:general-lower-bound}
    Every deterministic online algorithm for the \( k \)-Server with Preferences Problem has a competitive ratio of at least $2k-1$ even in a uniform metric space with \( k+1 \) locations.
\end{restatable}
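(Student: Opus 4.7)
The plan is to construct, for every deterministic online algorithm $\alg$, an adaptive request sequence on the uniform metric with $k+1$ locations $v_1,\dots,v_{k+1}$ (initial configuration: server $i$ at $v_i$ for $i\le k$, so $v_{k+1}$ is uncovered) that forces $\alg$'s cost to exceed $(2k-1)\cdot\opt - O(1)$. The high-level idea is to generalize the classical $k$-server lower bound on uniform metrics, which uses general requests at $\alg$'s uncovered location to force cost $k$, by inserting specific requests that roughly double $\alg$'s per-request cost while leaving $\opt$'s amortized cost nearly unchanged.

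The adversary proceeds in phases of $2k-1$ requests. Each phase consists of $k-1$ \emph{bounce pairs} followed by one unmatched \emph{final general} request. A bounce pair is (i) a general request at $\alg$'s currently uncovered location, which forces $\alg$ to move some server $a$ from its position $p$ to the requested location, followed by (ii) a specific request for $a$ at $p$, which forces $\alg$ to move $a$ back to $p$. A bounce pair costs $\alg$ exactly $2$ and leaves $\alg$'s configuration unchanged. The final general request at $\alg$'s uncovered location costs $\alg$ one further unit but is not followed by a specific. Hence $\alg$ pays exactly $2k-1$ per phase.

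The offline bound is established by exhibiting an explicit $\opt$ strategy of amortized cost $1$ per phase. Since $\opt$ is offline, it can simulate $\alg$ in advance and, for each phase, identify a server $j^{*}$ that $\alg$ does \emph{not} move in any of the $k-1$ bounces of the phase; such a $j^{*}$ exists by pigeonhole, as only $k-1$ bounce moves occur but there are $k$ servers. Within the phase $\opt$ parks $j^{*}$ at the location $\alg$ leaves permanently uncovered. Every general request of the phase is then free for $\opt$ (served by $j^{*}$), and every specific request is for a server other than $j^{*}$ at its unchanged initial position (which $\opt$ also leaves stationary), so it is also free. The only cost $\opt$ incurs per phase is the one-time parking and any adjustments between phases, both bounded by $O(1)$. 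Thus $\alg \ge m(2k-1)$ while $\opt \le m + O(1)$ over $m$ phases, yielding ratio at least $2k-1$ in the limit $m \to \infty$.

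The main obstacle is the analysis at phase boundaries: after the unmatched final general, $\alg$'s configuration is shifted by one server (now at $v_{k+1}$ instead of its initial position), so the uncovered location for the next phase moves, and $\opt$ may need to re-park $j^{*}$ or select a new $j^{*}$. The proof must carefully amortize these re-parking costs, showing they are $O(1)$ per phase, so that the offline cost remains $m + O(1)$. A secondary subtlety is that the adversary's strategy is adaptive -- the specific request in each bounce pair depends on $\alg$'s choice of server to move -- so the resulting sequence is only well-defined by explicitly simulating $\alg$ when constructing it.
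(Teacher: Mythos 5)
There is a genuine gap, and it lies in the offline bound, which is the heart of any such lower bound. Your accounting of the online cost is fine (each bounce pair costs \( 2 \), the final general costs \( 1 \), so \( 2k-1 \) per phase), but the claim that \opt{} pays \( m + O(1) \) over \( m \) phases does not hold for every algorithm, and your own sketch is internally inconsistent on this point: you concede that re-parking/adjustment costs are ``\( O(1) \) per phase'' and then conclude \( \opt \leq m + O(1) \); if the adjustments are a positive constant per phase, \opt{} is \( \Theta(m) \) with a constant \( \geq 2 \) and the ratio collapses to \( (2k-1)/c \), not \( 2k-1 \). The pigeonhole/parking argument is only valid in the first phase, where all specific requests are at initial positions. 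In later phases the specific requests sit at the algorithm's drifted positions, and the server \( j^{*} \) that you park at the general-request location in phase \( t \) can be (and, against a suitably designed algorithm, will be) specifically requested in phase \( t+1 \) at the position it abandoned, while the new general-request location must simultaneously be covered by someone else. This is not merely a hole in the amortization: the adversary itself can be beaten. For example, with \( k=3 \), consider the algorithm that in each phase bounces exactly the two servers \emph{other than} the server left un-bounced in that phase, always chooses as final mover a server that \emph{was} bounced (never the un-bounced one), and arranges the rotation so that the un-bounced server changes every phase. Then the un-bounced server of phase \( t \) is specifically requested in phase \( t+1 \) at its unchanged position, so \opt{} pays both to park it at \( G_t \) and to bring it back, on top of the one genuine pin change per phase; one can check \opt{} is forced to pay roughly \( 2 \) (in fact closer to \( 3 \)) per phase, so your sequence only certifies a ratio of about \( (2k-1)/2 \approx k \). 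Since the bound must hold against \emph{every} deterministic algorithm, the construction as described cannot yield \( 2k-1 \).

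The paper's proof avoids exactly this trap by never interleaving the corrective specific requests with the general requests. It first runs a purely general phase at the currently uncovered location, never requesting \( v_{k} \), which forces every server to move at least once and forces every server \( a_i \) (\( i \leq k-1 \)) to either have moved twice or to have left \( v_i \); only afterwards does it issue one specific request per server, all at the \emph{initial} positions \( v_1,\dots,v_{k-1} \), which the optimal solution never vacated (it only moved \( o_k \) to \( v_{k+1} \)). Hence \opt{} pays exactly \( 1 \) for the whole super-phase while the algorithm pays at least \( 2(k-1)+1 \), and the final configuration is a relabelled copy of the initial one, so the argument repeats cleanly. If you want to salvage your bounce-pair idea, you would need to guarantee that every specific request lands on a location the offline solution never has to abandon, which is precisely the property your per-phase interleaving destroys.
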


\begin{proof}
    Consider the uniform metric with locations \( v_{1}, \dots, v_{k+1} \).
    Let \( a_{1}, \dots, a_{k} \) be the servers of an online algorithm and \( o_{1}, \dots, o_{k} \) be the servers of an optimal solution.
    Assume that initially, \( p(a_{i}) = p(o_{i}) = v_{i} \) for all \( 1 \leq i \leq k \).

    The request sequence starts with a general request on \( v_{k+1} \) and proceeds to pose general requests at the currently unoccupied location.
    Since the online algorithm is deterministic, there is a permutation \( \pi \) of \( 1,\dots, k \) such that the algorithm moves its servers for the first time in the order given by \( \pi \), i.e., the first time \( a_{\pi(i)+1} \) moves, \( a_{\pi(i)} \) has already moved once.
    Rename each \( i \) to \( \pi(i) \).
    Now, the algorithm moves its servers in the order given by \( 1, \dots, k \), i.e., the first time \( a_{i+1} \) moves, \( a_{i} \) has already moved once.
    Now, we build the sequence as described above, but we do not pose any request for \( v_{k} \).
    The online algorithm will move \( a_{k} \) the last, while the optimal solution will only move \( o_{k} \).
    Note how we can assume that the algorithm moves each server eventually because else, \( a_{k} \) is never moved.
    Then, the competitive ratio is unbounded, because the algorithm never converges towards the optimal locations without using \( a_{k} \).
    Separate the sequence into phases:
    We say phase \( i \) starts when \( a_{i} \) is moved for the first time.
    During each phase \( i \) only the servers \( a_{1}, \dots, a_{i} \) move and the only locations that can become unoccupied are \( v_{1},\dots, v_{i} \) and \( v_{k+1} \).
    The last phase ends when the online algorithm occupies \( v_{1},\dots,v_{k-1} \) and \( v_{k+1} \), i.e., the same locations as the optimal solution.
    Until then, by the definition of the phases, the online algorithm must have moved each server at least once.
    For any server \( a_{i} \), for \( i \leq k-1 \), at the end of the last phase it holds that either \( a_{i} \) was moved at least twice or \( a_{i} \) is not located at \( v_{i} \).

    Next, we issue a specific request for each of \( a_{1}, \dots, a_{k-1} \) at their initial positions.
    With the exception of \( a_{k} \), each server \( a_{i} \) that is not at \( v_{i} \) must be moved there.
    Hence, every server except \( a_{k} \) moved twice, and \( a_{k} \) moved once yielding a total cost of \( 2k-1 \).
    The optimal solution answers the total sequence by moving \( o_{k} \) to \( v_{k+1} \) for a cost of \( 1 \).
    Now, we are in the same configuration as in the beginning by renaming the locations.
\end{proof}

For a depiction of the lower bound sequence of \Cref{th:general-lower-bound}, consider Figure~\ref{figure:lower-bound-sequence}.

\begin{minipage}[c]{0.46\textwidth}
    \centering
    \includegraphics[page=4, width=\textwidth, clip=true, trim= 0cm 8cm 18cm 0cm]{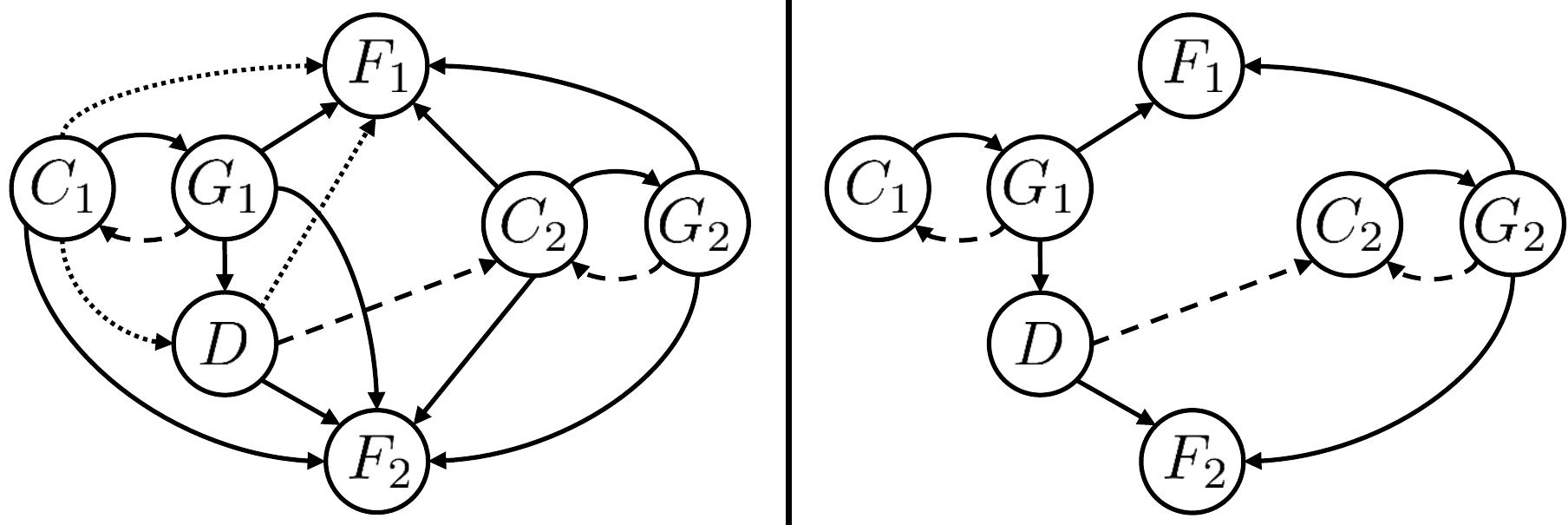}
\end{minipage}
\begin{minipage}[c]{0.02\textwidth}
    \hfill
\end{minipage}
\begin{minipage}[c]{0.46\textwidth}
    \centering
    \captionof{figure}{The lower bound sequence of \Cref{th:general-lower-bound} is divided into two phases.
    In the first one, always the currently unoccupied node (except for \( v_{k} \)) is requested.
    Eventually, the online algorithm covers all nodes except \( v_{k} \).
    Then the second phase forces every server except for \( a_{k} \) back to its initial position using specific requests.
    The optimal solution solves the entire sequence by moving \( o_{k} \) to \( v_{k+1} \).}
    \label{figure:lower-bound-sequence}
\end{minipage}
\noindent\\

The bound demonstrates that an algorithm not only has to determine the optimal positions of the servers but also the precise mapping of specific servers to positions.
Therefore, the lower bound for mixed inputs is by \( k-1 \) higher than the one for pure general inputs.

We remark that a lower bound to our model was also presented in \cite{haneyAlgorithmsNetworksUncertainty2019}.
Our lower bound is similar in spirit but different in structure.
Its structure allows us to extend it for classes of algorithms presented in \Cref{sec:defensive-confident}.

Next, we show an adaptive lower bound that depends on the share \( s \) of relevant specific requests to complement the bound above.
For this, we need the lemma below.

\begin{restatable}{lem}{helpfulLemmaRationalNumber}\label{le:helpful-lemma-rational-numbers}
    Let \( x \in \mathbb{Q} \) be a rational number greater than one that is not a natural number.
    Then there are natural numbers \( a,b \in \mathbb{N} \) with \( a,b > 0 \) such that \( x = \frac{a \cdot \floor{x} + b \cdot \ceil{x}}{a+b} \).
\end{restatable}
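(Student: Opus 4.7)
The statement simply asks to write $x$ as a rational convex combination of its two nearest integers, so the proof should reduce to exhibiting the right weights. The plan is to use the fact that, since $x$ is not a natural number, $\lceil x \rceil = \lfloor x \rfloor + 1$, and then to rewrite the target equation in terms of the fractional part of $x$.

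\textbf{Key steps.} First I would expand the right-hand side: with $\lceil x \rceil = \lfloor x \rfloor + 1$, the expression $\frac{a\lfloor x \rfloor + b \lceil x \rceil}{a+b}$ collapses to $\lfloor x \rfloor + \frac{b}{a+b}$. So the equation $x = \frac{a \lfloor x \rfloor + b \lceil x \rceil}{a+b}$ is equivalent to $x - \lfloor x \rfloor = \frac{b}{a+b}$, i.e.\ the fractional part of $x$ must equal $\frac{b}{a+b}$. Second, since $x \in \mathbb{Q}$, write the fractional part in lowest terms as $x - \lfloor x \rfloor = p/q$ with integers $0 < p < q$ (here $p > 0$ because $x$ is not a natural number, and $p < q$ because the fractional part lies in $(0,1)$). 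Third, set $b := p$ and $a := q - p$; both are positive natural numbers by the bounds on $p,q$. A direct substitution then verifies $\frac{a \lfloor x \rfloor + b \lceil x \rceil}{a+b} = \frac{(q-p)\lfloor x \rfloor + p(\lfloor x \rfloor + 1)}{q} = \lfloor x \rfloor + \frac{p}{q} = x$.

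\textbf{Main obstacle.} There is essentially no obstacle: the whole argument rests on the observation that $\lceil x \rceil - \lfloor x \rfloor = 1$ whenever $x$ is not an integer, which reduces the claim to the trivial statement that every rational number in $(0,1)$ can be written with a positive numerator smaller than its positive denominator. The only care required is to invoke the hypothesis that $x$ is not a natural number in order to ensure $p > 0$ (so that $b > 0$), and the hypothesis $x > 1$ only to make $\lfloor x \rfloor \geq 1$ -- though in fact positivity of $a$ and $b$ does not require the latter, so the lemma could even be stated slightly more generally.
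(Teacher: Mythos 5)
Your proof is correct and follows essentially the same route as the paper: both reduce the claim via \( \ceil{x} = \floor{x} + 1 \) to expressing the fractional part of \( x \) as \( \frac{b}{a+b} \) with positive integers, the paper doing so by solving for \( a = b(\frac{1}{\varepsilon}-1) \) and writing that rational as a fraction, while you directly take the fractional part as \( \frac{p}{q} \) and set \( b=p \), \( a=q-p \). Your parametrization is marginally more direct but the underlying idea is identical.
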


\begin{proof}
    Since \( x \) is not natural but rational, it holds that \( \ceil{x} = \floor{x} + 1 \) and \( x = \floor{x} + \varepsilon \) for \( \varepsilon \in \mathbb{Q} \) and \( 0 < \varepsilon < 1 \).
    Set for any two numbers \( a,b \):
    \begin{align*}
        \floor{x} + \varepsilon = x = \frac{a \cdot \floor{x} + b \cdot \ceil{x}}{a + b} = \frac{a \cdot \floor{x} + b \cdot (1 + \floor{x})}{a + b}.
    \end{align*}
    Simplifying and solving for \( a \) yields \( a = b \left(\frac{1}{\varepsilon} - 1 \right) \).
    Since \( \varepsilon \in \mathbb{Q} \) and \( \varepsilon < 1 \), \( \frac{1}{\varepsilon} \) is rational and greater than one such that \( \left(\frac{1}{\varepsilon}-1\right) \) is rational and greater than zero.
    Thus, it can be expressed as a fraction \( \frac{c}{d} \) of two natural numbers \( c \) and \( d \) greater than zero.
    Set \( b = d \).
    Then \( a = c \) and both \( a \) and \( b \) are natural numbers greater than zero which shows the lemma.
\end{proof}

\begin{restatable}{thm}{AdaptiveLowerBound}\label{th:adaptive-lower-bound}
    Consider any deterministic online algorithm for the \( k \)-Server with Preferences Problem.
    Let \( g, f \) be the number of general/specific requests that require the algorithm to move.
    Let \( s = \nicefrac{f\,}{\,g + f} \) be the share of specific requests on the total number of requests that require a movement by the algorithm.
    Already in a uniform metric space with \( k+1 \) locations it holds:
    The algorithm has a competitive ratio of at least \( (1 + \frac{s}{1-s})\,k \) if \( s \leq \frac{k-1}{2k-1} \), a competitive ratio of at least \( 2k - 1 \) if \( \frac{k-1}{2k-1} < s < \frac{k}{2k-1} \), and a competitive ratio of at least \( \frac{1}{2\,s - 1} \) if \( s \geq \frac{k}{2k-1} \).
\end{restatable}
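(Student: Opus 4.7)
The plan is to construct, for every rational share $s$ in each of the three regimes, an adversarial input sequence whose algorithm cost is at least the claimed multiple of OPT's cost. Each such sequence will be a concatenation of copies of two \emph{chainable} block sequences, each of which returns the algorithm and OPT to the same configuration (up to a renaming of locations), so that arbitrary chains are possible.

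I work with four blocks. Block~A is a pure general $k$-Server lower-bound sub-sequence (algorithm cost $k$, OPT cost $1$, share $0$). Block~B is the sub-sequence from the proof of \Cref{th:general-lower-bound}: $k$ general and $k-1$ specific requests, algorithm cost $2k-1$, OPT cost $1$, share $(k-1)/(2k-1)$. Block~C is the dual of~B, obtained by terminating the general phase one request early after $k-1$ first-time server moves and then issuing $k$ specific requests---one for each $a_i$ at $v_i$ for $i<k$ and, in addition, one for $a_k$ at $v_{k+1}$---which by the same renaming argument gives algorithm cost $2k-1$, OPT cost $1$, and share $k/(2k-1)$. Block~D is a neutral pair of specific requests that moves some server away and then back (algorithm and OPT cost $2$ each, share $1$, ratio $1$).

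Each regime is handled by combining two adjacent blocks. For regime~1, take $a$ copies of Block~A and $b$ copies of Block~B; a direct computation gives combined algorithm cost $ak+b(2k-1)$, OPT cost $a+b$, and share $b(k-1)/(ak+b(2k-1))$, which together satisfy the identity $\text{ratio} = k/(1-s)$. For regime~2, take $a$ copies of Block~B and $b$ copies of Block~C: the ratio is constantly $2k-1$, while the share $\frac{(k-1)(a+b)+b}{(2k-1)(a+b)}$ sweeps the open interval $((k-1)/(2k-1),\, k/(2k-1))$ as $b/(a+b)$ ranges over $(0,1)$. For regime~3, take $a$ copies of Block~C and $b$ copies of Block~D; the ratio is $(a(2k-1)+2b)/(a+2b)$ and the share is $(ak+2b)/(a(2k-1)+2b)$, and a direct substitution yields $\text{ratio} = 1/(2s-1)$.

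Given any rational target $s$, the resulting linear equation has a positive rational solution $a/b$, and \Cref{le:helpful-lemma-rational-numbers} provides the required positive integer weights---the lemma is exactly what certifies that a non-integer rational target can be written as a positive-integer convex combination of its two integer endpoints. The main technical obstacle is the construction of Block~C: one has to verify that terminating the adaptive renaming argument from \Cref{th:general-lower-bound} after only $k-1$ general requests and then appending the extra specific request for $a_k$ at $v_{k+1}$ still forces the algorithm to pay exactly $2k-1$ while OPT pays only $1$---the latter because $o_k$ already occupies $v_{k+1}$ after the very first general request, so every subsequent specific request is free for OPT.
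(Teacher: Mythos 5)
Your arithmetic identities for the three regimes are correct, and your middle-regime step (mixing two boundary sequences with the integer weights supplied by \Cref{le:helpful-lemma-rational-numbers}) is exactly what the paper does. But the block decomposition has two genuine gaps, both stemming from the fact that your blocks do not \emph{enforce} the per-block cost/share triples against an adversarial algorithm. First, Block~A cannot satisfy your own chaining requirement: a pure general sub-sequence can force the algorithm to cover the same \emph{locations} as the cost-$1$ optimum, but it cannot force the \emph{identities} to match, and no general request can ever distinguish a permutation of the algorithm's servers over those locations (this is the central theme of the paper, cf.\ \Cref{th:general-lower-bound}, where the second, specific phase exists precisely to force the matching at a price of $k-1$). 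After a mismatched A-block, Block~B's guarantee ``cost $\geq 2k-1$, OPT $=1$, exactly $k-1$ specific moves'' is no longer justified, since that accounting uses that each $a_i$ starts co-located with $o_i$. The paper's regime-1 construction avoids a pure-general block altogether: it dilutes the share \emph{inside a single phase} by interleaving $x$ bounce pairs (general request, then immediately a specific request at the location the moved server just vacated), which keeps the algorithm matched with OPT throughout the controlled part.

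Second, and more damaging, your end-of-block specific requests (``each $a_i$ at $v_i$'') need not require a movement, so the realized share is not pinned down. During the general phase of a B- or C-block the algorithm may move a server away and straight back to $v_i$; this costs the same as the intended play but converts one specific-moving request into general-moving requests, so the algorithm can shift the realized $s$ downward for free. Concretely, for $k=3$ and target $s=\nicefrac{3}{4}$ (two C-blocks, three D-blocks), an algorithm that bounces $a_1$ in each C-block still pays $5$ per C-block with OPT $=1$, but realizes $g=3$, $f=2$ per block; over the whole sequence this gives ratio $2$ at realized share $\nicefrac{10}{16}=0.625\geq \nicefrac{k}{2k-1}$, whereas the claimed bound at that share is $1/(2\cdot 0.625-1)=4$. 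So your input family does not certify the bound at the share it actually realizes, and re-targeting that share runs into the same evasion (or into the Block-A problem in regime~1). The paper's constructions are immune to exactly this: every specific request is issued immediately after the corresponding server's move, at the just-vacated (hence uncovered) location, and the phase-two derangement plus the final request for $a_k$ are likewise guaranteed to force moves, so $g$ and $f$ are fixed independently of the algorithm's choices (together with the explicit laziness normalization, which you also omit but which is needed to make $s$ well behaved). The missing idea is therefore not the mixing lemma or the block arithmetic, but an adversary whose requests \emph{provably} all require movements so that the share cannot be deflected.
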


\begin{proof}
    Consider the uniform metric with locations \( v_{1}, \dots, v_{k+1} \).
    Let \( a_{1}, \dots, a_{k} \) be the servers of an online algorithm and \( o_{1}, \dots, o_{k} \) be the servers of an optimal solution.
    Assume that initially, \( p(a_{i}) = p(o_{i}) = v_{i} \) for all \( 1 \leq i \leq k \).
    We further assume that the algorithm moves \emph{lazy}, i.e., it only moves a server if there is an unserved request at the destination.
    As it was pointed out in \cite{koutsoupias_k-server_2009}, every algorithm can be turned \emph{lazy} without disadvantage.
    This assumption allows us to reasonably capture \( s \) since it depends on the requests requiring a movement of the algorithm.

    Assume \( s \leq \frac{k-1}{2k-1} < \frac{1}{2} \).
    As in the proof \Cref{th:general-lower-bound}, assume that the \( i \)'s are renamed such that the algorithm starts to move its \( a_{i} \) in the order of the indices during the phase below.
    The sequence is a modification of the first phase of the proof of \Cref{th:general-lower-bound}:
    First, issue a general request on \( v_{k+1} \).
    For \( x < k \) steps do the following (\( x \) is determined later):
    After the algorithm moved a server \( a_{i} \) to \( v_{k+1} \), issue a specific request for \( a_{i} \) at \( v_{i} \) (which is unoccupied).
    Issue another general request on \( v_{k+1} \) (which is unoccupied again).
    After \( x \) many steps, always pose a general request on the currently unoccupied position until the algorithm covers all locations except \( v_{k} \).
    It then holds that \( x \) many servers have been moved twice (to \( v_{k+1} \) and back to their initial position) while the remaining \( k - x \) servers have been moved at least once (similar to the proof of \Cref{th:general-lower-bound}, else, the algorithm has unbounded competitive ratio).
    Observe that all requests are to unoccupied locations such that all of them require a movement while the number of specific requests is \( x \) while the total number of requests is \( k + x \).
    Overall, the cost of the algorithm is at least \( k + x \).
    The optimal solution solves the entire sequence by moving \( o_{k} \) to \( v_{k+1} \) for a cost of \( 1 \) and the competitive ratio is lower bounded by \( k + x \).
    Observe that \( x \) must be a natural number.
    If \( \frac{s}{1-s} \, k \) is a natural number, we can simply set \( x = \frac{s}{1-s} \, k \) such that \( \frac{x}{k + x} = s \) and the competitive ratio is at least \( (1 + \frac{s}{1-s}) \, k \).
    Else, observe that the final configuration is equivalent to the initial one in the input sequence above.
    Hence, the sequence can be repeated arbitrarily often.
    By \Cref{le:helpful-lemma-rational-numbers} we know that there are natural numbers \( a,b \) greater than zero such that \( \frac{s}{1-s} \, k = \frac{a \floor{\frac{s}{1-s} \, k} + b \ceil{\frac{s}{1-s} \, k}}{ a + b } \).
    For \( a \) phases, set \( x = \floor{\frac{s}{1-s} \, k} \) and for \( b \) phases, use \( x = \ceil{\frac{s}{1-s} \, k} \).
    Then the ratio between the number of specific requests and the total number of requests is:
    \begin{align*}
        \frac{a \floor{\frac{s}{1-s} \, k} + b \ceil{\frac{s}{1-s} \, k}}{a \floor{\frac{s}{1-s} \, k} + b \ceil{\frac{s}{1-s} \, k} + (a+b)\, k}
        = \frac{(a+b) \, \frac{s}{1-s} \, k}{ (a+b) \, \frac{s}{1-s} \, k + (a+b)\,k}
        = \ldots
        = s.
    \end{align*}
    The competitive ratio is at least:
    \begin{align*}
        \frac{a \left(k + \floor{\frac{s}{1-s} k}\right) + b \left(k + \ceil{\frac{s}{1-s} k}\right)}{a + b}
        = \frac{a k + b k + a \floor{\frac{s}{1-s} k} + b \ceil{\frac{s}{1-s} k}}{a + b}
        = \left(1 + \frac{s}{1-s} \right) k.
    \end{align*}
    Observe for any case that \( s \leq \frac{k-1}{2k-1} \) ensures that:
    \begin{align*}
        \frac{s}{1-s} \, k
        \leq \ceil{\frac{s}{1-s} \, k}
        \leq \ceil{\frac{\frac{k-1}{2k-1}\cdot k}{1-\frac{k-1}{2k-1}}}
        \leq \ldots
        \leq \ceil{k-1}
        \leq k-1.
    \end{align*}
    Therefore, truly in each sequence \( x < k \).
    Additionally, observe that for \( s = \frac{k-1}{2k-1} \) the competitive ratio is \( 2k-1 \).

    Next, consider \( s \geq \frac{k}{2k-1} > \frac{1}{2} \).
    As in the proof \Cref{th:general-lower-bound}, assume that the \( i \)'s are renamed such that the algorithm starts to move its \( a_{i} \) in the order of the indices during the sequence below.
    Similar to before, the sequence is a modification of the first phase of the proof of \Cref{th:general-lower-bound}.
    However, in comparison to the case of \( s < \frac{1}{2} \), there are insufficiently many general requests to move every server.
    This is compensated by specific requests.
    In the first phase, do for \( x < k \) steps the following (\( x \) is determined later):
    Issue a general request on \( v_{k+1} \).
    Assume the algorithm moves server \( a_{i} \).
    Issue a specific request for \( a_{i} \) on \( v_{i} \).
    Since the algorithm moves the servers in order and lazy, it incurs a cost of \( 2 \, x \) for the first phase and moves only servers \( a_{i} \) with \( i \leq x \).
    In phase two, consider \( k - x - 1 \) servers with an index between \( x \) and \( k \) (so, server \( k \) is not considered).
    Find a permutation \( \pi \) for the initial locations of the above servers such that \( v_{i} \neq \pi(v_{i}) \) for all of them.
    Pose a specific request for each of these servers \( i \) at \( \pi(v_{i}) \).
    Finally, phase three is a single specific request for \( k \) at \( v_{k+1} \).
    Observe that all requests require a movement by the algorithm and the number of specific requests is \( k \) while the total number of requests is \( k + x \).
    The total cost of the algorithm is at least \( k + x \).
    The optimal solution only moves the servers of phases two and three and has a cost of at most \( k - x \).
    Simplifying the ratio gives a lower bound of \( \frac{k + x}{k - x} \) on the competitive ratio.
    Again, \( x \) must be a natural number.
    If \( \frac{1-s}{s} \, k \) is a natural number, simply set \( x = \frac{1-s}{s} \, k \) such that \( \frac{k}{k+x} = s \) and the lower bound is \( \frac{1}{2\,s - 1} \).
    Else, observe that the final configuration is equivalent to the initial one in the input sequence above.
    Hence, the sequence can be repeated arbitrarily often.
    By \Cref{le:helpful-lemma-rational-numbers}, we know that there are natural numbers \( a,b \) greater than zero such that \( \frac{1-s}{s} \, k = \frac{a \floor{\frac{1-s}{s} \, k} + b \ceil{\frac{1-s}{s} \, k}}{a+b} \).
    For \( a \) phases, set \( x = \floor{\frac{1-s}{s} \, k} \) and for \( b \) phases, use \( x = \ceil{\frac{1-s}{s} \, k} \).
    Then the ratio between the number of specific requests and the total number of requests is:
    \begin{align*}
        \frac{a k + b k}{a k + b k + a \floor{\frac{1-s}{s} \, k} + b \ceil{\frac{1-s}{s} \, k}}
        = \frac{(a+b)\,k}{ (a+b)\,k + (a+b) \, \frac{1-s}{s} \, k}
        = \ldots
        = s.
    \end{align*}
    The competitive ratio is at least:
    \begin{align*}
        \frac{ak + a \floor{\frac{1-s}{s} \, k} + b k + b \ceil{\frac{1-s}{s} \, k}}{a k - a \floor{\frac{1-s}{s} \, k} + b k - b \ceil{\frac{1-s}{s} \, k}}
        = \frac{(a+b) k + (a+b)\, \frac{1-s}{s} \, k}{(a+b)k - (a+b)\,\frac{1-s}{s} \, k}
        = \ldots
        = \frac{1}{2\,s - 1}.
    \end{align*}
    Observe that in any case \( s \geq \frac{k}{2k-1} \) ensures that:
    \begin{align*}
        \frac{1-s}{s} \, k
        \leq \ceil{\frac{1-s}{s} \, k }
        \leq \ceil{\frac{\ceil{\left(1-\frac{k}{2k-1}\right)\,k}}{\frac{k}{2k-1}}}
        \leq \ldots
        \leq \ceil{k-1}
        \leq k-1.
    \end{align*}
    Therefore, truly in each sequence \( x < k \).
    Additionally, observe that for \( s = \frac{k}{2k-1} \) the competitive ratio is \( 2k-1 \).

    Consider the case of \( \frac{k-1}{2k-1} < s < \frac{k}{2k-1} \) where \( s \approx \frac{1}{2} \).
    First observe that the above bounds on \( s \) imply that (1) \(  k - 1 < s (2k-1) < k \).
    As observed in the first part of the proof, for \( s = \frac{k-1}{2k-1} \) the first input sequence gives a competitive ratio of at least \( 2k-1 \).
    Also, for \( s = \frac{k}{2k-1} \), the second input sequence yields a competitive ratio of at least \( 2k-1 \).
    In both cases, the configuration after the sequence is the same as the initial one and we can repeat each sequence arbitrarily often.
    By (1) \( s  (2k-1) \) is rational but not natural and, therefore, by \Cref{le:helpful-lemma-rational-numbers}, there are natural numbers \( a,b \) greater than zero such that \( s  (2k-1) = \frac{a\floor{s  (2k-1)} + b\ceil{s  (2k-1)}}{a+b} \).
    Observe that \( \floor{s (2k-1)} = k-1 \) and \( \ceil{s (2k-1)} = k \).
    Repeat \( a \) times the sequence for \( s = \frac{k-1}{2k-1} \) and \( b \) times the sequence for \( s = \frac{k}{2k-1} \).
    Then, the ratio between the number of specific requests and the total number of requests is
    \begin{align*}
        \frac{a\frac{k-1}{2k-1} + b\frac{k}{2k-1}}{a+b}
        &= \frac{a\frac{\floor{s (2k-1)}}{2k-1} + b\frac{\ceil{s (2k-1)}}{2k-1}}{a+b}
        = \frac{a\floor{s (2k-1)} + b\ceil{s (2k-1)}}{a+b}\cdot\frac{1}{2k-1}
        \\
        &= \frac{s (2k-1)}{2k-1}
        = s.
    \end{align*}
    Finally, the competitive ratio is \( 2k-1 \).
\end{proof}

\section{Defensive and Confident Algorithms}\label{sec:defensive-confident}

Next, we identify a key behavior of an algorithm for our setting.
For this, consider a single server \( j \) of our algorithm.
Initially, we know that \( j \) is at the same position as it is in the optimal solution.
Additionally, after every specific request \( r \) for \( j \), we know that \( j \) has to be located at \( r \) in any solution.

Consider now the following scenario:
\( j \) is located at \( p^{*}(j) \) and moves due to some general request to the location \( p \).
If now a general request on \( p^{*}(j) \) appears, the online algorithm could have made a mistake by moving \( j \) from \( p^{*}(j) \).
Maybe the optimal solution keeps \( j \) on \( p^{*}(j) \) and moved some other server to \( p \).
If that is the case, the algorithm should move \( j \) back to \( p^{*}(j) \) because else, it moves some other server \( s \) to \( p^{*}(j) \) which is definitively a mistake because the optimal solution did not have to move \( s \) to \( p^{*}(j) \).
On the other hand, if the optimal solution did not keep \( j \) on \( p^{*}(j) \), moving \( j \) back to \( p^{*}(j) \) itself is a mistake and the algorithm should place some other server on \( p^{*}(j) \).
Whether moving \( j \) back to \( p^{*}(j) \) is a good decision is only (if ever) revealed when the next specific request for \( j \) appears.
Then the algorithm can determine if \( j \) must have been moved by the optimal solution.

In other words, in such a situation, the algorithm has to decide the movement of \( j \) based on a guess of what the optimal solution does.
Either the algorithm assumes that \( j \) is kept on \( p^{*}(j) \) in the optimal solution, or it does not.
If the algorithm decides to move \( j \) back to \( p^{*}(j) \), we say it \emph{acts defensively} for \( j \), because it assumes it made a mistake by moving \( j \) away earlier.
Else, it acts \emph{confidently} because it assumes it did not make a mistake.
We will see that whether to act defensively is a critical decision that significantly influences the competitive ratio an algorithm can achieve.
Next, we categorize all online algorithms into two classes based on whether they act defensively (\Cref{def:defensive-confident}).

\begin{definition}[Defensive/Confident]\label{def:defensive-confident}
    An algorithm acts defensively for \( j \), if, when a general request appears on \( p^{*}(j) \) that requires a movement, it moves \( j \) to the request.
    A deterministic algorithm is \( \ell \)-\emph{defensive}, if for \( \ell \) servers, it \emph{always} acts defensively.
    An algorithm is \( \ell \)-\emph{confident}, if for \( \ell \) servers, it does not always act defensively.
    An algorithm is \emph{strictly}-\( \ell \)-\emph{confident}, if for \( \ell \) servers, it \emph{never} acts defensively if it can be avoided.
\end{definition}

In the case that multiple servers share the same \( p^{*} \), we only require one of them to move in case of a general request.
Still, all act defensively, as only one must move.

Observe that strictly-\( \ell \)-confident algorithms are also \( \ell \)-confident.
Most algorithms that do not explicitly consider the case of acting defensively are \( \ell \)-confident, e.g., the example algorithm in \Cref{sec:simple-algorithm:unbounded-competitive}.
By chance, they could act defensively for a server, but this is not ensured, for example, simply because any general request is treated the same.
We introduce the notion of strictly-\( \ell \)-confident because it allows us to show a significant drawback when not acting defensively at all for some servers:

\begin{restatable}{thm}{LowerBoundConfidentAlgorithms}\label{th:lower-bounds:confident-algorithms}
    No strictly-\( \ell \)-confident algorithm can achieve a competitive ratio better than \( 2k + \ell - 2 \).
\end{restatable}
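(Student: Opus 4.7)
The plan is to extend the adversarial construction of \Cref{th:general-lower-bound} so that it extracts $\ell-1$ additional movements from any strictly-$\ell$-confident algorithm, while keeping OPT's cost at $1$. I would begin with the same phase~1: a general request at $v_{k+1}$ followed by repeated general requests at the currently unoccupied location (never at $v_k$). The renaming trick guarantees that the algorithm moves its $k$ servers exactly once each in some order $a_1,\dots,a_k$, for a total phase cost of $k$, while OPT pays $1$ by moving $o_k$ to $v_{k+1}$. After phase~1, the algorithm occupies positions $\{v_{k+1},v_1,v_2,\dots,v_{k-1}\}$ in the cyclic pattern $a_1 @ v_{k+1}$, $a_{i} @ v_{i-1}$ for $i\geq 2$. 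By pigeonhole, at least $\ell-1$ of the first $k-1$ movers $a_1,\dots,a_{k-1}$ are confident.

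Phase~2 augments the original $k-1$ standard specific returns $a_i @ v_i$ with $\ell-1$ additional requests that do not increase OPT's cost. The first extra is the specific request $a_k @ v_{k+1}$, forcing $a_k$ to $v_{k+1}$ without affecting OPT (since $o_k$ is already there). The remaining $\ell-2$ extras are general requests at $p^{*}(a_i)=v_i$ for $\ell-2$ carefully chosen confident servers $a_i\in\{a_2,\dots,a_{k-1}\}$, each placed in the schedule so that $v_i$ is currently unoccupied by the algorithm. Because $a_i$ is strictly confident, the algorithm cannot serve such a request defensively via $a_i$; it must move another server, contributing one unit to the cost. Each extra request targets a position in OPT's coverage $\{v_1,\dots,v_{k-1},v_{k+1}\}$, so OPT still pays only $1$. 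Summing phase~1 ($k$), the original returns ($k-1$), and the $\ell-1$ extras yields at least $2k+\ell-2$ algorithm movements, giving the claimed competitive ratio.

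The main obstacle is scheduling the $\ell-2$ extra general requests so that each one is genuinely unserved when issued. Initially, the position $v_i$ is covered by $a_{i+1}$ (from the phase~1 cascade), not $a_i$, so a naïve gen-$v_i$ would find $v_i$ occupied and force no move. The fix is to interleave: one must first spec $a_{i+1} @ v_{i+1}$ (which is one of the standard returns) to vacate $v_i$, and only then issue the extra gen-$v_i$. Processing the indices from $i=k-1$ downwards and inserting extras at the right moments gives a valid schedule. Two further subtleties must be checked: that the chosen $\ell-2$ confident servers from $\{a_2,\dots,a_{k-1}\}$ actually exist (guaranteed by the pigeonhole bound) and that at each extra gen the algorithm has at least one non-defensive alternative besides the confident server $a_i$ (which is easy to see, since at most one of the other servers has $p^{*}=v_i$ and that one is $a_i$ itself). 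The construction is then adaptive: whenever the algorithm chooses among multiple non-defensive options, the adversary continues with the prescribed order, and the counting argument above gives the lower bound regardless of the choice.
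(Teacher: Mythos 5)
Your plan has the right spirit (extend the \( 2k-1 \) construction, exploit that a strictly confident server cannot be brought back to its home by general requests, keep \opt{} at cost \( 1 \)), and this is indeed how the paper argues. However, there is a genuine gap in the accounting. You assume that phase~1 costs exactly \( k \), that every server moves exactly once, and that the algorithm ends in the cascade configuration \( a_{1} \) at \( v_{k+1} \), \( a_{i} \) at \( v_{i-1} \). The renaming trick only fixes the order of \emph{first} moves; it does not prevent repeated moves. A strictly-\( \ell \)-confident algorithm is unconstrained on the other \( k-\ell \) servers, and the phase-1 requests at freshly vacated initial positions are precisely general requests at \( p^{*}(a_{i}) \), so those servers may act defensively, return home, and move again. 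Then phase~1 can cost more than \( k \), the end configuration need not be the cascade, and several of your ``standard returns'' \( a_{i} \) at \( v_{i} \) cost \( 0 \) because the server already sits at home. Your flat count \( k + (k-1) + (\ell-1) \) therefore does not hold as stated; to recover \( 2k+\ell-2 \) one needs a per-server accounting (as in the paper, which partitions the servers into those that acted defensively, those that did not, and the \( \ell \) never-defensive ones): servers that acted defensively have already paid at least \( 2 \) in phase~1, the others paid \( 1 \) and are displaced, and the never-defensive servers remain displaced throughout, which is where the extra \( \ell \)-dependent cost comes from. Your pigeonhole remark identifies enough confident servers but does not supply this compensation argument.

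Two further points are not covered by your closing ``the construction is adaptive'' sentence, which asserts rather than proves the robustness of the schedule. First, your initial extra, the specific request for \( a_{k} \) at \( v_{k+1} \), need not cost anything: in a non-cascade configuration \( a_{k} \) may already end phase~1 at \( v_{k+1} \), so your extras can total only \( \ell-2 \). Second, the interleaving that vacates \( v_{i} \) before the extra general request is specified only for the cascade; in general the occupant of a chosen home may be \( a_{k} \) (whose home \( v_{k} \) must never be specifically requested, or \opt{} pays more), the dependency structure is an arbitrary permutation rather than a single cycle, and a non-lazy algorithm may pre-cover \( v_{i} \) before your request so that it ``requires no movement''. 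The paper sidesteps these issues by charging costs to servers rather than to individual requests and by repeatedly forcing the algorithm, via additional general requests, to cover \( v_{1},\dots,v_{k-1},v_{k+1} \) after every specific return, arguing that the never-defensive servers cannot be the ones that refill the vacated positions. Your mechanism (an extra paid general request at each confident server's home before its specific return) is close in spirit, but without the per-server case analysis and a configuration-independent scheduling argument the proof is incomplete.
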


\begin{proof}
    Let \alg{} be the online algorithm.
    We consider a uniform metric with locations \( v_{1}, \dots, v_{k+1} \).
    Initially, the algorithm's servers \( a_{1}, \dots, a_{k}  \) as well as \opt{}'s servers \( o_{1}, \dots, o_{k} \) share the same position \( p(a_{i}) = p(o_{i}) = v_{i} \), for all \( i \leq k \).
    As in the proof of \Cref{th:general-lower-bound}, we rename the \( i \)'s such that during the first phase (defined below), \alg{} moves its \( a_{i} \) in order.

    Next, we construct the adversarial sequence.
    First, issue a general request on \( v_{k+1} \).
    Whenever a server \( a_{i} \) is moved by the algorithm, issue a request on \( v_{i} = p^{*}(a_{i}) \) afterwards.
    Either the algorithm acts defensively on \( a_{i} \) and moves it back to \( v_{i} \), or it does not.
    Since \alg{} only sees general requests and is deterministic, by our renaming of the \( i \)'s above, it moves its \( a_{i} \) in order.
    The first phase ends, when \alg{} covers all the positions \( v_{1}, \dots, v_{k-1}, v_{k+1} \).
    From now, we force the algorithm to cover these positions all the time by using sufficiently many general requests on them.
    Let \( \overline{D} \) be the set of servers for which \alg{} did not act defensively and \( D \) be the other servers.
    Let \( L \subseteq \overline{D} \) be the \( \ell \) servers for which the algorithm never acts defensively.

    Observe that for the servers of \( \overline{D} \), \alg{} incurred a cost of \( 1 \) and for any of these servers \( a_{i} \) it holds \( p(a_{i}) \neq p^{*}(a_{i}) \) and \( p(a_{j}) = p^{*}(a_{i}) \) for some \( a_{j} \in \overline{D} \).
    For all other servers, the algorithm had a cost of at least \( 2 \).

    Next, we show that it takes \alg{} additional cost to sort the servers of \( L \) back to their initial position.
    All these servers are in \( \overline{D} \).
    While there is a server in \( a_{i} \in \overline{D} \setminus \{a_{k}\} \) with \( p(a_{i}) \neq p^{*}(a_{i}) \) do the following:
    First, specifically request \( a_{i} \) at \( p^{*}(a_{i}) \).
    Afterward, using general requests, enforce that \alg{} covers \( v_{1}, \dots, v_{k-1}, v_{k+1} \).
    \alg{} has a cost of \( 1 \) for moving \( a_{i} \).
    Additionally, it has a cost of \( 1 \) to move a server to \( a_{i} \)'s previous position.
    Observe that \( a_{i} \)'s previous position \( p \) was either \( v_{k+1} \) or on an initial position of one server of \( \overline{D} \).
    Thus, by moving a server to \( p \), if the moved server is one of the servers for which the algorithm always acts confidently, it cannot return to its initial position because only general requests on \( p \) appear.
    After any iteration, all servers of \( L \) that were not yet specifically requested are at a position different from their initial one.
    This holds because all those servers never act defensively and, at their positions, only general requests appeared.
    Note that \( a_{k} \) cannot be forced back to its initial position.
    Thus, if (a) \( a_{k} \in L \), the process can be repeated \( \ell - 1 \) many times.
    Else, if (b) \( a_{k} \notin L \), the process can be repeated \( \ell \) many times.
    In total, there is a cost of at least \( 2 \, (\ell - 1) \) in case (a), or \( 2 \ell  \) in case (b) for moving servers of \( L \).

    Next, force every other server except \( v_{k} \) back to its initial position using specific requests.
    This implies that every server except \( a_{k} \) has at least a cost of two.

    Summing up, in the case of (a), there are \( k - \ell \) servers having a cost of \( 2 \) while the servers of \( L \) imply a cost of \( \ell + 2(\ell - 1) \).
    In the case of (b), there are \( k - \ell - 1 \) servers (because \( a_{k} \notin L \)) with a cost of \( 2 \) while the servers of \( L \) have a cost of \( \ell + 2 \ell \).
    Thus, in any case, the total cost is at least \( 2k + \ell - 2 \).
    \opt{} solves the entire sequence by moving \( o_{k} \) to \( v_{k+1} \) at a cost of \( 1 \).
    Note that we end up in a situation analogous to the initial one if we rename the locations.
\end{proof}

For \Cref{th:lower-bounds:confident-algorithms}, the input sequence is as depicted in Figure~\ref{figure:lower-bound-sequence}.
Additionally, the servers of \( L \) have further costs.
To see this, consider Figure~\ref{figure:strictly-confident-additional-cost} below.
\\

\begin{minipage}[c]{0.46\textwidth}
    \centering
    \includegraphics[page=5, width=\textwidth, clip=true, trim= 0cm 11cm 19.5cm 0cm]{figures/figures.pdf}
\end{minipage}
\begin{minipage}[c]{0.02\textwidth}
    \hfill
\end{minipage}
\begin{minipage}[c]{0.46\textwidth}
    \centering
    
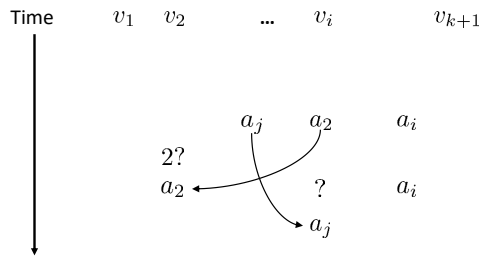
\captionof{figure}{Assume, \( a_{2} \) and \( a_{i} \) are in \( L \).
    Then forcing \( a_{2} \) back to \( v_{2} \) with a specific request incurs cost one and leaves \( v_{i} \) empty.
    A general request on \( v_{i} \) incurs cost one, but since \( a_{i}\in L \), the algorithm cannot move \( a_{i} \) there.
    Therefore, after both requests, \( a_{2} \) is the only server of \( L \) that returned to its initial position.
    As a consequence, each server of \( L \) (except for \( a_{k} \)) can be forced to its initial position with a cost of at least \( 2 \).}
    \label{figure:strictly-confident-additional-cost}
\end{minipage}
\noindent\\

On the other hand, an \( \ell \)-defensive algorithm immediately has a higher upper bound on pure general inputs.

\begin{restatable}{thm}{lowerBoundDefensive}\label{th:lower-bounds:defensive-algorithms}
    No \( \ell \)-defensive algorithm can achieve a competitive ratio better than \( k + \ell - 1 \) on pure general inputs.
\end{restatable}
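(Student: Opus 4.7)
The plan is to adapt the adversarial strategy of \Cref{th:general-lower-bound} to the pure general setting, replacing the specific requests used there by general requests at $p^{*}(j)$ that exploit the forced defensive movements. Fix an arbitrary $\ell$-defensive algorithm \alg{} with defensive set $D$, $|D|=\ell$, and consider the uniform metric with $k+1$ locations $v_{1},\dots,v_{k+1}$ and initial configuration $p(a_{i})=p(o_{i})=v_{i}$. Since \alg{} is deterministic, there is a permutation specifying the order in which \alg{} first moves its servers during the sequence I construct; by renaming indices as in the proof of \Cref{th:general-lower-bound}, I may assume this order is $1,2,\ldots,k$. Set $\ell^{*}:=|D\cap\{1,\dots,k-1\}|$ and observe $\ell^{*}\geq\ell-1$ (with equality iff $k\in D$).

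The adversary then mimics the classical $k$-Server lower bound: first request $v_{k+1}$, and afterwards always request the currently unoccupied location, terminating the phase just before $v_{k}$ would become the next request (equivalently, as soon as \alg{} covers $\{v_{1},\dots,v_{k-1},v_{k+1}\}$). Since no request is ever posed at $v_{k}$, \opt{} answers the entire sequence with a single move of $o_{k}$ to $v_{k+1}$, so $\opt=1$. The remaining task is to show $\alg\geq k+\ell-1$.

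Counting the moves of \alg{} proceeds in two steps. First, the termination event coincides with the first move of $a_{k}$ (the move that vacates $v_{k}$), and the renaming forces $a_{1},\dots,a_{k-1}$ to have first-moved before $a_{k}$, so each of the $k$ servers is moved at least once. Second, for every $i\leq k-1$ with $i\in D$, right before $a_{i}$'s first move the server $a_{i}$ still occupies $v_{i}$ (any earlier request at $v_{i}$ is trivially satisfied while $a_{i}$ sits there), so immediately after $a_{i}$'s first move $v_{i}$ is the unique unoccupied location, and the adversary's very next request is at $v_{i}=p^{*}(a_{i})$. By the defensive property, \alg{} is forced to move $a_{i}$ back to $v_{i}$, giving $a_{i}$ at least two moves. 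Combining the contributions yields $\alg\geq(k-\ell^{*})+2\ell^{*}=k+\ell^{*}\geq k+\ell-1$, so the competitive ratio is at least $k+\ell-1$.

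The main obstacle is securing the invariant that after each defensive $a_{i}$ with $i<k$ makes its first move, the unique unoccupied location is exactly $v_{i}$ (and not some previously vacated position that was re-occupied later). This is controlled by the facts that exactly one location is unoccupied at any time in a uniform metric with $k$ servers on $k+1$ points, that \alg{} moves at most one server per request, and that $a_{i}$ provably still sits at $v_{i}$ up to the moment of its first move. Once this invariant is in place, the defensive bounce-back fires automatically at the next request and the total cost count reduces to the elementary sum above.
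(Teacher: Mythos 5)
Your proposal is correct and takes essentially the same route as the paper's proof: the same uniform metric on \( k+1 \) points, the same renaming of servers by the order of their first moves, the same sequence that keeps requesting the currently unoccupied location while never requesting \( v_{k} \), the same observation that the defensive rule forces every defensive server among \( a_{1},\dots,a_{k-1} \) back to its initial position (hence two moves) while all \( k \) servers move at least once, and the same count \( k+\ell^{*} \geq k+\ell-1 \) against an optimal cost of \( 1 \) (your \( \ell^{*} \) bookkeeping is just a rephrasing of the paper's \( 2(|D|-1)+|\overline{D}|+1 \)). The only small omission is the paper's closing remark that the resulting configuration is equivalent (up to renaming) to the initial one, so the phase can be repeated arbitrarily often, which is what makes the bound immune to additive constants.
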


\begin{proof}
    Let \alg{} be the online algorithm.
    As in the proof of \Cref{th:lower-bounds:confident-algorithms}, consider a uniform metric with positions \( v_{1}, \dots, v_{k+1} \) such that initially, \( p(a_{i}) = p(o_{i}) = v_{i} \) for all servers \( a_{1}, \dots, a_{k}  \) of the algorithm and \( o_{1}, \dots, o_{k} \) of \opt{}.
    As before, assume that the \( i \)'s are renamed such that the algorithm starts to move its \( a_{i} \) in the order of the indices during the phase below.
    Let \( D \) be the set of servers on which \alg{} always acts defensively and \( \overline{D} \) be the other servers.

    The sequence is the first phase of the proof of \Cref{th:lower-bounds:confident-algorithms}:
    First, issue a general request on \( v_{k+1} \).
    Whenever a server \( a_{i} \) is moved by the algorithm, issue a request on \( v_{i} = p^{*}(a_{i}) \) afterwards.
    Either the algorithm acts defensively on \( a_{i} \) and moves it back to \( v_{i} \), or it does not.
    Since \alg{} only sees general requests, is deterministic, and we renamed the \( i \)'s, it moves its \( a_{i} \) in order.
    When \alg{} covers all locations of \( v_{1}, \dots, v_{k-1}, v_{k+1} \), the sequence stops.
    Each server of \( D \) except for possibly \( a_{k} \) was moved by \alg{} twice and each other server at least once.
    Thus, \alg{}'s cost is at least \( 2\, (|D|-1) + |\overline{D}| + 1 \geq k + \ell - 1 \).
    \opt{} solves the entire sequence by moving \( o_{k} \) to \( v_{k+1} \) at a cost of \( 1 \).
    Note that we end up in a situation analogous to the initial one if we rename the locations allowing us to repeat the sequence as many times as wished.
\end{proof}

The proof of \Cref{th:lower-bounds:defensive-algorithms} consists of the first phase of the sequence for \Cref{th:lower-bounds:confident-algorithms}.
For the proof, we use that even without specific requests, the online algorithm moves \( \ell - 1 \) servers back to their initial position and thus twice.
\Cref{th:lower-bounds:confident-algorithms} and \Cref{th:lower-bounds:defensive-algorithms} imply the question: "Can there be an online algorithm that is optimal on pure general inputs and mixed inputs?"
Unfortunately, there cannot be one:

\begin{restatable}{thm}{NoAlgorithmCanBeSuper}\label{th:no-algorithm-can-be-super}
    No deterministic algorithm can achieve a competitive ratio of \( k \) on pure general inputs and \( 2k - 1 \) on mixed inputs.
\end{restatable}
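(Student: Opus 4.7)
The plan is to assume for contradiction that a deterministic algorithm $A$ simultaneously achieves a competitive ratio of $k$ on pure general inputs and $2k-1$ on mixed inputs, and then to extend the adversarial sequence used in the proof of \Cref{th:general-lower-bound} by one further general request to drive $A$'s ratio strictly above $2k-1$.

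Concretely, I would first run the mixed sequence of \Cref{th:general-lower-bound}: phase~1 is the pure general cascade (first requesting $v_{k+1}$ and subsequently the currently unoccupied location, excluding the suitably renamed $v_k$), and phase~2 is the batch of $k-1$ specific requests that resets $a_1,\dots,a_{k-1}$ (in the renamed labeling) to their initial positions. The assumed $(2k-1)$-competitiveness on mixed inputs, combined with the lower bound of at least $2k-1$ from \Cref{th:general-lower-bound}, pins $A$'s cost on this portion to exactly $2k-1$, while OPT pays exactly $1$ (by moving $o_k$ to $v_{k+1}$ once and leaving all other servers untouched). The $k$-competitiveness on pure general additionally caps phase~1's cost at $k$, which together with the lower-bound argument's requirement that every server move at least once forces each server to move exactly once in a cascade, leaving $a_k$ (the last to move in the renamed labeling) at $v_{k-1}$. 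After phase~2 pulls $a_1,\dots,a_{k-1}$ back to $v_1,\dots,v_{k-1}$, $A$'s configuration has $a_k$ stacked with $a_{k-1}$ at $v_{k-1}$ and, crucially, $v_{k+1}$ unoccupied.

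I would then append a single general request at $v_{k+1}$. Since $v_{k+1}$ is uncovered in $A$'s configuration, $A$ must move at least one server there, paying at least $1$. OPT serves the request for free via $o_k$, which has remained at $v_{k+1}$ throughout. The total costs on the extended input become $A$'s cost at least $2k$ and OPT's cost equal to $1$, giving a ratio of at least $2k$, which strictly exceeds $2k-1$ and contradicts the assumed $(2k-1)$-competitiveness on mixed inputs.

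The main obstacle is to justify that the tight equality ``$A$'s cost equals $2k-1$'' on \Cref{th:general-lower-bound}'s sequence really forces $A$ into the cascade configuration with $a_k$ at $v_{k-1}$, even when $A$ behaves ``cleverly'' in the original labeling. This follows because the dynamic renaming in \Cref{th:general-lower-bound}'s proof adapts to $A$'s actual first-move order, so the adversary's rule of requesting the currently unoccupied position (other than the renamed $v_k$) forces $A$ to serve a cascade of $k$ distinct requests regardless of which original server $A$ decided to move first; the $k$-competitiveness on pure general then forbids any bouncing by capping phase~1 cost at $k$, so every phase~1 request is satisfied by a fresh server, uniquely determining each server's final position and in particular leaving $v_{k+1}$ empty at the start of the appended request.
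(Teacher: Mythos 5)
Your proposal is correct and takes essentially the same route as the paper's proof: both pin the algorithm's behaviour on the pure general cascade via the assumed $k$-competitiveness (forcing every server to move exactly once, i.e.\ to act confidently), and then append requests that cost at least $k$ more while OPT pays $1$, yielding a ratio of at least $2k > 2k-1$ and a contradiction. The differences are cosmetic -- the paper closes with $k$ specific requests placing every server at its optimal position, while you close with the $k-1$ specific requests of \Cref{th:general-lower-bound} plus one general request at the vacated $v_{k+1}$ -- and your remark that the prefix cost is pinned to exactly $2k-1$ is not actually needed (the load-bearing step is the cap of $k$ on the first phase, exactly as in the paper).
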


\begin{proof}
    Assume we have such an algorithm.
    As in the proofs of \Cref{th:lower-bounds:confident-algorithms} and \Cref{th:lower-bounds:defensive-algorithms}, consider a uniform metric with positions \( v_{1}, \dots, v_{k+1} \) such that initially, \( p(a_{i}) = p(o_{i}) = v_{i} \) for all servers \( a_{1}, \dots, a_{k}  \) of the algorithm and \( o_{1}, \dots, o_{k} \) of \opt{}.
    Also, as before, assume that the \( i \)'s are renamed such that the algorithm moves its \( a_{i} \) in order in the phase we define below.

    The sequence is as in the proof of \Cref{th:lower-bounds:confident-algorithms}:
    First, issue a general request on \( v_{k+1} \).
    Whenever a server \( a_{i} \) is moved by the algorithm, issue a request on \( v_{i} = p^{*}(a_{i}) \) afterwards.
    \opt{} solves the entire sequence by moving \( o_{k} \) to \( v_{k+1} \) at a cost of \( 1 \).
    Since the algorithm has a competitive ratio of \( k \) on pure general inputs, it acted confidently for all servers.
    Else it would have moved every server at least once and at least one server twice, which implies a cost of at least \( k + 1 \).
    Since the algorithm moved each server in the first phase, it has cost \( k \) for that phase.
    After the first phase, no server is on its initial position and some server \( a_{j} \) with \( j \neq k \) is on \( v_{k+1} \).
    Next, issue a specific request for every server on its position in the optimal solution.
    Then, the algorithm has a cost of at least \( k \) to move every server to its final position.
    The total cost, in this case, is \( 2k > 2k-1 \), and we have a contradiction.
\end{proof}

The proof of \Cref{th:no-algorithm-can-be-super} is based on the input sequences of \Cref{th:lower-bounds:confident-algorithms} and \Cref{th:lower-bounds:defensive-algorithms}.
It uses the fact that any algorithm with a competitive ratio of \( k \) on pure general inputs has to be \( k \)-confident in the first phase.
But then, \( a_{k} \) is not on \( v_{k+1} \).
Thus, even when the algorithm reverts all changes in the second phase and moves \( a_{k} \), it has a cost higher than \( 2k-1 \).

\Cref{th:no-algorithm-can-be-super} still leaves space for algorithms that are close to optimal on all input types.
Due to Theorems \ref{th:lower-bounds:confident-algorithms} and \ref{th:lower-bounds:defensive-algorithms}, such algorithms must act confidently sometimes but not always for a significant number of servers.
Interestingly, most algorithms that treat every general request the same fall into this category, as they might -- by chance -- act defensively.
While our bounds do not yield increased lower bounds for algorithms of the above category, we strongly believe that they suffer similar drawbacks dependent on the number of times they act defensively/confidently.
To see this, observe that the lower bounds are very similar:
The adversarial sequence of \Cref{th:lower-bounds:confident-algorithms} is an extension of the one from \Cref{th:lower-bounds:defensive-algorithms}.
To build the lower bound of \Cref{th:lower-bounds:confident-algorithms} it even suffices if the given algorithm acts confidently for \( \ell \) servers not always but \emph{always in the sequence given by the lower bound}.
As an intuition, to perform well on general inputs, the algorithm should act mostly confidently in the first phase of the lower bound of \Cref{th:lower-bounds:confident-algorithms}.
Then, in the second phase, the algorithm should avoid acting confidently.
Here, we can see that algorithms that treat every general request the same probably perform as badly on mixed inputs as strictly-\( k \)-confident algorithms.
This can also be seen in \Cref{sec:lower-bound-conf}, where we show that for our \( k \)-confident algorithm \textnormal{\textsc{Conf}}, even though it is not strictly-\( k \)-confident, a similar lower bound as \Cref{th:lower-bounds:confident-algorithms} applies.
To conclude, our lower bounds indicate a trade-off between performing well on general/mixed inputs controlled by whether or not and to which degree an online algorithm acts defensively/confidently.

Next, we present two algorithms that incorporate acting defensively/confidently and show bounds on their competitive ratio.
We start by a \( k \)-confident algorithm in \Cref{sec:k-confident-algorithm} followed by a \( k \)-defensive algorithm in \Cref{sec:k-defensive-algorithm}.

\section{A \texorpdfstring{\( k \)}{k}-Confident Algorithm for Uniform Metrics}\label{sec:k-confident-algorithm}

In this section, we present our \( k \)-confident algorithm (see \Cref{def:defensive-confident}) for uniform metrics called \textnormal{\textsc{Conf}}.
We analyze \textnormal{\textsc{Conf}} parameterized in the share of specific requests on the total number of requests.
Note that we only consider requests that require a movement of the algorithm.
All other requests have no relevance to the competitive ratio.
This way, our results (captured in \Cref{th:first-algorithm:competitive-ratio}) not only show how the competitive ratio is bounded for our general model but also for instances of the \( k \)-Server Problem.
For a graphical representation of the theorem, consider \Cref{figure:competitive-ratio-plot}.

\begin{restatable}{thm}{FirstAlgorithmCR}\label{th:first-algorithm:competitive-ratio}
    Let \( g, f \) be the number of general/specific requests that require the algorithm to move.
    Let \( s = \nicefrac{f\,}{\,g + f} \) be the share of specific requests on the total number of requests that require a movement by the algorithm.
    The competitive ratio of \textnormal{\textsc{Conf}} is at most \( \min\{k + \frac{2s}{1-2s}\, k,\, 3k - 2, \, 1 + 2\frac{1-s}{s}\,k \} \) and at most \( \frac{1}{2\, s - 1} \) for \( s > \nicefrac{1}{2} \).
\end{restatable}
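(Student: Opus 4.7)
The plan is to analyze $\textnormal{\textsc{Conf}}$ by a phase-based amortized analysis in the spirit of the classical marking-algorithm proofs for paging, adapted to the presence of specific requests. First I would partition the request sequence into phases so that a phase ends as soon as $\textnormal{\textsc{Conf}}$ has moved each of its $k$ servers at least once inside that phase (counting both general- and specific-triggered movements). A standard marking argument for the uniform metric then implies that within each phase at least $k+1$ pairwise distinct locations are involved, so $\opt$ must pay at least one unit per phase, up to additive terms for the initial and final phase that vanish in the competitive ratio.

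Next I would bound $\textnormal{\textsc{Conf}}$'s cost inside a single phase. Let $g_i$ and $f_i$ denote the number of general respectively specific requests in phase $i$ that require the algorithm to move, and let $\alg_i$ be $\textnormal{\textsc{Conf}}$'s total movement cost in phase $i$. The key structural inequality to establish is
\begin{align*}
\alg_i \;\le\; g_i + f_i + \min\{f_i,\,k-1\}.
\end{align*}
The $g_i + f_i$ accounts for lazily serving each request with a single movement, while the last term captures the overhead of displacing a second server whenever a specific request lands on a location already covered by the algorithm. Because $\textnormal{\textsc{Conf}}$ is $k$-confident and therefore typically refrains from moving a server back to $p^{*}(j)$ on a general request, the cascading overhead coming from incorrect guesses about $\opt$'s configuration stays bounded by the marking cap of $k-1$ per phase. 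Combined with $g_i + f_i \le 2k - 1$ forced by the phase definition, this yields the worst-case per-phase bound $\alg_i \le 3k - 2$, matching the second entry of the minimum.

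To derive the $s$-dependent bounds, I would sum the per-phase inequality across all phases and use $\opt \ge \text{(number of phases)}$. For small $s$, the number of phases is at least roughly $g/k$, so $\opt \ge g/k$; substituting $g = (1-s)(g+f)$, $f = s(g+f)$ into $\alg \le (g+f) + 2f$ yields the first bound $k + \frac{2s}{1-2s}\,k$. For $s$ approaching $\nicefrac{1}{2}$ from below I would run the symmetric charging: each specific request can be matched against at least one movement of $\opt$ (since $\opt$ has to serve the specific request too, modulo amortization over a phase), so the extra algorithmic cost is attributable to the $g_i$ general requests only, giving $1 + 2\frac{1-s}{s}\,k$. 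Finally, for $s > \nicefrac{1}{2}$ I would argue directly: $\textnormal{\textsc{Conf}}$ pays at most $1$ per request that requires a movement, so $\alg \le g + f$, whereas $\opt$ must pay at least $f - g$ because at most $g$ specific requests can be ``pre-absorbed'' by $\opt$ placing a server in advance; this ratio simplifies to $\frac{1}{2s-1}$.

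The main obstacle will be proving the per-phase structural inequality, particularly controlling the cascading effect of interleaved general and specific requests. A specific request for server $j$ may force $j$ to a location currently covered by another server $j'$ that $\textnormal{\textsc{Conf}}$ had previously moved ``confidently'' away from $p^{*}(j')$; later, a general request at $p(j')$ might then trigger another displacement, and so on. I expect the proof to require a potential function that tracks, for each server $j$, whether $p(j) = p^{*}(j)$ together with the number of servers currently placed on positions not occupied by $\opt$; the $k$-confident behavior of $\textnormal{\textsc{Conf}}$ will be used precisely to ensure this potential does not inflate on general requests. Showing that the potential drops enough to cover the actual movement cost while increasing by only $O(1)$ per request and $O(k)$ at phase boundaries is the crux of the analysis.
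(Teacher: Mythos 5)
There is a genuine gap, and it sits at the heart of your plan: the phase decomposition. You define a phase to end once \textsc{Conf} has moved every server at least once, and then claim that "a standard marking argument" shows each phase involves at least \( k+1 \) distinct locations, so \opt{} pays at least \( 1 \) per phase. Neither claim survives the presence of specific requests. A phase in which all \( k \) servers move can involve only \( k \) (or fewer) locations -- e.g.\ a cyclic shift enforced purely by specific requests -- and, worse, \opt{} can pay \emph{nothing} during a stretch in which \textsc{Conf} moves every server: after \textsc{Conf} has drifted to a configuration that covers \opt{}'s positions but with the wrong server identities, alternating specific requests (at the servers' old \( p^{*} \)-positions) and general requests (at the freshly vacated spots) force \textsc{Conf} to pay \( 2(k-1) \) and touch every server while \opt{} sits still; this is exactly the second half of the paper's own \( 3k-2 \) lower-bound sequence for \textsc{Conf} in \Cref{sec:lower-bound-conf}. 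The paper therefore does not use "every server moved" as the phase boundary; it uses the algorithm's own bookkeeping: a phase ends only when either \( |L|+|F| \) would exceed \( k \) (so \opt{} would have to cover more than \( k-|\widehat{F}| \) distinct locations with \( k-|\widehat{F}| \) servers, using that \( L \) and \( p(F) \) are disjoint) or a frozen server is specifically requested at a second location -- both events \emph{provably} force an \opt{} movement (\Cref{le:first-algorithm:opt-cost:1-for-each-phase}). Relatedly, your per-phase cost bound \( \alg_i \le g_i+f_i+\min\{f_i,k-1\} \le 3k-2 \) does not follow from your phase definition: with "phase = all servers moved once," a single server can be ping-ponged by specific requests arbitrarily often before the phase closes, so \( g_i+f_i \) is not bounded by \( 2k-1 \). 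The paper gets the per-phase bound from the freezing mechanism (a specifically requested server never moves again within a phase), which yields \( f^{i}\le|\widehat{F}| \), \( g^{i}\le|\widehat{G}|+|\widehat{F}|+f^{i} \), and \( |\widehat{G}|+|\widehat{F}|\le k \).

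Your \( s \)-dependent bounds are also missing the key structural ingredient. The bounds \( 1+2\frac{1-s}{s}k \) and \( \frac{1}{2s-1} \) rest on splitting the frozen servers into \( F_{1} \) (specifically requested where they were last specifically requested, so \opt{} owes nothing) and \( F_{2} \) (requested elsewhere, so \opt{} must move), together with the inequality \( \sum_i f^{i}_{1}\le\sum_i g^{i} \): a specific request of type \( F_{1} \) costs \textsc{Conf} only if a general request had previously (and "confidently") pulled that server off \( p^{*}(j) \). Your sketch asserts instead that "each specific request can be matched against at least one movement of \opt{}" (false for \( F_{1} \)-type requests) and that "at most \( g \) specific requests can be pre-absorbed by \opt{} placing a server in advance," which is not the right mechanism -- the bound comes from counting \textsc{Conf}'s own general moves, not from \opt{} pre-positioning. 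The final arithmetic you aim for (\( \alg\le g+f \), \( \opt\ge f-g \), ratio \( \frac{1}{2s-1} \)) does match the paper's conclusion, but it only becomes a proof once the \( F_{1}/F_{2} \) distinction and \( f_{1}\le g \) are established within the algorithm-defined phases.
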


\textnormal{\textsc{Conf}} employs ideas of the marking approach \cite[pp.~752-758]{kleinberg_algorithm_2014} and operates in phases.
A phase tries to capture the longest sequence of requests for which the optimal solution does not need to move.
Right after a phase ends, the optimal solution must have a cost of at least \( 1 \).
If the cost of an algorithm in each phase is at most \( c \), it has a competitive ratio of \( c \).
The main difference in our approach, however, is that a server can get unmarked again during a phase.
Also, we do not only differentiate between marked and unmarked servers but distinguish more carefully using set memberships as described below.

\subsection{The Algorithm \textnormal{\textsc{Conf}}}

Next, we describe how \textnormal{\textsc{Conf}} (split into \textsc{Conf-Gen} and \textsc{Conf-Spec}) works.
During the execution, \textnormal{\textsc{Conf}} handles for each phase \( i \) different sets.
We denote that a set belongs to \( i \) by an exponent of \( i \) that is omitted when the phase is clear from the context.

At the beginning of any phase, every server is assigned to a candidate set \( C \) (\textsc{Conf-Gen} Lines 5 -- 6; \textsc{Conf-Spec} Lines 3 -- 4).
During the phase, \textnormal{\textsc{Conf}} handles four sets \( C \), \( G \), \( L \), and \( F \).
We ensure that each server is in exactly one of \( C \), \( G \), or \( F \).
\( L \) stores locations.
More precisely, \( L \) stores the locations where only general requests appeared, and \( G \) stores the servers at such locations.
\( F \), on the other hand, contains all servers that are specifically requested.
Servers of \( F \) can be at the same location while the locations of \( L \) (and thus \( p(G) \)) are distinct and do not overlap with the locations of \( p(F) \).
This distinction is necessary to get a parameterized bound on the competitive ratio in the case where many specific requests appear.
While by definition \( p(G) \subseteq L \), locations in \( L \) can become unoccupied when a server of \( G \) gets specifically requested.

Whenever a general request \( r \) appears, if its location is not in \( p(F) \cup L \), \textnormal{\textsc{Conf}} stores it in \( L \) (\textsc{Conf-Gen} Line 9).
When there is no server of \( G \cup F \) on the requested location already, select a server \( j \in C \) to be the answering server and assign \( j \) to \( G \) (\textsc{Conf-Gen} Line 3, Line 10).
When a specific request \( r \) appears, we observe the following:
The server that is specifically requested must be at the location of \( r \) in the optimal solution.
If we assume that the optimal solution has no cost within the phase, any such server can no longer move after it is specifically requested.
Therefore, \textnormal{\textsc{Conf}} declares server \( j \) as \emph{frozen} and assigns it to \( F \) (\textsc{Conf-Spec} Line 7).
The phase ends when either (1) it can no longer be guaranteed that \( |L| + |F| \leq k \) (\textsc{Conf-Gen} Line 4; \textsc{Conf-Spec} Line 2) or (2) a server \( j \in F \) is specifically requested at a different location (\textsc{Conf-Spec} Line 2).
In case (1), the optimal solution can not cover all locations where requests appeared in the phase with servers, and thus a server must have been moved.
In case (2), the optimal solution must have moved \( j \).

The very first phase is different from all others.
Since we assume that the servers of the online algorithm are at the same locations as the servers of the optimal solution, no movements happen in the first phase.
To reflect this, we set for the first phase \( C^{1}, G^{1}, L^{1} = \emptyset \) and \( F^{1} = K \) (the set of all servers).

\begin{algorithm}[htb]
    \caption*{\textbf{\textnormal{\textsc{Conf-Gen}:} General request \( r \) arrives in phase \( i \)}}
    \begin{algorithmic}[1]
        \If{\( r \notin p(G \cup F) \)}
        \If{\( r \in L \)}
        \State Move some \( j \in C \) to \( r \) and assign it to \( G \)
        \ElsIf{\( |L| + |F| = k \)}
        \State Start the next phase \( i+1 \)
        \State Set \( C^{i+1} \gets K \) and \( G^{i+1}, L^{i+1}, F^{i+1} \gets \emptyset \)
        \State Process \( r \) again for phase \( i+1 \)
        \ElsIf{\( |L| + |F| < k \)}
        \State \( L \gets L \cup \{p(r)\} \)
        \State Move some \( j \in C \) to \( r \) and assign it to \( G \)
        \EndIf
        \EndIf
    \end{algorithmic}
\end{algorithm}

\begin{algorithm}[htb]
    \caption*{\textbf{\textnormal{\textsc{Conf-Spec}:} Specific request \( r \) for server \( j \) arrives in phase \( i \)}}
    \begin{algorithmic}[1]
        \If{\( r \neq p(j) \)}
        \If{\( j \in F \) \textbf{or} \( |L| + |F| = k \)}
        \State Start the next phase \( i + 1 \)
        \State Set \( C^{i+1} \gets K  \) and \( G^{i+1}, L^{i+1}, F^{i+1} \gets \emptyset \)
        \State Process \( r \) again for phase \( i+1 \)
        \ElsIf{\( j \notin F \) and \( |L| + |F| < k \)}
        \State Move \( j \) to \( r \) and assign it to \( F \)
        \If{There is a \( s \neq j \), \( s \notin F \) on \( r \)}
        \State Remove \( p(s) \) from \( L \)
        \State Assign \( s \) to \( C \)
        \EndIf
        \EndIf
        \Else
        \State Assign \( j \) to \( F \)
        \EndIf
    \end{algorithmic}
\end{algorithm}

We remark that the statement to move any server of \( C \) to serve a general request is ambiguous, and any order on the servers of \( C \) will do.
For the sake of precision, assume that the servers are selected using the FIFO (first-in-first-out) rule.

Due to specific requests, \textnormal{\textsc{Conf}} incorporates behaviors that are fundamentally different from the classical \( k \)-Server Problem:
Observe that a specific request removes a location \( x \in L \) when a server becomes frozen there.
When this happens, a server \( j \in G \) can even be assigned to \( C \) again (\textsc{Conf-Spec} Lines 8 -- 10).
From a perspective of a marking algorithm, this means \( j \) becomes unmarked again.
Intuitively, by the specific request, \textnormal{\textsc{Conf}} detects that \( j \) was the wrong server to answer the previous general request on \( x \).
Moreover, a specific request for \( j \) can yield that \( j \)'s previous location of \( G \) becomes \emph{unoccupied}.
To still keep track of it, \textnormal{\textsc{Conf}} stores it in \( L \).
For such a location of \( L \) where no server of \( G \) is, it may be necessary to move another server on it due to a later general request.
One could ensure that all locations of \( L \) are covered the entire time by servers in \( G \), but there is no advantage in that behavior.

\subsection{The Analysis}

For the analysis, we begin by formally showing that the cost of the optimal solution \opt{} is bounded by the number of phases in \Cref{le:first-algorithm:opt-cost:1-for-each-phase}.
This is used in combination with a bound on the cost of \textnormal{\textsc{Conf}} in each phase (\Cref{le:first-algorithm:algs-cost}) to show a worst-case bound (\Cref{le:first-algorithm:worst-case-comp-ratio}) and an adaptive bound (\Cref{le:first-algorithm:adaptive-comp-ratio}) for the competitive ratio.

We denote by \( \widehat{U^{i}} \) the content of the set \( U^{i} \) right after the end of phase \( i \).
For a phase, let \( t_{\text{start}} \) be the time step of the first request and \( t_{\text{end}} \) be the time step of the last request.
\Cref{le:first-algorithm:opt-cost:locations-disjoint} is ensured by the algorithm.

\begin{restatable}{lem}{FirstAlgorithmLocationsDisjoint}\label{le:first-algorithm:opt-cost:locations-disjoint}
    At any point in time, \( L \) and \( p(F) \) are disjoint.
\end{restatable}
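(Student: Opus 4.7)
The plan is straightforward induction on the number of operations that \textnormal{\textsc{Conf}} performs within a phase, restarting the induction at every phase boundary. The base case is immediate: at the start of any phase $i>1$, both $L^i$ and $F^i$ are explicitly reset to $\emptyset$ (\textsc{Conf-Gen} line~6, \textsc{Conf-Spec} line~4), and in the first phase $L^1 = \emptyset$, so $L \cap p(F)=\emptyset$ holds trivially.

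For the inductive step, I would enumerate the lines in \textsc{Conf-Gen} and \textsc{Conf-Spec} that mutate either $L$ or $p(F)$ and check each one separately. $L$ only grows at \textsc{Conf-Gen} line~9, which is guarded by $r \notin p(G\cup F)$; the new element $p(r)$ is therefore outside $p(F)$, and since $p(F)$ is untouched the invariant survives. $L$ only shrinks at \textsc{Conf-Spec} line~9, which can only help. The set $p(F)$ grows in exactly the two situations where a server is newly assigned to $F$: \textsc{Conf-Spec} line~7, when $j$ is moved to $r$ from elsewhere; and \textsc{Conf-Spec} line~12, when $j$ is already at $r$. In the line~7 case, the conditional block on lines~8--10 explicitly removes $r$ from $L$ whenever a non-$F$ server currently sits at $r$, which is precisely the configuration in which $r$ could belong to $L$ via being in $p(G)$. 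In the line~12 case, if $j \in G$ then $p(j) \in L$, and I would treat the simultaneous removal of $j$ from $G$ as implicitly removing $p(j)$ from $L$, tying the bookkeeping to the side invariant $p(G) \subseteq L$ stated informally in the text.

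The main obstacle I foresee is the case in which a location $r$ has entered $L$ via a general request, has then had its $G$-server pulled away by a specific request targeting a different location $r' \neq r$ (leaving $r$ unoccupied but still tracked in $L$), and is later itself the target of a specific request. To handle this cleanly I would maintain the stronger joint invariant $L = p(G) \cup V$, where $V$ collects the locations vacated in the current phase and not yet reoccupied. Vacation itself cannot create an overlap because the freshly frozen server sits at $r' \neq r$, so the new element of $p(F)$ is disjoint from $V$. For a subsequent specific request landing at some $r \in V$ that triggers \textsc{Conf-Spec} line~7, I would appeal to the algorithm's implicit cleanup that removes $r$ from $L$ together with the insertion of $r$ into $p(F)$, mirroring the role of line~9 in the more typical branch; alternatively, one can argue that the bookkeeping which earlier added $r$ to $V$ would have been offset when $|L|+|F|$ reached $k$, so the phase boundary triggers on line~2 of \textsc{Conf-Spec} before the problematic state can arise.
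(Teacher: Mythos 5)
Your base case and your treatment of the easy mutation sites are fine and in substance follow the same route as the paper, which argues the two temporal orders directly: if the frozen server reached \( \ell \) before \( \ell \) could enter \( L \), the guard \( r \notin p(G \cup F) \) prevents \( \ell \) from ever being added to \( L \); if \( \ell \) entered \( L \) first, then the step that freezes a server there is supposed to expel \( \ell \) from \( L \). Your induction is essentially an unrolled version of that case distinction, and you correctly isolate the only delicate configurations: a specific request freezing a server at a location of \( L \) that is currently unoccupied, and a specific request freezing a \( G \)-server in place at its own location.

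However, neither of your two ways of discharging that delicate case works as stated. The fallback claim that the phase boundary in line 2 of \textsc{Conf-Spec} fires first is false: for \( k \ge 3 \), let \( \ell \) enter \( L \), then let its \( G \)-server be specifically requested elsewhere and freeze there, so \( \ell \) is vacated and \( |L| + |F| = 2 < k \); a later specific request for an unused server at \( \ell \) then satisfies the guard \( j \notin F \) and \( |L| + |F| < k \), so line 7 executes with no phase change, and lines 8--10 do not fire because no server stands on \( \ell \). Your primary option, the ``implicit cleanup'', is in fact what the paper's own proof invokes (it simply states that the algorithm ensures \( \ell \) is no longer part of \( L \), consistent with \( L \) being described as the set of locations where \emph{only} general requests appeared), but in your write-up it is an assumption about the algorithm rather than something you derive; to close the induction you need to state it explicitly as part of the maintenance of \( L \): whenever a server is frozen at a location of \( L \) -- whether via line 7 at a vacated location or via line 12 freezing a \( G \)-server in place -- that location leaves \( L \). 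In particular, your justification for the line-12 case (``removing \( j \) from \( G \) implicitly removes \( p(j) \) from \( L \)'') is a non sequitur: \( L \) records locations independently of \( G \) and deliberately retains vacated ones, so changes of \( G \)-membership alone never shrink \( L \).
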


\begin{proof}
    Assume there is a location \( \ell \in L \) where some server \( s \in F \) is.
    If \( s \) was first on \( \ell \), no server of \( C \) would be moved to \( \ell \) and \( \ell \) would not be in \( L \), because \( s \) is able to answer general requests.
    If \( \ell \) became part of \( L \) first, \( s \) moved to \( \ell \) due to some specific request.
    Then the algorithm ensures that \( \ell \) is no longer part of \( L \).
\end{proof}

Now we can bound the cost of an optimal solution.

\begin{restatable}{lem}{FirstAlgorithmOptCostOne}\label{le:first-algorithm:opt-cost:1-for-each-phase}
    Consider any phase but the last and the first request \( r \) right after the phase ends.
    \opt{} has a cost of at least \( 1 \) during the time right after \( t_{\text{start}} \) until right after \( t_{\text{end}} + 1 \).
\end{restatable}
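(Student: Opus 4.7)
I would prove the contrapositive: assume \opt{} incurs cost $0$ in the interval from right after $t_{\text{start}}$ to right after $t_{\text{end}}+1$. Then \opt{}'s configuration is constant throughout this interval, so a single frozen assignment must serve every request $r_{t_{\text{start}}+1},\dots,r_{t_{\text{end}}+1}$: general requests need \emph{some} server at their location, and specific requests pin a particular identity there. I would split on why the phase ended. Reading the pseudocode, the termination reason is exactly one of the following:
(a) $r_{t_{\text{end}}+1}$ is a specific request for some $j\in F$ at a location $\neq p(j)$;
(b) $r_{t_{\text{end}}+1}$ is a general request at a location outside $p(G\cup F)\cup L$ while $|L|+|F|=k$;
(c) $r_{t_{\text{end}}+1}$ is a specific request for some $j\notin F$ at a location $\neq p(j)$ while $|L|+|F|=k$.

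Case (a) is immediate: $j$ was already specifically requested earlier in the phase at $p(j)$, which pinned $j$ in \opt{}'s frozen state at that location, contradicting the new demand for $j$ elsewhere. For cases (b) and (c) I would combine \Cref{le:first-algorithm:opt-cost:locations-disjoint} with the bookkeeping observation that a location may leave $L$ only by simultaneously entering $p(F)$. Hence every general-request location appearing in the phase lies in $L\cup p(F)$, and by disjointness $|L\cup p(F)|=|L|+|p(F)|$. \opt{}'s frozen state must therefore populate all $|L|+|p(F)|$ positions of $L\cup p(F)$ while honouring the $|F|$ identity constraints from the phase's earlier specific requests. In case (b), the new general location $p(r_{t_{\text{end}}+1})$ is disjoint from $L\cup p(F)$ and must also be covered by a server outside $F$, forcing $k-|F|\geq|L|+1$, i.e.\ $k\geq|L|+|F|+1=k+1$, the contradiction. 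The typical sub-case of (c), where $p(r_{t_{\text{end}}+1})\notin L$, pinning the additional identity $j\notin F$ outside $L\cup p(F)$ yields the same count and contradiction.

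\textbf{Main obstacle.} The genuinely delicate sub-case is (c) with $p(r_{t_{\text{end}}+1})\in L$, where the positional count saturates at exactly $k$ rather than exceeding it. Here, forcing $j\notin F$ to sit at the general-only location $p(r_{t_{\text{end}}+1})$ strips one of \opt{}'s $k-|F|$ unpinned servers from the pool that must still cover the remaining $|L|-1$ general-only locations of $L\setminus\{p(r_{t_{\text{end}}+1})\}$, leaving the bare count of positions equal to, but not strictly greater than, $k$. I expect to resolve this sub-case by a finer identity-level argument that additionally exploits what \opt{}'s frozen state is forced to look like right after serving $r_{t_{\text{start}}}$: the initial server identity that $r_{t_{\text{start}}}$ pinned (or the general location it had to cover) interacts with the new identity constraint at $p(r_{t_{\text{end}}+1})$ to push the required count of distinct (identity, position) pairs beyond $k$. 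Nailing down this identity-level bookkeeping is the main technical step; the remaining sub-cases reduce to the positional counting above.
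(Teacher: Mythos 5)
Your overall route is the paper's: assume \opt{} has no movement in the interval, observe that its (then static) configuration must have a server at every location of \( \widehat{L}\cup p(\widehat{F}) \) plus the location of the ending request, with the servers of \( \widehat{F} \) pinned by identity, and count. Your case (a) is the paper's case of a frozen server requested at a second location, and your case (b) together with the easy part of (c) is the paper's case \( |\widehat{L}|+|\widehat{F}|=k \), where \( k-|\widehat{F}| \) unpinned servers would have to occupy \( |\widehat{L}|+1 \) distinct locations. The one divergence is the sub-case you flag as your ``main obstacle'': the paper does not treat it at all --- its count of \( |\widehat{L}|+1 \) locations tacitly assumes the phase-ending request lies outside \( \widehat{L}\cup p(\widehat{F}) \), which \textsc{Conf-Gen} guarantees for a general ending request, but which \textsc{Conf-Spec} does not guarantee when the phase ends because \( j\notin F \) and \( |L|+|F|=k \).

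That deferred step is a genuine gap, and the ``finer identity-level argument'' you announce cannot exist for the algorithm as written in the pseudocode, because in that sub-case the statement itself fails. Take \( k=2 \), the uniform metric on \( v_{1},v_{2},v_{3} \), server \( 1 \) at \( v_{1} \) and server \( 2 \) at \( v_{2} \) for both \textnormal{\textsc{Conf}} and \opt{}. Requests: general at \( v_{3} \) (this opens a fresh phase since initially all servers are frozen, so it is \( t_{\text{start}} \); say \textnormal{\textsc{Conf}} moves server \( 1 \)), then general at \( v_{1} \) (server \( 2 \) moves; now \( L=\{v_{3},v_{1}\} \), \( F=\emptyset \), \( |L|+|F|=k \)), then specific for server \( 2 \) at \( v_{3}\in L \). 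The last request triggers the \( |L|+|F|=k \) branch of \textsc{Conf-Spec}, i.e.\ exactly your sub-case (c) with the requested location in \( L \). But \opt{} can serve the first request by moving server \( 2 \) to \( v_{3} \) at \( t_{\text{start}} \) --- a cost outside the interval, which begins right after \( t_{\text{start}} \) --- and then never move: server \( 1 \) covers \( v_{1} \), server \( 2 \) answers the specific request. Its cost between right after \( t_{\text{start}} \) and right after \( t_{\text{end}}+1 \) is \( 0 \); the identity constraint you hoped to exploit is satisfiable inside the saturated count of \( k \) positions. (If the tie-break makes \textnormal{\textsc{Conf}} move server \( 2 \) first, request server \( 1 \) at \( v_{3} \) instead and let \opt{} pre-position both servers at \( t_{\text{start}} \).) So your diagnosis of where the difficulty sits is accurate --- more careful, in fact, than the paper's own proof --- but the plan cannot be completed as announced: the sub-case must be made impossible rather than argued away, e.g.\ by reading the phase-end rule as the surrounding prose suggests (end the phase only when serving the request would genuinely force \( |L|+|F|>k \), which is not the case when a non-frozen server is specifically requested at a location of \( L \) occupied by a server outside \( F \)); under the literal pseudocode, the lemma's conclusion is simply not available there.
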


\begin{proof}
    Consider any phase that ends.
    At the end of the phase it holds either that (i) \( |\widehat{L}| + |\widehat{F}| = k \) or (ii) a server \( j \in F \) is specifically requested at a location different to \( p^{*}(j) \).
    Assume \opt{} has had no movement.
    Then, during the time interval, \opt{} has its servers at least at the locations \( L \cup p(\widehat{F}) \cup \{r\} \) since at each of these locations a request appeared (right after \( t_{\text{start}} \), \opt{} has a server on the point of the first request).
    Also, \opt{} must have the servers of \( F \) at the identical locations as \textnormal{\textsc{Conf}}.

    In the case of (i), \opt{} covers due to \Cref{le:first-algorithm:opt-cost:locations-disjoint} \( |\widehat{L}| + 1 > k - |\widehat{F}| \) distinct locations using \( k - |\widehat{F}| \) servers which cannot be.
    In case of (ii), \( j \) is specifically requested at two different locations meaning that \opt{} must have placed \( j \) at two different locations.
    In any case, there is a contradiction and, thus, \opt{} has cost at least \( 1 \).
\end{proof}

Next, we show how the cost of the algorithm for a phase is bounded.
For technical reasons, we restate the cost of the algorithm in a phase as follows.

\begin{restatable}{obs}{FirstAlgorithmsCost}\label{le:first-algorithm:algs-cost}
    Consider any phase \( i > 1 \).
    Let \( g^{i} \) be the number of general requests during the phase that require a movement of the algorithm.
    Let \( f^{i} \) be the number of specific requests during the phase that require a movement of the algorithm.
    The cost of \textnormal{\textsc{Conf}} in the phase is at most \( g^{i} + f^{i} \).
\end{restatable}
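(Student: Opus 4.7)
The plan is to derive this observation directly by inspection of the two pseudocode routines, exploiting the fact that the metric is uniform so every actual movement contributes exactly cost $1$. Thus it suffices to show that within any fixed phase $i$, the algorithm performs at most one server movement per request, and moreover every movement it performs is charged to a request counted in $g^{i}$ or $f^{i}$.

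First, I would go through \textsc{Conf-Gen} case by case. If $r \in p(G \cup F)$ the outer condition fails and no movement occurs, so this request contributes $0$ to both the cost and to $g^{i}$. If $r \notin p(G \cup F)$ but $r \in L$, the algorithm executes a single ``Move some $j \in C$ to $r$'' statement, giving cost $1$ and contributing $1$ to $g^{i}$. The branch $|L| + |F| = k$ starts a new phase and reprocesses $r$ inside phase $i+1$, so any resulting movement (and its contribution to the general request count) is charged to phase $i+1$, not $i$. The final branch $|L| + |F| < k$ again performs a single move of cost $1$ that is counted in $g^{i}$.

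Next, I would do the analogous case analysis for \textsc{Conf-Spec}. If $r = p(j)$, only a set reassignment happens, with no movement and no contribution to $f^{i}$. Otherwise, either a new phase is triggered (in which case, as before, any movement is attributed to phase $i+1$), or the algorithm executes the single move of $j$ to $r$; the additional set operations in Lines 8--10 only reclassify a server into $C$ but do not physically move it, so this still gives cost $1$ and a contribution of $1$ to $f^{i}$.

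Combining both inspections, every request that \textnormal{\textsc{Conf}} actually services inside phase $i$ induces at most one movement of cost exactly $1$, and the requests that induce such a movement are precisely those counted in $g^{i} + f^{i}$. Summing over all requests in the phase yields the stated bound. I do not expect any real obstacle here: the only mild subtlety is making sure that the request which triggers a transition to phase $i+1$ is consistently attributed to phase $i+1$ (since it is \emph{processed again} there), so that the counts $g^{i}, f^{i}$ and the actual movements of phase $i$ are in one-to-one correspondence.
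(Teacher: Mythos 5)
Your proposal is correct and matches the paper, which states this as an observation without proof precisely because the reasoning is the one you spell out: on a uniform metric each request handled within a phase triggers at most one unit-cost move in \textsc{Conf-Gen} or \textsc{Conf-Spec} (set reassignments are free), and a phase-ending request is reprocessed and charged to the next phase. Your explicit case analysis of the pseudocode is just the written-out version of that immediate argument.
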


In the following lemma, we show the worst-case upper bound for our algorithm.

\begin{restatable}{lem}{FirstAlgorithmWorstCaseCR}\label{le:first-algorithm:worst-case-comp-ratio}
    The competitive ratio of \textnormal{\textsc{Conf}} is at most \( 3k - 2 \).
\end{restatable}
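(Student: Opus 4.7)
The plan is to combine the per-phase cost bound from Observation~\ref{le:first-algorithm:algs-cost} with the per-phase $\opt{}$ cost bound from Lemma~\ref{le:first-algorithm:opt-cost:1-for-each-phase}. It suffices to show that in every phase $i > 1$ we have $g^{i} + f^{i} \leq 3k - 2$; together with the fact that $\opt{}$ pays at least $1$ per non-last phase, and that the first phase incurs no cost, this yields the claimed competitive ratio (up to an additive constant coming from the last phase, which vanishes in the competitive ratio).

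To bound $g^{i} + f^{i}$, I split $g^{i} = g_{1}^{i} + g_{2}^{i}$ according to the two movement-triggering branches of \textnormal{\textsc{Conf-Gen}}: $g_{1}^{i}$ counts general requests that add a new location to $L$, while $g_{2}^{i}$ counts those that re-occupy a vacant location already in $L$. I also introduce $\mu^{i}$, the number of specific requests in the phase whose destination lies in $L$; these are precisely the events that remove a location from $L$ and assign the formerly-$G$ server there back to $C$. From the invariant $|L| + |F| \leq k$ maintained throughout the phase (enforced by the phase-ending criteria) and the disjointness $L \cap p(F) = \emptyset$ from Lemma~\ref{le:first-algorithm:opt-cost:locations-disjoint}, one gets the identity $|L| = g_{1}^{i} - \mu^{i}$ at the end of the phase and $|F| \geq f^{i}$, giving the key inequality $g_{1}^{i} + f^{i} \leq k + \mu^{i}$, hence $g_{1}^{i} \leq k$ and $f^{i} \leq k$.

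For the remaining term $g_{2}^{i}$, I would argue that a vacancy in $L$ can only be created by a specific request pulling a server out of $G$ (the server moves to $F$ while its former $L$-location persists by convention as a vacancy). Each such creation event is a specific request, so the total number of vacancy fillings satisfies $g_{2}^{i} \leq f^{i}$. Combining with the previous bound yields the preliminary estimate $g_{1}^{i} + g_{2}^{i} + f^{i} \leq k + \mu^{i} + f^{i} \leq k + 2 f^{i}$.

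The main obstacle is to tighten this from the loose $3k$ to $3k - 2$. I expect this requires a careful joint accounting of the evolution of $C$, $G$, $F$, and $L$ step by step. The two key boundary effects to exploit are: first, the very first movement of a phase must be of $g_{1}$-type (no vacancies exist at phase start), so a $\mu$-event and a $g_{2}$-event cannot be triggered arbitrarily early; second, whenever $|L| + |F|$ reaches $k$, the next request that would enlarge either set triggers the phase end and is therefore not charged against the phase, which caps how simultaneously $\mu^{i}$ and $f^{i}$ can be maximized. Packing both observations into a step-by-step potential argument on the tuple $(|C|, |G|, |F|, |L|)$ should shave off exactly the two units required, giving $g_{1}^{i} + g_{2}^{i} + f^{i} \leq 3k - 2$. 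Summing over all phases and invoking Lemma~\ref{le:first-algorithm:opt-cost:1-for-each-phase} then completes the proof.
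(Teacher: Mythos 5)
Your overall framework coincides with the paper's: bound each phase's cost by \( g^{i} + f^{i} \) (\Cref{le:first-algorithm:algs-cost}), charge \opt{} at least \( 1 \) per phase (\Cref{le:first-algorithm:opt-cost:1-for-each-phase}), and your intermediate estimate \( g^{i} + f^{i} \le k + 2f^{i} \) is exactly the paper's chain \( g^{i} \le |\widehat{G}| + |\widehat{F}| + f^{i} \), \( f^{i} \le |\widehat{F}| \), \( |\widehat{G}| + |\widehat{F}| \le k \) in a different notation (your \( g_{2}^{i} \le f^{i} \) is the paper's ``a location only becomes unoccupied when a server joins \( F \)''). However, the actual content of the lemma --- that the per-phase bound is \( 3k-2 \) rather than \( 3k \) --- is precisely the step you leave open, and the route you sketch for it does not hold up. Your first ``boundary effect'' (the first movement of a phase must be of \( g_{1} \)-type) is false: a phase can begin with a specific request, e.g.\ when the previous phase is ended by one and that request is re-processed in the new phase. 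Your second observation is too vague to extract the two units, and note that \( f^{i} = k \) is genuinely attainable (all \( k \) servers can be specifically requested and moved within one phase), so no blanket bound of the form \( f^{i} \le k-1 \) can rescue the estimate.

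The paper closes the gap with a case distinction that your write-up is missing. If \( |\widehat{F}| \le k-1 \), then \( f^{i} \le |\widehat{F}| \le k-1 \) (you already observed \( f^{i} \le |\widehat{F}| \)), so \( g^{i} + f^{i} \le \bigl(|\widehat{G}|+|\widehat{F}|\bigr) + 2f^{i} \le k + 2(k-1) = 3k-2 \) with no further work --- the point is to use \( f^{i} \le |\widehat{F}| \), not merely \( f^{i} \le k \). The only nontrivial case is \( |\widehat{F}| = k \), where \( |\widehat{G}| = 0 \) and the paper examines the last specific request of the phase, say for server \( j \): either \( j \) is already at the requested position, so that request costs nothing and \( f^{i} \le |\widehat{F}| - 1 = k-1 \); or \( j \) was still in \( C \), hence also in \( C \) at the penultimate specific request, so that request vacated no location of \( L \), which caps the number of re-occupations and yields \( g^{i} \le 2|\widehat{F}| - 2 = 2k-2 \). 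Either way \( g^{i} + f^{i} \le 3k - 2 \). Without this (or an equivalent) argument, your proof establishes only a competitive ratio of \( 3k \).
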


\begin{proof}
    Assume there are \( p \) phases.
    Due to \Cref{le:first-algorithm:opt-cost:1-for-each-phase}, \opt{}'s cost is at least \( p-1 \).
    Based on \Cref{le:first-algorithm:algs-cost}, we know that for any phase but the first, \textnormal{\textsc{Conf}}'s cost is at most \( g^{i} + f^{i} \).
    Observe that it also holds (1) \( g^{i} \leq |\widehat{G}| + |\widehat{F}| + f^{i} \), (2) \( f^{i} \leq |\widehat{F}| \), and (3) \( |\widehat{G}| + |\widehat{F}| \leq k \).
    (1) holds because general requests requiring a movement can only have appeared at locations covered by a server in the end and at unoccupied locations.
    The number of former locations are upper bounded by \( |\widehat{G}| + |\widehat{F}| \).
    A location can only become unoccupied when a server moves away from it and joins \( F \), which implies that there are at most \( f^{i} \) many.
    (2) holds by definition, and (3) is ensured by the algorithm.

    We consider two cases: (a) \( |\widehat{F}| < k \) and (b) \( |\widehat{F}| = k \).
    Consider the case of (a).
    Then in the phase, the cost of the algorithm is at most \( g^{i} + f^{i} \leq |\widehat{G}| + 3 \, |\widehat{F}| \leq k + 2\,(k-1) \) (using (1), (2) and the premise).
    Consider the case of (b).
    In this case, \( |\widehat{G}| = 0 \).
    Consider the last specific request for server \( j \).
    Either \( j \) is already at the requests' position, or it was not used before, i.e., \( j \in C \).
    If \( j \) is already at the request's position, \( f^{i} < |\widehat{F}| \) and thus, \( g^{i} + f^{i} \leq 3\,|\widehat{F}| - 2 \).
    Else, \( j \) was also in \( C \) when the penultimate specific request appeared, implying that due to that request, no position of \( G \) became unoccupied.
    Therefore, \( g^{i} \leq 2 |\widehat{F}| - 2 \).
    In total, the cost for the phase is thus in any case \( g^{i} + f^{i} \leq 3\, |\widehat{F}| - 2 \leq 3\, k - 2 \) and the cost of the algorithm over all phases is thus at most \( (p-1) (3k - 2) \).
    Since \opt{} has a cost of at least \( p-1 \), the lemma follows.
\end{proof}

Next, we show three bounds that parameterize the competitive ratio of \textnormal{\textsc{Conf}} in the structure of the input sequence.

\begin{restatable}{lem}{FirstAlgorithmAdaptiveCR}\label{le:first-algorithm:adaptive-comp-ratio}
    Let \( g, f \) be the number of general/specific requests that require the algorithm to move.
    Let \( s = \nicefrac{f\,}{\,g+f} \) be the share of specific requests on the total number of requests that require a movement by the algorithm.
    The competitive ratio of \textnormal{\textsc{Conf}} on uniform metrics is at most \( \min\{k + \frac{2s}{1-2s}\, k,\, 1 + 2\frac{1-s}{s}\,k \} \).
    For \( s > \frac{1}{2} \), the competitive ratio is at most \( \frac{1}{2\,s - 1} \).
\end{restatable}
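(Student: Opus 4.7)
Let $p$ denote the number of phases and $g^i,f^i$ the per-phase counts of movement-forcing general and specific requests. Summing \Cref{le:first-algorithm:algs-cost} over phases $i\geq 2$ gives $\alg \leq \sum_{i\geq 2}(g^i+f^i)=g+f$, and \Cref{le:first-algorithm:opt-cost:1-for-each-phase} gives $\opt \geq p-1$, so the competitive ratio is at most $(g+f)/(p-1)$. Each of the claimed upper bounds (the $3k-2$ term is already \Cref{le:first-algorithm:worst-case-comp-ratio}) will follow from a different lower bound on $p-1$, driven by two per-phase estimates implicit in the proof of \Cref{le:first-algorithm:worst-case-comp-ratio}: (P1) $g^i \leq |\widehat{G}^i|+|\widehat{F}^i|+f^i \leq k+f^i$, since a general-movement lands either on a position still covered at phase-end or on a position freed in the meantime by a specific-movement; and (P2) $f^i \leq |\widehat{F}^i| \leq k$, since each movement-forcing specific request adds a fresh server to $F$.

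\textbf{The bound $k+\frac{2s}{1-2s}k$.} Summing (P1) over $i\geq 2$ yields $g-f \leq k(p-1)$, so $p-1 \geq (g-f)/k$ whenever $g>f$ (equivalently $s<\nicefrac{1}{2}$). Substituting into $(g+f)/(p-1)$ and rewriting with $s=f/(g+f)$ gives $k(g+f)/(g-f)=k/(1-2s)=k+\frac{2s}{1-2s}k$.

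\textbf{The bound $1+2\frac{1-s}{s}k$.} Summing (P2) gives $p-1 \geq f/k$, but this alone yields only $k+gk/f$, which is weaker than the target $1+2gk/f$ by roughly a factor of two when $g\approx f$. To recover the correct constant I plan to combine (P1) and (P2) by a charging argument that pairs each movement-forcing specific request with either (a) a same-phase general-movement into a freed $L$-slot (accounted for by the $f^i$ term on the right-hand side of (P1)), or (b) the phase-ending event itself. This disentangles the shared $f^i$-term between (P1) and (P2), which a naive summation would double-count. I expect this pairing to be the main technical obstacle of the proof, because the two inequalities are coupled through $f^i$ and the tight constant requires a phase-by-phase case split on how the phase terminates.

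\textbf{The bound $\frac{1}{2s-1}$ for $s>\nicefrac{1}{2}$.} The target $(g+f)/(f-g)=1/(2s-1)$ requires $p-1 \geq f-g$. The plan is to establish the sharper per-phase inequality $f^i-g^i \leq 1$ by inspecting the two phase-end conditions: a phase with $f^i-g^i \geq 2$ cannot end in case~(i), because the latter requires $|L|+|F|=k$ and hence enough general-movements to pad $|\widehat{L}^i|$; and a case~(ii) end is triggered by a request which is charged to the next phase's $f^{i+1}$, so at most one ``slack'' specific-movement per phase survives in the current one. Summing gives $f-g\leq p-1$, yielding the bound. The competitive ratio is then the minimum of the four bounds derived above.
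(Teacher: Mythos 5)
Your first bound is fine and is essentially the paper's own computation in disguise: summing \( g^{i} \leq |\widehat{G^{i}}| + |\widehat{F^{i}}| + f^{i} \leq k + f^{i} \) over phases and dividing by \( p-1 \) is algebraically the same as the paper's step via its inequality (4). The other two bounds, however, contain a genuine gap, and it is not merely the charging detail you flag as the "main technical obstacle": your whole framework bounds the ratio by \( (g+f)/(p-1) \), i.e.\ it only ever credits \opt{} with one move per phase, and no refinement of the lower bound on \( p-1 \) can produce bounds that drop towards \( 1 \). Concretely, take a sequence consisting solely of specific requests, each relocating a not-yet-frozen server to a fresh location: a single phase absorbs up to \( k \) such movement-forcing requests (only then does \( |L|+|F|=k \) or a re-request of a frozen server end it), so \( f^{i}-g^{i} \) can equal \( k \) and your claimed per-phase inequality \( f^{i}-g^{i}\leq 1 \) underlying the \( \frac{1}{2s-1} \) bound is false. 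In that instance \( g=0 \), \( s=1 \), and \( (g+f)/(p-1)=k \), while both target bounds equal \( 1 \); hence neither the second nor the third bound can be reached by any pairing or case analysis that only improves the estimate of \( p-1 \).

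The missing idea is a second, independent lower bound on \opt{}. The paper splits the frozen servers of each phase into \( \widehat{F_{1}} \) (specifically requested where they were last specifically requested) and \( \widehat{F_{2}} \) (requested at a new location); \opt{} must move every \( \widehat{F_{2}} \)-server, so \( c(\opt) \geq \sum_{i} |\widehat{F^{i}_{2}}| \) in addition to \( c(\opt) \geq p-1 \). Writing \( f^{i}=f^{i}_{1}+f^{i}_{2} \), it then observes that a \emph{costly} refreeze at the old location only happens if that server was displaced by a general request since its last specific request, giving \( \sum_{i} f^{i}_{1} \leq \sum_{i} g^{i} \). The \( g \)- and \( f_{1} \)-portions of \textnormal{\textsc{Conf}}'s cost are charged to the \( p-1 \) phase endings (yielding the \( 2\frac{1-s}{s}k \) term), while the \( f_{2} \)-portion is charged to \opt{}'s forced relocations (yielding the additive \( 1 \), and, for \( s>\nicefrac{1}{2} \), the entire \( \frac{1}{2s-1} \) bound after eliminating \( g \) and \( f_{1} \) through the same two inequalities). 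Without this second accounting of \opt{}'s cost your approach cannot establish either of the last two bounds.
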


\begin{proof}
    For the proof, for any phase, we consider two disjoint sets of servers composing \( \widehat{F} \).
    Servers in \( \widehat{F_{1}} \) are those which were specifically requested at the exact location as they were the last time before the phase, while servers in \( \widehat{F_{2}} \) were specifically requested at a location different from the previous one.

    Assume there are \( p \) phases.
    Denote the cost of the optimal offline solution by \( c(\opt{}) \).
    First, due to \Cref{le:first-algorithm:opt-cost:1-for-each-phase}, \opt{}'s cost is at least \( p-1 \).
    Every time a server gets specifically requested at a location different than the one where it was specifically requested the last time, \opt{} has to move the server.
    Therefore, secondly, \opt{} has a cost of at least \( \sum_{i} |\widehat{F^{i}_{2}}| \).

    Next, consider the cost of \textnormal{\textsc{Conf}} denoted by \( c(\textnormal{\textsc{Conf}}) \).
    We start with some basics.
    As before in the proof of \Cref*{le:first-algorithm:worst-case-comp-ratio}, for any phase but the first, \textnormal{\textsc{Conf}}'s cost is at most \( g^{i} + f^{i} \) and it holds (1) \( g^{i} \leq |\widehat{G}| + |\widehat{F}| + f^{i} \), (2) \( f^{i} \leq |\widehat{F}| \) and (3) \( |\widehat{G}| + |\widehat{F}| \leq k \).

    Observe that the first phase costs the algorithm nothing, i.e., \( f^{1} = g^{1} = 0 \).
    We can derive (4) \( \sum_{i=2}^{p} f^{i} \leq \frac{s}{1-2s} \sum_{i=2}^{p} k \) as follows:
    \( s = \nicefrac{\sum_{i} f^{i}\,}{\,\sum_{i} (g^{i} + f^{i})} \) \( \Leftrightarrow s \, \sum_{i} (g^{i} + f^{i}) = \sum_{i} f^{i} \) \( \Leftrightarrow s \, \sum_{i} (k + 2\,f^{i})  \geq \sum_{i} f^{i} \Leftrightarrow \sum_{i=2}^{p} f^{i} \leq \frac{s}{1-2s} \sum_{i=2}^{p} k \).
    Also, we can derive (5) \( \sum_{i=2}^{p} g^{i} \leq \frac{1-s}{s} \sum_{i=2}^{p} k \) by \( s = \nicefrac{\sum_{i} f^{i}\,}{\,\sum_{i} (g^{i} + f^{i})} \Leftrightarrow s \, \sum_{i} g^{i} = (1-s) \sum_{i} f^{i} \Leftrightarrow \sum_{i=2}^{p} g^{i} = \frac{1-s}{s} \sum_{i=2}^{p} f^{i} \) \( \leq \frac{1-s}{s} \sum_{i=2}^{p} k \).

    We begin by using (1), (2), (3), and (4).
    Summed up over all phases, we end up at:
    \begin{align*}
        c(\textnormal{\textsc{Conf}}) & \leq \sum_{i=2}^{p} (g^{i} + f^{i})
        \leq \sum_{i=2}^{p} (|\widehat{G^{i}}| + |\widehat{F^{i}}| + 2 f^{i})
        \leq \sum_{i=2}^{p} \left(k + 2\,f^{i} \right)                       \\
                         & \leq (p-1) \left(k + \frac{2s}{1-2s} \, k \right)
        \leq \left(k + \frac{2s}{1-2s} \, k \right) \, c(\opt{}).
    \end{align*}

    Next, we turn to the second bound.
    Let \( f^{i}_{1} \) be the number of movements due to servers in \( \widehat{F^{i}_{1}} \) and \( f^{i}_{2} \) be the number of movements due to servers in \( \widehat{F^{i}_{2}} \).
    Consider the servers of \( \bigcup_{i} \widehat{F^{i}_{1}} \).
    For any such server \( j \) with respect to phase \( i \), it holds: If our algorithm has a cost for \( j \) when \( j \) joins \( \widehat{F^{i}_{1}} \), then \( j \) was moved by a general request since the time it was lastly specifically requested.
    This implies that (6) \( \sum_{i} f^{i}_{1} \leq \sum_{i} g^{i} \).
    Using (1), (2), (3), (5), and (6) yields:
    \begin{align*}
        c(\textnormal{\textsc{Conf}}) & \leq \sum_{i=2}^{p} (g^{i} + f^{i})
        \leq \sum_{i=2}^{p} (g^{i} + f^{i}_{1} + f^{i}_{2})
        \leq \sum_{i=2}^{p} (2\,g^{i} + f^{i}_{2})                                                      \\
                         & \leq \sum_{i=2}^{p} \left(2\,\frac{1-s}{s}\,k + |\widehat{F^{i}_{2}}|\right)
        \leq (p-1)\,2\,\frac{1-s}{s}k + \sum_{i} |\widehat{F^{i}_{2}}|                                  \\
                         & \leq \left(1 + 2\,\frac{1-s}{s}k\right) \, c(\opt{}).
    \end{align*}

    Consider now the third bound.
    By the derivation of (5) we know (7) \( \sum_{i=2}^{p} g^{i} \leq \frac{1-s}{s} \sum_{i=2}^{p} f^{i} \).
    Based on (6) and (7) and because \( \sum_{i} f_{2}^{i} \geq 0 \) and \( s > \nicefrac{1}{2} \), we can derive (8) \( \sum_{i=2}^{p} f^{i}_{1} \leq \frac{1-s}{2 \, s - 1} \, \sum_{i=2}^{p} f^{i}_{2} \) by \( \sum_{i=2}^{p} f^{i}_{1} \leq \sum_{i=2}^{p} g^{i} \leq \frac{1-s}{s} \sum_{i=2}^{p} f^{i} = \frac{1-s}{s} \sum_{i=2}^{p} (f_{1}^{i} + f_{2}^{i}) \Leftrightarrow (\frac{s}{s} - \frac{1-s}{s}) \sum_{i=2}^{p} f_{1}^{i} \leq \frac{1-s}{s} \sum_{i=2}^{p} f_{2}^{i} \Leftrightarrow \sum_{i=2}^{p} f^{i}_{1} \leq \frac{1-s}{2 \, s - 1} \, \sum_{i=2}^{p} f^{i}_{2} \).
    Using (7) and (8) then yields:
    \begin{align*}
        c(\textnormal{\textsc{Conf}}) & \leq \sum_{i=2}^{p} (g^{i} + f^{i})
        \leq \left(1 + \frac{1-s}{s}\right) \, \sum_{i=2}^{p} f^{i}
        \\
        & = \left(1 + \frac{1-s}{s}\right) \, \sum_{i=2}^{p} f^{i}_{1} + \left(1 + \frac{1-s}{s}\right) \, \sum_{i=2}^{p} f^{i}_{2}
        \\
        & \leq \left(1 + \frac{1-s}{s}\right) \cdot \frac{1-s}{2 \, s - 1} \, \sum_{i=2}^{p} f^{i}_{2} + \left(1 + \frac{1-s}{s}\right) \, \sum_{i=2}^{p} f^{i}_{2}
        \\
        & \leq \frac{1}{2\,s - 1} \, \sum_{i=2}^{p} f^{i}_{2}
        \leq \frac{1}{2\,s - 1} \, \sum_{i=2}^{p} |\widehat{F_{2}^{i}}|
        \leq \frac{1}{2\,s - 1} \, c(\opt{}).
    \end{align*}
\end{proof}

\Cref{th:first-algorithm:competitive-ratio} now follows from \Cref{le:first-algorithm:worst-case-comp-ratio} and \Cref{le:first-algorithm:adaptive-comp-ratio}.

\subsection{\textnormal{\textsc{Conf}} has a Worst-Case Competitive Ratio of at least \texorpdfstring{\( 3k-2 \)}{3k-2}}\label{sec:lower-bound-conf}

In this section, we show that there is a mixed input for \textnormal{\textsc{Conf}} such that the algorithm's competitive ratio is at least \( 3k - 2 \).
I.e., even though \textnormal{\textsc{Conf}} is not strictly-\( k \)-confident but only \( k \)-confident, the same lower bound as the one of \Cref{th:lower-bounds:confident-algorithms} applies.

\begin{restatable}{thm}{LowerBoundConfAlgorithm}
    The worst-case competitive ratio of \textnormal{\textsc{Conf}} is at least \( 3k-2 \).
\end{restatable}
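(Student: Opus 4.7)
The plan is to construct a single adversarial input tailored to \textnormal{\textsc{Conf}} on a uniform metric with $k+1$ locations that forces it to pay $3k-2$ while \opt{} pays only $1$. The construction parallels the strictly-$k$-confident lower bound of \Cref{th:lower-bounds:confident-algorithms}: although \textnormal{\textsc{Conf}} is only $k$-confident and not by definition strictly-$k$-confident, its FIFO selection rule can be made to never pick a defensive server on this particular sequence via a relabeling of the servers, so that the same overall structure applies.

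Concretely, I would take locations $v_1, \dots, v_{k+1}$ with $p(a_i) = p(o_i) = v_i$ and split the sequence into two parts. In \textbf{Part 1}, issue a general request at $v_{k+1}$ and then keep requesting at the currently unoccupied location, skipping $v_k$. After relabeling servers according to the order in which \textnormal{\textsc{Conf}} picks them from $C$ via FIFO, \textnormal{\textsc{Conf}} moves $a_1, a_2, \dots, a_k$ in that order at cost $k$ and ends Part~1 with $a_1$ at $v_{k+1}$, $a_i$ at $v_{i-1}$ for $2 \le i \le k$, and internal state $G = \{a_1,\dots,a_k\}$, $L = \{v_{k+1}, v_1, \dots, v_{k-1}\}$, $F = \emptyset$, so that $|L|+|F| = k$. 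In \textbf{Part 2}, for $i = 1, \dots, k-1$, issue the specific request for $a_i$ at $v_i$ followed by a general request at $v_{k+1}$.

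The heart of the argument is an induction on $i$ showing that just before iteration $i$ of Part~2, \textnormal{\textsc{Conf}} has $a_1, \dots, a_{i-1}$ in $F$ at $v_1, \dots, v_{i-1}$, $a_i$ in $G$ at $v_{k+1}$, $L = \{v_{k+1}\}$, and the remaining servers sitting in $C$ at $v_i, \dots, v_{k-1}$ with $a_{i+1}$ next in the FIFO order. The spec for $a_i$ at $v_i$ moves $a_i$ from $v_{k+1}$ to $v_i$ (cost $1$) and displaces the occupant $a_{i+1}$ of $v_i$ back into $C$; the follow-up general request at $v_{k+1}$, which is still in $L$ but unoccupied, then moves $a_{i+1}$ to $v_{k+1}$ (cost $1$) via FIFO, re-establishing the invariant for iteration $i+1$. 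The only subtlety is iteration $i=1$: because $|L|+|F|=k$ at the end of Part~1, the first spec triggers a new \textnormal{\textsc{Conf}} phase that resets $C$ to $K$, but FIFO order still selects $a_2$ as the next server to be moved, so the invariant survives the reset. For $i \ge 2$ one checks that $|L|+|F|\le k-1$ right before each spec, so no further phase changes occur.

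Summing the two parts gives a total cost for \textnormal{\textsc{Conf}} of $k + 2(k-1) = 3k-2$. The offline optimum answers the entire input by moving $o_k$ once from $v_k$ to $v_{k+1}$: thereafter, every general request at $v_{k+1}$ or at one of $v_1, \dots, v_{k-1}$ is served without movement, and each specific request $a_i$ at $v_i$ for $i<k$ matches $o_i$'s initial position, so that the total cost of \opt{} is $1$ and the competitive ratio is at least $3k-2$. The main obstacle I foresee is the bookkeeping of $C,G,L,F$ across the forced phase transition at the start of Part~2 and across each displacement, since \textnormal{\textsc{Conf}} can unmark servers mid-phase and the FIFO order must be tracked carefully; once the invariant is stated and proved by induction, the cost accounting is immediate.
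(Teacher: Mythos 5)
Your proposal is correct and follows essentially the same route as the paper's own proof: the identical two-part adversarial sequence on $v_1,\dots,v_{k+1}$ (cyclic general requests skipping $v_k$ for cost $k$, then $k-1$ pairs of a specific request at an initial position followed by a general request at the vacated $v_{k+1}$ for cost $2(k-1)$), against \opt{} moving only $o_k$ for cost $1$. You even make explicit the phase reset and the $C/G/L/F$ invariant that the paper only sketches; the reliance on the FIFO tie-break selecting $a_{i+1}$ after the reset is the same mild assumption the paper makes (and, as the paper remarks, the adversary can adapt the sequence to any other selection order).
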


\begin{proof}
    We assume \textnormal{\textsc{Conf}} selects servers of \( C \) by the FIFO rule.
    As a remark, the bound below can be adapted for other orders for the selection.

    Our lower bound is constructed as the lower bound of \Cref{th:lower-bounds:confident-algorithms}:
    We consider a uniform metric with locations \( v_{1}, \dots, v_{k+1} \).
    Initially, the algorithm's servers \( a_{1}, \dots, a_{k}  \) as well as \opt{}'s servers \( o_{1}, \dots, o_{k} \) share the same position \( p(a_{i}) = p(o_{i}) = v_{i} \), for all \( i \leq k \).
    As in the proof of \Cref{th:general-lower-bound}, we rename the \( i \)'s such that during the first phase (defined below), the algorithm moves its \( a_{i} \) in order.
    First, issue a general request on \( v_{k+1} \).
    \opt{} solves the entire sequence by moving \( o_{k} \) to \( v_{k+1} \) at a cost of \( 1 \).
    Whenever a server \( a_{i} \) is moved by the algorithm, issue a request on \( v_{i} = p^{*}(a_{i}) \) afterwards, except for \( a_{k} \).
    After the first phase, \textnormal{\textsc{Conf}} covers the locations \( 1, \dots, k-1 \) and \( k+1 \) in the following way:
    \( a_{1} \) is on \( v_{k+1} \) and each \( a_{i} \) for \( i > 1 \) is on \( v_{i-1} \).
    \textnormal{\textsc{Conf}} has moved every server once, i.e., all servers are in \( G \).
    Now, for each \( i < k \), issue a specific request on \( v_{i} \) for server \( i \) and afterwards, a general request on \( a_{i} \)'s previous position.
    For each such two requests, \textnormal{\textsc{Conf}} moves server \( a_{i} \) to \( v_{i} \), the server \( a_{i+1} \) joins \( C \) and immediately joins \( G \) on \( v_{k+1} \).
    In total, we can do \( k-1 \) such pairs of requests until all servers except for \( a_{k} \) are on their initial position and \( a_{k} \) is on \( v_{k+1} \).
    At this point, \textnormal{\textsc{Conf}}'s configuration matches the optimal one.
    The cost of \textnormal{\textsc{Conf}} is then \( k + 2 \, (k-1) = 3k - 2 \) while \opt{} has had a cost of \( 1 \).
\end{proof}

\section{A \texorpdfstring{\( 2k + 14 \)}{2k+ 14} competitive Algorithm for Uniform Metrics}\label{sec:k-defensive-algorithm}

In the following, we present \textnormal{\textsc{Def}}, a \( k \)-defensive algorithm (see \Cref{def:defensive-confident}) achieving a worst-case competitive ratio of \( 2k + 14 \) (\Cref{th:defensive-algorithm:competitive-ratio}) on uniform metrics.
\textnormal{\textsc{Def}} comes close to the general lower bound of \( 2k - 1 \) (see \Cref{sec:lower-bound}).
Similar to \textnormal{\textsc{Conf}}, the algorithm is loosely inspired by the marking approach.
\textnormal{\textsc{Def}} can be seen as an extended version of \textnormal{\textsc{Conf}}, where for each server, the algorithm acts defensively.

\begin{restatable}{thm}{SecondAlgorithmCR}\label{th:defensive-algorithm:competitive-ratio}
    The competitive ratio of \textnormal{\textsc{Def}} on uniform metrics is at most \( 2 k + 14 \).
\end{restatable}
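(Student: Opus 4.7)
The plan is to adapt the phase-based analysis of \textnormal{\textsc{Conf}} from Section~\ref{sec:k-confident-algorithm} to \textnormal{\textsc{Def}}, showing that the defensive rule tightens the per-phase cost from roughly \( 3k \) down to \( 2k \) plus a constant overhead. First, I would set up phases tailored to \textnormal{\textsc{Def}} whose termination conditions are analogous to the ones of \textnormal{\textsc{Conf}} (roughly: either the set of currently relevant locations plus frozen servers would exceed \( k \), or a specific request contradicts the current frozen placement). The analogue of \Cref{le:first-algorithm:opt-cost:1-for-each-phase} should transfer with only minor changes: at the end of any phase except the last, the sets of locations witnessed during the phase together with the frozen-server constraints force \opt{} to pay at least \( 1 \). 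If \( p \) is the number of phases, \opt{} therefore pays at least \( p-1 \).

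Second, the crux of the argument is to bound \textnormal{\textsc{Def}}'s cost inside a single phase by \( 2k + 14 \). The key structural observation is that the defensive rule prevents the ``shuffling'' pattern that pushes \textnormal{\textsc{Conf}} to \( 3k-2 \): once a server \( j \) has been displaced from \( p^{*}(j) \), the very next general request on \( p^{*}(j) \) is answered by \( j \) itself, realigning the algorithm with \opt{} for \( j \) at a cost of exactly two movements for that server within the phase. I would therefore partition the movements inside a phase according to the server they act on and show that each server is charged at most twice, with a small number of exceptional servers (those involved in the closing event of the phase, those whose \( p^{*} \) is updated mid-phase by a specific request, and those interacting with a newly frozen server's location entering or leaving \( L \)) contributing the additional constant overhead \( 14 \). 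Summing over all \( k \) servers yields the per-phase bound \( 2k + 14 \), and combining with \opt{}'s cost of at least \( p-1 \) gives the claimed competitive ratio (the very first phase, in which the initial configurations agree, contributes no cost to \textnormal{\textsc{Def}} and is absorbed into the additive slack, exactly as for \textnormal{\textsc{Conf}}).

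The main obstacle I anticipate is making the two-moves-per-server invariant precise in the presence of interleaved specific requests. A specific request can both (a) update \( p^{*}(j) \), effectively resetting \( j \)'s defensive budget, and (b) vacate a location in \( L \), which later general requests may re-cover by moving a third, previously unused server, triggering new defensive obligations. To control these cascades I would design a token/charging scheme in which each server carries two tokens at the start of the phase (one for the outward move and one for the defensive return), specific requests pay their own movement and may hand a bounded number of extra tokens to the affected servers, and the phase termination rule guarantees that only \( O(1) \) servers can ever exhaust their allotment beyond the two tokens. Making this bookkeeping tight -- rather than producing an additive term growing with \( k \) -- is what determines whether the final constant is \( 14 \) or something larger; verifying that every pathological interaction falls into one of finitely many cases contributing only constant cost is exactly where the careful case analysis of the proof must live.
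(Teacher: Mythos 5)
There is a genuine gap, and it sits exactly at the step you call the crux: bounding \textnormal{\textsc{Def}}'s cost \emph{inside a single phase} by \( 2k + 14 \) and playing it against \opt{}'s per-phase cost of \( 1 \). That per-phase bound is false, and the paper explicitly does not prove it. Within one phase, the number of servers that incur more than two movements is not \( O(1) \): a server can be displaced by a general request, act defensively (two moves), later be specifically requested at yet another location (a third move), its vacated location triggers a simulated general request moving a further server, servers can be released back into the candidate pool when another server freezes on their location and then move again, and so on. In the paper's accounting the per-phase cost is only bounded by a quantity of the form \( 2\,(|\widehat{C_{2}}| + |\widehat{G}| + |\widehat{D}| + |\widehat{F}|) + 5\,|\widehat{F_{2}}| + 3\,|\widehat{G_{2}}| \) (\Cref{le:second-algorithm:cost-in-phase,le:second-algorithm:cost-in-phase-reframed}), whose excess over \( 2k \) grows with \( k \), and the paper states outright that the cost of \textnormal{\textsc{Def}} in a phase may exceed \( 2k+14 \). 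So your "only \( O(1) \) exceptional servers per phase" invariant cannot be made to work, and the constant \( 14 \) is not a per-phase overhead at all.

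The missing idea is that the excess cost must be charged not to the one guaranteed \opt{} movement per phase but to \emph{additional, identifiable} \opt{} movements, possibly lying in earlier phases. Your token scheme gestures in this direction ("specific requests pay their own movement"), but the resolution is not within-phase bookkeeping: for every server frozen at a location different from its previous specific location (\( \widehat{F_{2}} \)), \opt{} must itself have moved that server since its last specific request, and the paper charges a constant (5, resp.\ 3) to that \opt{} movement; for the remaining servers of \( \widehat{G_{2}} \) it needs a strengthened lower bound on \opt{}'s cost within the phase (\Cref{le:opt-cost:high-when-not-all-covered,le:second-algorithm:movement-mapping-G-two}), showing \opt{} pays once per requested location it fails to cover at phase end; and it needs an argument that no single \opt{} movement is charged by many phases (\Cref{le:second-algorithm:no-degrations-without-partner}), culminating in the statement that each \opt{} movement receives total charge at most \( 2k + 14 \) (\Cref{le:charging:maximum-charge-per-movement-of-opt}). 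None of this cross-phase, per-\opt{}-movement amortization is present in your outline, and without it the argument does not close.
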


\subsection{The Algorithm \textnormal{\textsc{Def}}}

As \textnormal{\textsc{Conf}} (\Cref{sec:k-confident-algorithm}) does, \textnormal{\textsc{Def}} works in phases and is split into \textsc{Def-Gen} and \textsc{Def-Spec}.
As before, \textnormal{\textsc{Def}} manages several sets for each phase \( i \).
We denote this by an exponent of \( i \) that is omitted if the phase is clear from the context.

At the beginning of a phase, all servers are in a candidate set \( C \) (\textsc{Def-Gen}, \textsc{Def-Spec} Lines 3 -- 4).
Similar to \textnormal{\textsc{Conf}} we have the sets \( G \) and \( F \).
As before, \( F \) is the set of servers for which specific requests appeared so far during phase \( i \) (as a result, these servers become \emph{frozen}).
\( G \) is defined as the set of servers at locations where only general requests appeared of which no server acted defensively.
In contrast to the definition for \textnormal{\textsc{Conf}}, we do not allow locations where only general requests appeared to become unoccupied.
Thus, we do not need \( L \).
This change does not influence the worst-case bound and improves the readability.
The worst-case bound is unaffected because, at every unoccupied location, a general request increases only the cost of the online algorithm and hence, happens in the worst case.
To ensure that no unoccupied location appears, \textnormal{\textsc{Def}} simulates a general request whenever a server not in \( C \) moves away from a location (\textsc{Def-Gen}, \textsc{Def-Spec} Lines 13 -- 14).
In addition to the above sets, we have a set \( D \) of servers that acted defensively during the current phase.
\( D \) and \( G \) are disjoint, and together, they contain all servers that are at locations where only general requests appeared.
Intuitively, \textnormal{\textsc{Conf}} treats all servers moving due to general requests the same, while \textnormal{\textsc{Def}} acts defensively whenever possible.

With the same reasoning as in the description of \textnormal{\textsc{Conf}}, whenever a specific request for server \( j \) appears, \( j \) is never moved for the rest of the phase and thus joins \( F \) (\textsc{Def-Spec} Line 7).
When a general request \( r \) appears, \textnormal{\textsc{Def}} first determines if there is a server \( j \in C \cup G \) such that \( p^{*}(j) = r \) (\textsc{Def-Gen} Lines 10 -- 12).
Note how by definition, no server of \( D \cup F \) can act defensively for \( r \) on a location different from its current one.
If so, the algorithm acts defensively by moving \( j \) to \( r \), and assigns \( j \) to \( D \).
As a tie break, when there are multiple such servers, \textnormal{\textsc{Def}} picks the one which was specifically requested the latest.
If no such server exists, \textnormal{\textsc{Def}} moves a server of the candidate set \( C \) to \( r \) and assign it to \( G \) (\textsc{Def-Gen} Lines 7 -- 9).
The respective server is chosen by a scheme that prioritizes servers that did not yet move in the current phase (\textsc{Def-Select} Lines 1 -- 6) and those that would not act defensively if there is already some other server that acts defensively for the same location (\textsc{Def-Select} Lines 2 -- 4).
For details, see the algorithm \textsc{Def-Select} below.
Note here, how \( C \) and \( G \) are split into \( C_{1} \) and \( C_{2} \), and \( G_{1} \) and \( G_{2} \) to keep track of servers that acted defensively.
\( C_{1} \) and \( G_{1} \) contain servers that never joined \( D \) during the current phase, while \( C_{2} \) and \( G_{2} \) contain those servers that were in \( D \) earlier.
Note how \textsc{Def-Select} is ambiguous for the real choice of a server of \( C_{1} \) or \( C_{2} \).
As in \textnormal{\textsc{Conf}}, any ordering on the servers will do, and we assume for this paper that the FIFO rule is used.

A phase ends when either a server of \( F \) is specifically requested at a different location (\textsc{Def-Spec} Line 2) or when \( |G| + |D| + |F| \leq k \) would not hold anymore (\textsc{Def-Gen}, \textsc{Def-Spec} Line 2).
In the former case, the optimal solution needs to move the respective server for a cost of at least \( 1 \).
In the latter case, more than \( k - |F| \) locations would need to be covered by \( k - |F| \) servers which implies that the optimal solution has cost at least \( 1 \).

As before in \textnormal{\textsc{Conf}}, we assume that initially, the servers are at the exact locations as in the optimal solution.
Hence, \( C^{1}, G^{1}, D^{1} = \emptyset \) and \( F^{1} = K \) (the set of all servers).

\begin{algorithm}[H]
    \caption*{\textbf{\textnormal{\textsc{Def-Select}:} Server for request \( r \)}}
    \begin{algorithmic}[1]
        \If{\( C_{1} \) is not empty}
        \If{There is \( j \in C_{1} \) such that \( j \) would not be selected \\ \hspace*{1.1cm} to act defensively for \( p^{*}(j) \)}
        \State \textbf{Return} \( j \)
        \Else
        \State \textbf{Return} Any server of \( C_{1} \)
        \EndIf
        \ElsIf{\( C_{1} \) is empty}
        \State \textbf{Return} Any server of \( C_{2} \)
        \EndIf
    \end{algorithmic}
\end{algorithm}

\begin{algorithm}[H]
    \caption*{\textbf{\textnormal{\textsc{Def-Gen}:} General request \( r \) arrives in a phase \( i \)}}
    \label{algorithm:fancy-algorithm-general-v2}
    \begin{algorithmic}[1]
        \If{\( r \notin p(G \cup D \cup F)  \)}
        \If{\( |G| + |D| + |F| = k \)}
        \State Start the next phase \( i+1 \)
        \State Set \( C^{i+1} \gets K \) and \( G^{i+1}, D^{i+1}, F^{i+1} \gets \emptyset \)
        \State Process \( r \) again for phase \( i+1 \)
        \ElsIf{\( |G| + |D| + |F| < k \)}
        \If{\( r \notin p^{*}(C) \cup p^{*}(G) \)}\label{algorithm:fancy-algorithm-general-v2:line:answer-new-request}
        \State Pick server \( j \in C \) given by \textsc{Select}
        \State Move \( j \) to \( r \) and assign it to \( G \)
        \Statex \hspace*{1.65cm} (\( G_{1} \) if it was in \( C_{1} \), \( G_{2} \) else)
        \ElsIf{\( r \in p^{*}(C) \cup p^{*}(G) \)}\label{algorithm:fancy-algorithm-general-v2:line:move-server-back}
        \State Let \( j \notin F \) be the server with \( p^{*}(j) = r \)
        \Statex \hspace*{1.65cm} If there are multiple, select the one that
        \Statex \hspace*{1.65cm} was specifically requested the latest.
        \State Move \( j \) to \( r \) (\( p^{*}(j) \)) and assign it to \( D \)
        \If{\( j \) was in \( G \)}
        \State Simulate a general request on \( j \)'s
        \Statex \hspace*{2.15cm} previous position
        \EndIf
        \EndIf
        \EndIf
        \EndIf
    \end{algorithmic}
\end{algorithm}

\begin{algorithm}[H]
    \caption*{\textbf{\textnormal{\textsc{Def-Spec}:} Specific request \( r \) for server \( j \) arrives in phase \( i \)}}
    \label{algorithm:fancy-algorithm-specific-v2}
    \begin{algorithmic}[1]
        \If{\( r \neq p(j) \)}
        \If{\( j \in F \) \textbf{or} \( |G| + |D| + |F| = k \)}
        \State Start the next phase \( i + 1 \)
        \State Set \( C^{i+1} \gets K \) and \( G^{i+1}, D^{i+1}, F^{i+1} \gets \emptyset \)
        \State Process \( r \) again for phase \( i+1 \)
        \ElsIf{\( j \notin F \) and \( |G| + |D| + |F| < k \)}
        \State Move \( j \) to \( r \) and assign it to \( F \)
        \If{There is a \( s \neq j \), \( s \notin F \) on \( r \)}
        \If{\( s \) was in \( C_{1} \cup G_{1} \)}
        \State Assign \( s \) to \( C_{1} \)
        \Else
        \State Assign \( s \) to \( C_{2} \)
        \EndIf
        \EndIf
        \If{\( j \) was in \( G \cup D \)}\label{algorithm:fancy-algorithm-specific-v2:line:replace-server-j}
        \State Simulate a general request on \( j \)'s
        \Statex \hspace*{1.65cm} previous position
        \EndIf
        \EndIf
        \Else
        \State Assign \( j \) to \( F \) \label{algorithm:fancy-algorithm-specific-v2:line:j-freezes}
        \EndIf
    \end{algorithmic}
\end{algorithm}

\subsection{The Analysis}

Next, we show that \textnormal{\textsc{Def}} has a competitive ratio of \( 2 k + 14 \).
The starting approach is the same as in the analysis of \textnormal{\textsc{Conf}}, i.e., we bound the cost of \textnormal{\textsc{Def}} in each phase and use that \opt{} has cost \( 1 \) in each phase.
However, the cost of \textnormal{\textsc{Def}} might be higher than \( 2k + 14 \) in each phase.
To reason about these higher costs, we first analyze in detail which costs \textnormal{\textsc{Def}} produces.
Afterward, we simplify the bound step-by-step using insights into the behavior of \textnormal{\textsc{Def}}.
Then, we show how to charge the simplified cost of \textnormal{\textsc{Def}} in a phase to movements of \opt{}.
For this step, we identify further movements of \opt{} that must happen to answer specific requests.

Before we start, observe that \textnormal{\textsc{Def}} is \( k \)-defensive with respect to \Cref{def:defensive-confident}.
This is ensured by Lines 10-14 for serving a general request.
If there is a server that can act defensively for \( r \), then \( r \in p^{*}(C) \cup p^{*}(G) \) holds.
In the respective lines, \textnormal{\textsc{Def}} selects a server that acts defensively for \( r \), and we assign it to \( D \).

\paragraph{On the Cost of \opt{} in a Phase.}
As before we denote by \( \widehat{U^{i}} \) the content of the set \( U^{i} \) right after the end of phase \( i \).
For a phase, let \( t_{\text{start}} \) be the time step of the first request and \( t_{\text{end}} \) be the time step of the last request.
\Cref{le:opt-cost:locations-disjoint} is an adaption of \Cref{le:first-algorithm:opt-cost:locations-disjoint}.

\begin{restatable}{lem}{SecondAlgorithmLocationsDisjoint}\label{le:opt-cost:locations-disjoint}
    At any point in time \( p(G) \), \( p(D) \), and \( p(F) \) are disjoint.
\end{restatable}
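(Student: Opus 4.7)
The plan is to prove the invariant by induction on the sequence of operations performed by \textnormal{\textsc{Def}} (where a single request may cause several operations because of the simulated general requests triggered by Lines~13--14 of \textsc{Def-Gen}/\textsc{Def-Spec}). As the base case, observe that at the beginning of every phase we have $G = D = \emptyset$ (for the first phase additionally $F = K$), so at most one of $p(G), p(D), p(F)$ is non-empty and pairwise disjointness holds trivially.

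For the inductive step on a general request $r$, I would handle three sub-cases. If $r \in p(G \cup D \cup F)$, no set changes and the invariant is preserved immediately. If the guard on $|G|+|D|+|F|$ triggers a new phase, I reduce to the base case and re-invoke the induction on the reprocessed request. Otherwise the algorithm adds some server $j$ to $G$ or to $D$ at position $r$. Because the outer guard of \textsc{Def-Gen} enforces $r \notin p(G \cup D \cup F)$ right before the assignment, inserting $r$ into $p(G)$ or $p(D)$ can produce no new overlap with the other two sets. In the defensive branch, if $j$ was previously in $G$, its old position $p_{\mathrm{old}}$ is vacated; since within $G$ positions are distinct (itself an easy sub-invariant obtained in the same induction), $p_{\mathrm{old}}$ is left outside $p(G \cup D \cup F)$, so the simulated general request at $p_{\mathrm{old}}$ is a valid recursive invocation to which the induction hypothesis applies.

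For a specific request $r$ for a server $j$, the phase-end and $j \in F$ cases reduce to the base case. In the substantial case $r \neq p(j)$ with $j \notin F$, the algorithm moves $j$ to $r$ and assigns it to $F$. By the inductive hypothesis together with within-$G$ and within-$D$ distinctness, at most one server $s \neq j$ with $s \notin F$ sits at $r$; Lines~8--12 of \textsc{Def-Spec} reassign that $s$ to $C$, removing $r$ from $p(G) \cup p(D)$ strictly before $r$ is inserted into $p(F)$, which preserves the invariant. The simulated general request at $j$'s old position, if $j \in G \cup D$, is again handled by the induction hypothesis exactly as in the general case. In the remaining sub-case $r = p(j)$, $j$ is just assigned to $F$ without movement.

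The main obstacle is this last sub-case: to conclude, one must rule out that some server of $G \cup D$ already sits at $p(j)$ when $j \in C$ is specifically requested there. I would address this by strengthening the inductive invariant to also track that whenever $C$ and $G \cup D$ share a location, it can only be because of a prior specific-request displacement; more precisely, that no position in $p(G) \cup p(D)$ is simultaneously occupied by a server in $C$. This strengthening is preserved because positions are added to $p(G) \cup p(D)$ only under the guard $r \notin p(G \cup D \cup F)$, and because the FIFO-based \textnormal{\textsc{Def-Select}} (interpreted as picking an already-in-place server of $C$ when one is available, as is standard for lazy algorithms) never introduces a fresh $C$-vs-$G/D$ coincidence. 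Once this strengthened invariant is in hand, the $r = p(j)$ sub-case preserves pairwise disjointness and the induction closes.
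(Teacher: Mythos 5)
Your induction is, at bottom, the same argument the paper gives in compressed form: the paper's proof is a two-case contradiction (a \(G\)--\(D\) overlap is impossible because the later joiner is blocked by the guard \(r \notin p(G \cup D \cup F)\); a \((G\cup D)\)--\(F\) overlap is impossible because either the frozen server was at \(\ell\) first, so the guard blocks the \(G\cup D\) assignment, or the frozen server arrives by a specific request and Lines 8--12 of \textsc{Def-Spec} send the co-located \(G\cup D\) server back to \(C\)). Your general-request case and your ``\(r \neq p(j)\)'' specific-request case reproduce exactly this reasoning, just packaged as an explicit induction.

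The gap is in the one place where you go beyond the paper, namely the sub-case \(r = p(j)\) with \(j \in C\) (the freeze-in-place of Line 22 of \textsc{Def-Spec}). You correctly identify that this case needs the auxiliary invariant that no location of \(p(G)\cup p(D)\) also carries a server of \(C\), but your justification of that invariant does not hold for the algorithm as written. First, the defensive branch of \textsc{Def-Gen} (Lines 10--14) never calls \textsc{Def-Select}: it moves the server \(j \notin F\) with \(p^{*}(j)=r\), tie-broken by the most recent specific request, and this server need not be the \(C\)-server already parked at \(r\) (for instance, two servers can share \(p^{*}=r\) from specific requests in earlier phases, with the earlier-requested one still sitting at \(r\) while the later-requested one is chosen and moved there, landing a \(D\)-server on top of a \(C\)-server). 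Second, the lazy reading of \textsc{Def-Select} is an assumption the paper does not make (it fixes FIFO), and even under that reading the invariant can break: two \(C\)-servers can be co-located at the start of a phase (after a reassignment via Lines 8--12 of \textsc{Def-Spec} in the previous phase), so assigning one of them to \(G\) or \(D\) in place still leaves the other \(C\)-server underneath. In any of these situations a later specific request for the parked \(C\)-server at its own position triggers Line 22 and produces exactly the \(p(F)\)-versus-\(p(G\cup D)\) overlap you set out to exclude. The paper's proof sidesteps this by asserting that a server joining \(F\) at a location already holding a \(G\cup D\) server must have \emph{moved} there by a specific request, so that Lines 8--12 apply; to make your strengthened invariant carry the argument you would have to show (or adopt as a convention about how requests at already-occupied locations are served) that the algorithm never places a server of \(G\cup D\) at a location currently occupied by a different server of \(C\), and your current write-up does not establish that for the defensive branch or for co-located \(C\)-servers.
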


\begin{proof}
    Assume there is one location \( \ell \) with a \( j \in G \) and some \( s \in D \).
    Both servers joined their sets due to a general request on \( \ell \).
    No matter which server joined its set first, the later one would not join its set on \( \ell \) because there was already a server on that location.

    Assume there is a location \( \ell \) with some server \( j \in G \cup D \) and some server \( s \in F \).
    If \( s \) was first on \( \ell \), \( j \) would not have joined its set there, because \( s \) is able to answer general requests.
    If \( j \) was first on \( \ell \), \( s \) moved to \( \ell \) due to some specific request.
    Then the algorithm ensures that \( j \) leaves \( G \cup D \) and joins \( C \).
\end{proof}

\Cref{le:opt-cost:1-for-each-phase} is an adaption of \Cref{le:first-algorithm:opt-cost:1-for-each-phase} taking \( D \) into account.

\begin{restatable}{lem}{SecondAlgorithmOptCostOne}\label{le:opt-cost:1-for-each-phase}
    Consider any phase but the last and the first request \( r \) right after the phase ends.
    \opt{} has a cost of at least \( 1 \) during the time interval right after \( t_{\text{start}} \) until right after \( t_{\text{end}} + 1 \).
\end{restatable}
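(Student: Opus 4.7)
The plan is to mirror the proof of \Cref{le:first-algorithm:opt-cost:1-for-each-phase}, replacing the role of $L$ by the combined set $p(\widehat{G})\cup p(\widehat{D})$, which now tracks all locations at which only general requests appeared during the phase. First I would observe that a phase can only end for one of two reasons dictated by the algorithm: (i) a request at a new (so far uncovered) location arrives while $|G|+|D|+|F|=k$ already holds, or (ii) some server $j\in F$ is specifically requested at a location different from $p(j)$. The subsequent case analysis argues a contradiction under the assumption that \opt{} makes no movement in the interval from right after $t_{\text{start}}$ to right after $t_{\text{end}}+1$.

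In case (i), the key step is to argue that \opt{} would have to cover strictly more than $k$ distinct locations with its $k$ servers. For every location in $p(\widehat{G})\cup p(\widehat{D})\cup p(\widehat{F})$, a request appeared during the phase, and since \opt{} is assumed to be stationary, it must keep a server there for the entire interval. By \Cref{le:opt-cost:locations-disjoint}, these three sets of positions are pairwise disjoint, so together they account for $|\widehat{G}|+|\widehat{D}|+|\widehat{F}|=k$ distinct positions. The request $r$ that triggered the end of the phase lies outside $p(\widehat{G}\cup \widehat{D}\cup \widehat{F})$ (otherwise the algorithm would have served it without ending the phase), and \opt{} must cover $r$ as well right after $t_{\text{end}}+1$. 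This yields $k+1$ distinct positions to be covered simultaneously — a contradiction. I would also note that right after $t_{\text{start}}$ \opt{} indeed has a server on the first request's location by assumption on the initial configuration and the fact that all earlier moves are only simulations of internal bookkeeping.

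In case (ii), the argument is simpler: $j$ joined $F$ due to an earlier specific request at some position $q\neq r$ during this phase, forcing \opt{} to have $j$ at $q$ then; the new specific request for $j$ at $r$ forces \opt{} to have $j$ at $r$, contradicting stationarity. Combining both cases, \opt{} must incur cost at least $1$ in the interval, which is the claim.

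The only mildly delicate step will be the bookkeeping for case (i): I need to make sure that the simulated general requests introduced in \textsc{Def-Gen}/\textsc{Def-Spec} (Lines 13--14) do not undermine the correspondence between elements of $\widehat{G}\cup \widehat{D}\cup \widehat{F}$ and genuine external requests. Since a simulated general request is only issued after a server vacates an already covered location, the location still corresponds to a previously requested point, so \opt{}'s obligation to keep a server there remains valid. With this observation the rest is a direct adaptation of the earlier lemma.
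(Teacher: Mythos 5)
Your proposal follows the paper's proof of this lemma essentially step for step: the same two phase-ending cases, the same appeal to \Cref{le:opt-cost:locations-disjoint} to count disjoint occupied locations, the same contradiction with a stationary \opt{} in case (i), and the same two-locations argument in case (ii). Your closing observation that simulated general requests only ever re-cover locations at which a genuine request of the phase appeared is correct and is a point the paper leaves implicit, so that part is a small improvement in care.

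The one place where you are more explicit than the paper is, however, exactly the delicate spot, and your stated justification does not cover all sub-cases. You claim \( r \notin p(\widehat{G}\cup\widehat{D}\cup\widehat{F}) \) ``otherwise the algorithm would have served it without ending the phase.'' This is true when the trigger is a general request (\textsc{Def-Gen} ends the phase only for \( r \notin p(G\cup D\cup F) \)), but \textsc{Def-Spec} starts a new phase whenever \( r \neq p(j) \) and \( |G|+|D|+|F| = k \), even if \( j \notin F \) and \( r \) coincides with the position of some other server of \( G\cup D \); having some server at \( r \) does not serve a specific request for \( j \). In that sub-case the set \( p(\widehat{G}\cup\widehat{D}\cup\widehat{F})\cup\{r\} \) has only \( k \) elements and the counting yields no contradiction. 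Concretely, with \( k=2 \) and locations \( v_1,v_2,v_3 \): a phase can reach the state where server \( 1 \) sits at \( v_1 \) in \( D \) and server \( 2 \) at \( v_3 \) in \( G \), and a specific request for server \( 2 \) at \( v_1 \) then ends the phase although \( v_1 \in p(\widehat{D}) \); an \opt{} that placed server \( 2 \) at \( v_1 \) and server \( 1 \) at \( v_3 \) already at \( t_{\text{start}} \) serves every request of the interval without moving. Note that the paper's own proof makes the same tacit assumption (the ``\( +1 \)'' in ``\( |\widehat{G}|+|\widehat{D}|+1 \) distinct locations''), so you have reproduced its gap rather than introduced a new one; closing it requires either interpreting the phase-end test as in the paper's prose (end only when \( |G|+|D|+|F|\le k \) would actually be violated, in which case \( r \) is indeed uncovered and your justification goes through) or a separate argument for specific-request triggers with \( j\notin F \) at an already covered location.
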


\begin{proof}
    Consider any phase that ends.
    At the end of the phase it holds either that (i) \( |\widehat{G}| + |\widehat{D}| + |\widehat{F}| = k \) or (ii) a server \( j \in F \) is specifically requested at a location different to \( p^{*}(j) \).
    Assume \opt{} has had no movement.
    Then, during the time interval, \opt{} has its servers at least at the locations \( p(\widehat{G} \cup \widehat{D} \cup \widehat{F}) \cup \{r\} \) since at each of these locations a request appeared (at \( t_{\text{start}} + 1 \), \opt{} has a server on the point of the first request).
    Also, \opt{} must have each server of \( F \) at the identical location as \textnormal{\textsc{Def}}.

    In the case of (i), \opt{} covers due to \Cref{le:opt-cost:locations-disjoint} \( |\widehat{G}| + |\widehat{D}| + 1 > k - |\widehat{F}| \) distinct locations using \( k - |\widehat{F}| \) servers which cannot be.
    In case of (ii), \( j \) is specifically requested at two different locations meaning that \opt{} must have placed \( j \) at two different locations.
    In any case, there is a contradiction and, thus, \opt{} has cost at least \( 1 \).
\end{proof}

Next, we show that depending on the configuration of \opt{} after a phase, the optimal cost can even be higher during the phase.

\begin{restatable}{lem}{SecondAlgorithmOptCostNotAllCovered}\label{le:opt-cost:high-when-not-all-covered}
    For a phase, let \( p_{1} \) be the number of locations of \( p(\widehat{G} \cup \widehat{D}) \) where no server of \opt{} is located and let \( p_{2} \) be the number of servers in \( \widehat{F} \) that \opt{} has not located where they were specifically requested.
    Let \( p:= p_{1} + p_{2} \), then \opt{} has cost at least \( p \) during the time interval right after \( t_{\text{start}} \) until right after \( t_{\text{end}} \).
\end{restatable}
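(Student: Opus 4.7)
The plan is to generalize the argument of \Cref{le:opt-cost:1-for-each-phase} by carefully identifying one distinct \opt{}-movement for every location counted in $p_{1}$ and every server counted in $p_{2}$, and then verifying that all of these $p_{1}+p_{2}$ movements are pairwise distinct so their costs add up.

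First I would handle the $p_{1}$ contribution. Fix any location $\ell \in p(\widehat{G} \cup \widehat{D})$ that is not covered by \opt{} right after $t_{\text{end}}$. By the definition of $G$ and $D$ in \textnormal{\textsc{Def}}, the server of the algorithm placed at $\ell$ at the end of the phase was put there in response to a general request at $\ell$ during the phase, so \opt{} must have had some server at $\ell$ at the time of that request. Since \opt{} does not have a server at $\ell$ right after $t_{\text{end}}$, it must have performed at least one movement leaving $\ell$ during the phase, contributing cost at least $1$.

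Next I would handle the $p_{2}$ contribution. Fix any $j \in \widehat{F}$ that \opt{} has not placed, at the end of the phase, at the position $r$ where $j$ was last specifically requested during this phase. At the time of the specific request, \opt{} must have had server $j$ at $r$, so to end the phase elsewhere \opt{} must have moved $j$ away from $r$ at some later point, contributing cost at least $1$.

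Finally, I would argue that these $p_{1}+p_{2}$ movements are all distinct, so the total cost is at least $p_{1}+p_{2}=p$. The $p_{1}$ movements leave distinct locations in $p(\widehat{G} \cup \widehat{D})$ and so are pairwise distinct; the $p_{2}$ movements are distinct because they involve distinct servers, each leaving its own last specific-request position in $p(\widehat{F})$. The main (and only real) subtlety is ruling out overlap between the two groups: a $p_{1}$-movement leaves some $\ell \in p(\widehat{G} \cup \widehat{D})$ while a $p_{2}$-movement leaves some $r \in p(\widehat{F})$, and by \Cref{le:opt-cost:locations-disjoint} these two sets are disjoint right after $t_{\text{end}}$. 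Hence no single movement of \opt{} can be counted on both sides, and summing over all $p_{1}+p_{2}$ movements yields the claimed lower bound on \opt{}'s cost during the phase.
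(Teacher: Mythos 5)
Your proposal is correct and takes essentially the same route as the paper: it extracts one \opt{} movement for each uncovered location of \( p(\widehat{G}\cup\widehat{D}) \) (since a request appeared there during the phase) and one movement of each server of \( \widehat{F} \) that \opt{} does not keep at its specific-request position. The only difference is that you make the non-double-counting step explicit via the disjointness of \( p(\widehat{G}\cup\widehat{D}) \) and \( p(\widehat{F}) \) from \Cref{le:opt-cost:locations-disjoint}, which the paper's proof leaves implicit.
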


\begin{proof}
    Right after \( t_{end} \), it holds that there are \( p_{1} \) locations where requests appeared during the phase and where \opt{} has no server.
    Consider any such location.
    Since a request appeared there, \opt{} must have had a server on it during the phase.
    Since there is no server on it after the phase, \opt{} moved its server for a cost of \( 1 \).
    Consider any server contributing to \( p_{2} \).
    During the phase, \opt{} must have had the respective server on the position where it is specifically requested.
    Since this is no longer the case after the phase, \opt{} moved it for a cost of \( 1 \).
\end{proof}

Intuitively, \opt{} must have a server at all locations that appear during the phase, and hence, not covering all implies an equal movement cost.
Observe that for any phase but the last the cost is \( \max\{1, p\} \) while in the last phase, the cost is \( p \), as it ends by definition at \( t_{end} \).

\paragraph{On the Cost of \textnormal{\textsc{Def}} in a Phase.}
Next, we analyze the cost of \textnormal{\textsc{Def}} for any but the first phase.
In the first phase, all servers are frozen; therefore, \textnormal{\textsc{Def}} has cost zero.
Before we start, we state \Cref{le:second-algorithm:on-g-no-defensive}.
It holds because our algorithm acts defensively for all servers.

\begin{restatable}{lem}{SecondAlgorithmOnGNoDefensive}\label{le:second-algorithm:on-g-no-defensive}
    If at any time on a location \( \ell \), a server \( j \) joins \( G \), there is no server \( s \) with \( p^{*}(s) = \ell \) until the next time a server is specifically requested on \( \ell \).
\end{restatable}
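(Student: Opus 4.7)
The plan is to examine the unique code path by which a server joins $G$, derive the state invariants that must hold at that moment, and then observe that $p^{*}$-values only change via specific requests.

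The only place where a server $j$ is assigned to $G$ is \textsc{Def-Gen} Line~9 (possibly triggered by a simulated request from Line~14). For this line to execute, all three guards along the path must hold at location $\ell=r$: namely (a) $r \notin p(G \cup D \cup F)$, (b) $|G|+|D|+|F| < k$, and crucially (c) $r \notin p^{*}(C) \cup p^{*}(G)$. I would then do a case analysis over the four sets a hypothetical server $s$ with $p^{*}(s)=\ell$ might lie in at that moment: the cases $s \in C$ and $s \in G$ are immediately ruled out by (c). For $s \in D$, I would argue that the only way to enter $D$ in the current phase is via \textsc{Def-Gen} Lines~11--12, which moves $s$ onto its own $p^{*}(s)$; since $p^{*}$ does not change except through specific requests, $s \in D$ implies $p(s)=p^{*}(s)$ throughout its stay in $D$, so $p^{*}(s)=\ell$ would force $\ell \in p(D)$, contradicting (a). For $s \in F$ (in any phase beyond the first, where every server is reset into $C$ at the phase start), $s$ was placed into $F$ by a specific request in the current phase, either by being moved onto the request location (\textsc{Def-Spec} Line~7) or by already being there (Line~14); once in $F$, $s$ is never selected for movement (since specific requests for $s$ end the phase by Line~2, and only $C$-servers can be picked to serve general requests). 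Hence $p(s)=p^{*}(s)$, and again $p^{*}(s)=\ell$ contradicts (a).

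Having established the invariant at the moment $j$ joins $G$, the forward-in-time part is almost a tautology: $p^{*}(s)$ is updated only when $s$ is specifically requested, and it becomes equal to $\ell$ only through a specific request for $s$ at location $\ell$, which is precisely the event the lemma allows us to stop at. Thus no new server $s$ with $p^{*}(s)=\ell$ can appear in the interval claimed by the lemma.

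The only step requiring a bit of care is tracking the $D$ and $F$ invariants across the simulated general requests of \textsc{Def-Gen} Line~14 and \textsc{Def-Spec} Line~13; I would note that a simulated general request is just another call to \textsc{Def-Gen} and therefore obeys exactly the same guards (a)--(c), so the case analysis above applies verbatim. The minor pitfall to guard against is phase~1, in which all servers begin in $F$; but in phase~1 no server ever joins $G$ (the first general request initiates the next phase), so the statement is vacuous there.
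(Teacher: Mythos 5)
Your proposal is correct and follows essentially the same route as the paper's proof: reduce to the moment \( j \) joins \( G \) (since \( p^{*} \) only changes at specific requests) and then use the algorithm's guards to rule out any server with \( p^{*} = \ell \). The paper compresses your \( D \)- and \( F \)-cases into the observations that \( s \notin F \) and that otherwise \( s \) would have acted defensively and joined \( D \); your version merely spells out explicitly that servers in \( D \) and \( F \) sit at their \( p^{*} \)-locations and handles the simulated requests and phase~1, which is the same argument in more detail.
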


\begin{proof}
    Consider such a location and assume, there is a \( s \) with \( p^{*}(s) = \ell \).
    Then, since \( s \) was not specifically requested on \( \ell \) since \( j \) joined \( G \) on \( \ell \), \( p^{*}(s) = \ell \) at the time when \( j \) joined \( G \) on \( \ell \).
    Also, \( s \) was not in \( F \).
    Then, it contradicts our algorithm that \( j \) joined \( G \) on \( \ell \), as \( s \) would act defensively and join \( D \) on \( \ell \).
\end{proof}

Next, we show by \Cref{le:second-algorithm:cost-in-phase} how the cost of \textnormal{\textsc{Def}} in any but the first phase is bounded by the sizes of the sets \( G \), \( D \) and \( F \).
As we will see, we need to distinguish the servers more carefully than with the sets \( C \), \( G \), \( D \), and \( F \).
During a phase, servers that already are in \( G \) or \( D \) can transition back to \( C \) when some other server gets frozen at their current location.
For servers of \( D \), this can happen only once.
To reflect this, we split \( C \) into \( C_{1} \) and \( C_{2} \), and we split \( G \) into \( G_{1} \) and \( G_{2} \).
The sets \( C_{2} \) and \( G_{2} \) contain servers that were previously in \( D \).
When a server transitions back to \( C \), the transition itself increases the cost of the respective server to reach its final set by \( 1 \) or \( 2 \).
To capture this, we define \( e^{s}_{x} \) to be the event that server \( s \) transitions back to \( C \) and incurs an additional cost of \( x \) in the current phase.
\( E_{x} \) is the respective set of events of the current phase.
Also, among others, we introduce the following sets:
\( F_{1} \) is the set of frozen servers that are frozen at the same location as they were specifically requested before.
\( F_{1a} \subseteq F_{1} \) is the subset of these servers for which it holds that for each \( j \in F_{1a} \) there was a server \( s \in D \) at the location at which \( j \) gets frozen.
\( F_{1b} = F_{1} \setminus F_{1a} \) is the respective remaining set of servers of \( F_{1} \).
\( F_{2} = F \setminus F_{1} \) is the set of servers that get frozen on a location different from their last specific location.

\begin{restatable}{lem}{SecondAlgorithmCostInPhase}\label{le:second-algorithm:cost-in-phase}
    In any phase \( i > 1 \), the cost of \textnormal{\textsc{Def}} is
    \begin{align*}
        c^{i} \leq |\widehat{G}| + 2\,\left(|\widehat{D}| + |\widehat{F_{1}}| + |\widehat{F_{2}}| \right) + |\widehat{F_{2}}| + |E_{1}| + 2 \, |E_{2}|.
    \end{align*}
\end{restatable}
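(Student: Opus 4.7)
The plan is to account for every move of \textnormal{\textsc{Def}} in phase $i$ on a per-server basis. For each server $j$, let $\mu(j)$ denote the number of moves $j$ performs during the phase, so $c^{i} = \sum_{j} \mu(j)$. We partition each move of $j$ into either a \emph{final move} on the trajectory that takes $j$ to its end-of-phase set, or a \emph{wasted move} that was part of an earlier trip of $j$ into $G \cup D$ cut short when $j$ got re-assigned to $C$ after another server's specific request landed on $j$'s location.

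By the definition of $e^{s}_{x}$, each such wasted trip contributes exactly $x \in \{1, 2\}$ to the event set $E_{x}$: a trip $C \to G \to C$ or a direct defensive trip $C \to D \to C$ costs $1$, while a combined trip $C \to G \to D \to C$ costs $2$. Summed over all servers, the wasted moves therefore amount to exactly $|E_{1}| + 2 |E_{2}|$. It remains to bound the final moves by $|\widehat{G}| + 2|\widehat{D}| + 2|\widehat{F_{1}}| + 3|\widehat{F_{2}}|$. We do this by traversing all possible trajectories: a server in $\widehat{G}$ contributes one move (the entry into $G$); a server in $\widehat{D}$ contributes at most two moves (either a direct $C \to D$ or the path $C \to G \to D$); a server in $\widehat{F_{1}}$ contributes at most two moves, where the worst case is $C \to G \to F_{1}$ and \Cref{le:second-algorithm:on-g-no-defensive} is invoked to conclude that $j$'s $G$-location differs from $p^{*}(j)=r$, so the freeze itself costs one move on top of the entry into $G$; and a server in $\widehat{F_{2}}$ contributes at most three moves, realized by the deepest trajectory $C \to G \to D \to F_{2}$, where $r \neq p^{*}(j)$ separates the $D$-location from the freeze location and therefore demands a third move. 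Each simulated general request triggered when a server vacates a location of $G \cup D$ introduces a replacement move of another server, but that replacement is itself the \emph{first} move of the replacement server's trajectory and is thus already covered by its end-state term in the bound, or by $E_{1} \cup E_{2}$ if that server is later bumped in turn.

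The main obstacle is to verify that no move is double-counted and that the cascade of simulated general requests -- a defensive move from $G$ to $D$ simulates a request that may in turn be served defensively and simulate yet another request -- is absorbed cleanly. I would resolve this by induction on the request sequence within phase $i$, maintaining the invariant that after processing the $t$-th request the accumulated cost equals the sum of final-plus-wasted moves over all servers, and that the right-hand side of the lemma evaluated on the current values of the sets and events dominates this sum. Each case of \textsc{Def-Gen} and \textsc{Def-Spec}, together with every simulated request it spawns, either enlarges a final set by one and adds the matching constant to the bound, or creates a new event in $E_{1} \cup E_{2}$ whose weight is exactly the additional cost of the server returning to $C$; \Cref{le:second-algorithm:on-g-no-defensive} is the key lemma used in the induction to rule out trajectories that would place a server simultaneously into $G$ at its own $p^{*}$-location.
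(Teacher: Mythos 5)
Your proposal is correct and takes essentially the same route as the paper: it bounds each server's cost by the worst direct trajectory into its end-of-phase set (\(1\) for \(\widehat{G}\), \(2\) for \(\widehat{D}\) and \(\widehat{F_{1}}\), \(3\) for \(\widehat{F_{2}}\)) and charges excursions that return a server to \(C\) -- including those created by cascades of simulated general requests -- to the events of \(E_{1}\) and \(E_{2}\), exactly as the paper does via its exhaustive transition table and transition graph. The only (harmless) divergence is in pinning down \(e^{s}_{2}\): the paper reserves weight \(2\) for \(D\rightarrow C_{2}\) crossings whose server ends the phase in \(\widehat{C_{2}}\cup\widehat{G_{2}}\cup\widehat{F_{1}}\) and assigns weight \(1\) otherwise, whereas you assign weight \(2\) to every \(C\rightarrow G\rightarrow D\rightarrow C\) excursion; both readings of the informal definition of \(e^{s}_{x}\) yield the claimed inequality.
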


\begin{proof}[Proof of \Cref{le:second-algorithm:cost-in-phase}]
    To prove this, we consider each time step in a phase of the algorithm.
    First, consider any time step in which either a general request on a position that is covered by \textnormal{\textsc{Def}} by \( G \cup D \cup F \) happens, or a specific request for a frozen server at its position appears.
    We exclude all such time steps from the phase in the following because the algorithm takes no action in them.

    \paragraph{An Overview on all Actions.}
    From now on, for any time step, we analyze the cost and how servers move between \( G \), \( D \), \( F_{1} \), and \( F_{2} \).
    We denote by an arrow that a server leaves a set and joins another set as here: \( j: G \rightarrow D \) (\( j \) leaves \( G \) and joins \( D \)).
    We call such a movement of one server between sets a \emph{transition}.
    Note that \( C \) is always given as all servers that are not in \( G \cup D \cup F \).

    Observe that in the case that a request requires no movement of a server \( j \in C \), there is no cost.
    Thus, there are transitions \( j: C \rightarrow G \), \( j: C \rightarrow D \), \( j: C \rightarrow F_{1} \) and \( j: C \rightarrow F_{2} \) of cost zero.
    Next, we consider only the remaining cases.

    There are two kinds of time steps depending on the current request.
    Case (1) are time steps with a specific request, and case (2) are time steps without one.

    Consider case (1).
    There are two kinds of such a time step.
    Either (1.a), a server ends up in \( F_{1} \) after the time step or (1.b) it ends up in \( F_{2} \).
    When a general request is simulated, during the cases of (1), there can be additional transitions and cost exactly as in time steps of kind (2) below.

    In the case of type (1.a), server \( j \) is specifically requested at \( p^{*}(j) \).
    If (1.a.1) \( j \in D \), this incurs no cost and \( j:D \rightarrow F_{1} \).
    If (1.a.2) \( j \in G \), we have a cost of \( 1 \) for \( j: G \rightarrow F_{1} \).
    In case (1.a.2.a), there is no server of \( D \) on \( r \), else (1.a.2.b) there is \( s: D \rightarrow C \) for a cost of zero.

    In the case of type (1.b), server \( j \) is specifically requested at some location \( r \neq p^{*}(j) \).
    If (1.b.1) \( r = p(j) \), we have cost \( 0 \) and \( j: G \rightarrow F_{2} \).
    Else, (1.b.2) \( r \neq p(j) \).
    If (1.b.2.a) \( j \in C \), either (1.b.2.a.1) there is no server \( s \in G \cup D \) on \( r \), (1.b.2.a.2) there is a server \( s \in G \) at \( r \), or (1.b.2.a.3) there is a server \( s \in D \) on \( r \).
    In any case, we have cost \( 1 \) for \( j: C \rightarrow F_{2} \).
    Also, in the case of (1.b.2.a.2) \( s: G \rightarrow C \), or in the case of (1.b.2.a.3) \( s: D \rightarrow C \).
    The missing cases are (1.b.2.b) \( j \in G \) and (1.b.2.c) \( j \in D \) combined with (1.b.2.b.1 and 1.b.2.c.1) there is no server \( s \in G \cup C \) on \( r \), (1.b.2.b.2 and 1.b.2.c.2) there is \( s \in G \) on \( r \), or (1.b.2.b.3 and 1.b.2.c.3) there is \( s \in D \) on \( r \).
    In any case of (1.b.2.b), we have a cost of \( 1 \) for \( j:G \rightarrow F_{2} \).
    In any case of (1.b.2.c), we have a cost of \( 1 \) for \( j:D \rightarrow F_{2} \).
    In the cases (1.b.2.b.2) and (1.b.2.c.2), we have \( s: G \rightarrow C \), and in the cases (1.b.2.b.3) and (1.b.2.c.3), we have \( s: D \rightarrow C \).

    Consider a time step of kind (2).
    Here, there are two types of requests that can occur:
    Either (2.a) a general request on some new location appears and we have cost of \( 1 \) with \( j: C \rightarrow G \) for some \( j \), or (2.b) a general request on a position of \( p^{*}(j) \) for some \( j \in G \cup C \) appears.
    In the case of (2.b), we have a cost of \( 1 \) for \( j:G \rightarrow D \) and an additional cost of \( 1 \) for the server \( s: C \rightarrow G \) taking \( j \)'s place.
    Note that the server \( s \) cannot join \( D \), as else, \( j \) would not have been in \( G \) at the same location.

    For better readability, consider the table below listing all cases with their respective transitions and costs.
    \begin{center}
        \begin{tabular}{l|lll}
            Case      & Transition / Cost                 &                              \\ \hline
            1.a.1     & \( j: D \rightarrow F_{1} \) / 0  &                              \\
            1.a.2.a   & \( j: G \rightarrow F_{1} \) / 1, &                              \\
            1.a.2.b   & \( j: G \rightarrow F_{1} \) / 1, & \( s: D \rightarrow C \) / 0 \\ \hline
            1.b.1     & \( j: G \rightarrow F_{2} \) / 0  &                              \\
            1.b.2.a.1 & \( j: C \rightarrow F_{2} \) / 1  &                              \\
            1.b.2.a.2 & \( j: C \rightarrow F_{2} \) / 1, & \( s: G \rightarrow C \) / 0 \\
            1.b.2.a.3 & \( j: C \rightarrow F_{2} \) / 1, & \( s: D \rightarrow C \) / 0 \\
            1.b.2.b.1 & \( j:G \rightarrow F_{2} \) / 1,  &                              \\
            1.b.2.b.2 & \( j:G \rightarrow F_{2} \) / 1,  & \( s: G \rightarrow C \) / 0 \\
            1.b.2.b.3 & \( j:G \rightarrow F_{2} \) / 1,  & \( s: D \rightarrow C \) / 0 \\
            1.b.2.c.1 & \( j:D \rightarrow F_{2} \) / 1,  &                              \\
            1.b.2.c.2 & \( j:D \rightarrow F_{2} \) / 1,  & \( s: G \rightarrow C \) / 0 \\
            1.b.2.c.3 & \( j:D \rightarrow F_{2} \) / 1,  & \( s: D \rightarrow C \) / 0 \\ \hline
            2.a       & \( j: C \rightarrow G \) / 1      &                              \\
            2.b       & \( j:G \rightarrow D \) / 1,      & \( s: C \rightarrow G \) / 1 \\
        \end{tabular}
    \end{center}

    \paragraph{Restrictions to the Actions.}
    When looking at the cases above, we notice that any server \( j \) has only limited possibilities to be moved between the sets \( C \), \( G \), \( D \), \( F_{1} \), and \( F_{2} \).
    The most obvious limitation is that \emph{no server can ever leave} \( F_{1} \) or \( F_{2} \).
    Any server in one of these sets can not incur further costs within the phase.

    Now, observe that there are only seven cases in which a server \( s \) in \( G \) or \( D \) can end up in \( C \) again; cases (1.a.2.b), (1.b.2.a.2), (1.b.2.a.3), (1.b.2.b.2), (1.b.2.b.3), (1.b.2.c.2) and (1.b.2.c.3).
    In any case, the transition of \( s \) does not incur a cost.
    In all these cases, \( s \) was at a position at which some other server \( j \) was specifically requested.
    Note, if \( s \in D \), then after this time step, the position \( p^{*}(s) \) will always be covered by \( j \) for the current phase.
    That means while \( s \) joins \( C \) again, it can no longer join \( D \).
    To reflect this, split \( C \) and \( G \) into two sets: \( C = C_{1} \cup C_{2} \) and \( G = G_{1} \cup G_{2} \).
    At the beginning of the phase, all servers are in \( C_{1} \).
    In the example above, we say \( s \) joins the separate set \( C_{2} \) from which it can transition to \( G_{2} \), but any server in \( C_{2} \cup G_{2} \) cannot transition to \( D \) any more.
    Also, in all four cases, some server \( j \) must join \( F_{1a} \cup F_{2} \).
    Thus, \emph{the total number of times a server can transition to \( C_{2} \) is bounded by the total number of servers in \( F_{1a} \cup F_{2} \) at the end of the phase}.
    Besides this, note that any transition of a server between two sets has a cost at most \( 1 \).

    Next, we consider the following graph in \Cref{figure:transition-graph-2} that depicts all possible transitions between the sets with an over-approximation of the cost of a transition as the weight of the respective edge.

    \begin{figure}[htb]
        \centering
        \includegraphics[page=1, width=0.8\textwidth, clip=true, trim= 0cm 12.9cm 15.5cm 0cm]{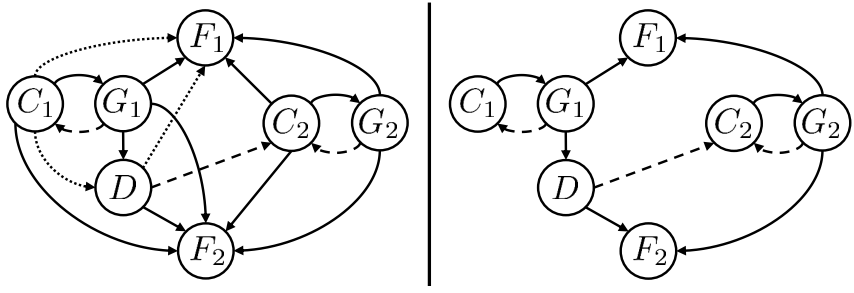}
        \caption{\textit{Left: }All possible transitions of a server within a phase.
        At the beginning of each phase all servers are in \( C_{1} \).
        Every node reflects one of the sets.
        An edge \( (U, V) \) means that a server in set \( U \) can transition to set \( V \) with a cost of at most the weight of the edge.
        Every dashed or dotted edge weighs \( 0 \), and every other edge \( 1 \).
        A dashed edge can only be traversed if another server joins \( F_{1a} \cup F_{2} \) in the same time step.
        If server \( s \) traverses \( (D,C_{2}) \) and ends up in \( \widehat{C_{2}} \cup \widehat{G_{2}} \cup \widehat{F_{1}} \), event \( e^{s}_{2} \) happens.
        Else, the crossing of a dashed edge implies that \( e^{s}_{1} \) happens.
        \textit{Right:} The transitive reduction of the graph (ignoring dashed edges) after removing the zero cost forward edges (dotted).}
        \label{figure:transition-graph-2}
    \end{figure}
    Note that the cost for a sequence of transitions of a server from set \( U \) to set \( V \) can be upper bounded by finding the longest path from \( U \) to \( V \) in the graph \( G \) of \Cref{figure:transition-graph-2}.
    Thus, to bound \textnormal{\textsc{Def}}'s cost, we can consider \( G \) without all zero cost forward edges.
    Additionally, it suffices to consider the transitive reduction of \( G \) (ignoring dashed edges).
    This simplification of \( G \) is \( G' \) depicted on the right of \Cref{figure:transition-graph-2}.

    \paragraph{On the Cost of a Phase.}
    Next, we bound the total cost of a phase.
    We argue based on \( G' \).
    First, consider all servers that end up in their final set without traversing a dashed edge (no event of \( E_{1} \) or \( E_{2} \) happens for them).
    For any such server \( j \), we have the following cost:
    If \( j \in \widehat{G_{1}} \), the cost is \( 1 \).
    If  \( j \in \widehat{D} \), or \( j \in \widehat{F_{1}} \), the cost is \( 2 \), and if \( j \in \widehat{F_{2}} \), the cost is \( 3 \).
    Second, consider the servers that end up in their final set traversing a dashed edge.
    For any such server \( j \), the cost increases by \( 2 \) only once if \( j \in \widehat{C_{2}} \cup \widehat{G_{2}} \cup \widehat{F_{1}} \) due to the crossing of the edge \( (D, C_{2}) \), i.e., an event \( e^{i}_{2} \) happens.
    Every other increase in the cost due to a dashed edge is at most \( 1 \) when an event in \( E_{1} \) happens.
    Then, the total cost of the phase \( i \) can be bounded by:
    \begin{align*}
        c^{i} & \leq |\widehat{G}| + 2 \, |\widehat{D}| + 2 \, |\widehat{F_{1}}| + 3 \, |\widehat{F_{2}}| + |E_{1}| + 2 \, |E_{2}|
    \end{align*}
\end{proof}

Next, we get rid of the set of events in the bound.
Intuitively, simplifying the bound can be achieved by a very fine-grained analysis of how events happen.
We try to bound the number of events in \( |\widehat{F_{2}}| \) as far as possible, because for each server of \( \widehat{F_{2}} \) the optimal solution must have a movement since it was lastly specifically requested.
Costs bounded by \( |\widehat{F_{2}}| \) can later be charged to these movements.

\begin{restatable}{lem}{SecondAlgorithmCostInPhaseReframed}\label{le:second-algorithm:cost-in-phase-reframed}
    In any phase \( i > 1 \), the cost of \textnormal{\textsc{Def}} is
    \begin{align*}
        c^{i} \leq 2 \, (|\widehat{C_{2}}| + |\widehat{G}| + |\widehat{D}| + |\widehat{F}|) + 5 \, |\widehat{F_{2}}| + 3 \, |\widehat{G_{2}}|.
    \end{align*}
\end{restatable}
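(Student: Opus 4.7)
The plan is to take the bound from Lemma~\ref{le:second-algorithm:cost-in-phase} and eliminate the event terms $|E_{1}|$ and $|E_{2}|$ by relating them to the final set sizes $|\widehat{C_{2}}|$, $|\widehat{G_{2}}|$, $|\widehat{F_{1}}|$, and $|\widehat{F_{2}}|$. The guiding observation is that every crossing of a dashed edge corresponds to a server being displaced by a freeze, so the events can be counted by looking at (i) which server gets frozen (in $\widehat{F_{1a}}$ or $\widehat{F_{2}}$) and (ii) where the displaced server eventually ends up.

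First, I would make a classification pass over backward transitions. Inspecting the case table in the proof of Lemma~\ref{le:second-algorithm:cost-in-phase}, every $s\!:\!G\to C$ arises only in sub-cases of~(1.b.2), where the freezing server~$j$ joins $\widehat{F_{2}}$, and every $s\!:\!D\to C_{2}$ arises in case~(1.a.2.b) (with $j\in\widehat{F_{1a}}$) or in sub-cases of~(1.b.2) (with $j\in\widehat{F_{2}}$). Since each $j\in\widehat{F}$ is frozen at most once and can displace at most one server, this gives the global count $|E_{1}|+|E_{2}|\le|\widehat{F_{1a}}|+|\widehat{F_{2}}|$ on the trigger side.

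Second, I would count $|E_{2}|$ from the end-state side. By definition $e^{s}_{2}$ requires $s$ to have traversed $(D,C_{2})$, and since from $C_{2}$ or $G_{2}$ no server returns to $D$, each server triggers at most one $e_{2}$-event. Hence $|E_{2}|$ is at most the number of servers finishing the phase in $\widehat{C_{2}}\cup\widehat{G_{2}}\cup\widehat{F_{1}}$ that passed through $D$, giving
\begin{align*}
2|E_{2}|\le 2|\widehat{C_{2}}|+2|\widehat{G_{2}}|+2|\widehat{F_{1}}|.
\end{align*}
For $|E_{1}|$, the remaining backward transitions are either $G\to C$ moves, each charged to a distinct $j\in\widehat{F_{2}}$, or $(D,C_{2})$ traversals whose owner ends in $\widehat{F_{2}}$; both are bounded by $|\widehat{F_{2}}|$, so $|E_{1}|\le 2|\widehat{F_{2}}|$.

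Third, I would substitute into Lemma~\ref{le:second-algorithm:cost-in-phase} and collect terms using $|\widehat{F}|=|\widehat{F_{1}}|+|\widehat{F_{2}}|$ and $|\widehat{G}|=|\widehat{G_{1}}|+|\widehat{G_{2}}|$:
\begin{align*}
c^{i}&\le |\widehat{G}|+2|\widehat{D}|+2|\widehat{F_{1}}|+3|\widehat{F_{2}}|+2|\widehat{F_{2}}|+2|\widehat{C_{2}}|+2|\widehat{G_{2}}|+2|\widehat{F_{1}}|\\
&\le 2(|\widehat{C_{2}}|+|\widehat{G}|+|\widehat{D}|+|\widehat{F}|)+5|\widehat{F_{2}}|+3|\widehat{G_{2}}|,
\end{align*}
after rearranging and absorbing slack into the large coefficients on $|\widehat{G}|$ and $|\widehat{G_{2}}|$.

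The main obstacle I expect is the careful counting when a server bounces several times, e.g.\ $C_{1}\to G_{1}\to C_{1}\to G_{1}\to D\to C_{2}$: such a server contributes multiple $E_{1}$-events and possibly also an $E_{2}$-event, and the triggering freezes must be matched one-to-one to freeze-targets in $\widehat{F_{1a}}\cup\widehat{F_{2}}$ without double charging. The role of the asymmetric coefficients $2|\widehat{C_{2}}|$ and $3|\widehat{G_{2}}|$ in the target inequality is to absorb the bookkeeping slack: every server ending in $\widehat{C_{2}}$ or $\widehat{G_{2}}$ witnesses a distinct $(D,C_{2})$ crossing, and the extra $3|\widehat{F_{2}}|$ over the naive bound covers the mixed contribution of $E_{1}$-events and baseline $F_{2}$-cost.
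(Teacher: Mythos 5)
Your event bookkeeping on the trigger side and your bound \( |E_{1}| \leq 2\,|\widehat{F_{2}}| \) are fine, but the proof breaks at the treatment of \( E_{2} \). Bounding \( |E_{2}| \) purely from the end-state side as \( |E_{2}| \leq |\widehat{C_{2}}| + |\widehat{G_{2}}| + |\widehat{F_{1}}| \) is correct yet too lossy: substituting it into Lemma~\ref{le:second-algorithm:cost-in-phase} gives a term \( 4\,|\widehat{F_{1}}| \) (the baseline \( 2\,|\widehat{F_{1}}| \) plus \( 2\,|\widehat{F_{1}}| \) from \( 2\,|E_{2}| \)), whereas the target bound only provides \( 2\,|\widehat{F_{1}}| \) through \( 2\,|\widehat{F}| \). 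Your final ``absorb the slack'' step therefore amounts to claiming \( 2\,|\widehat{F_{1}}| \leq |\widehat{G}| + |\widehat{G_{2}}| + 2\,|\widehat{F_{2}}| \), which does not hold in general (take a phase where \( \widehat{F_{1}} \) is large and \( \widehat{G}, \widehat{G_{2}}, \widehat{F_{2}} \) are small), so the chain of inequalities does not establish the lemma.

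The missing idea is exactly the part of the paper's proof you skip: the \( E_{2} \)-events triggered by servers of \( \widehat{F_{1a}} \) must be split according to where the displaced server ends up, \( E_{2}(\widehat{F_{1a}}) = E_{2}(\widehat{F_{1a}}, \widehat{C_{2}}) \cup E_{2}(\widehat{F_{1a}}, \widehat{G_{2}}) \cup E_{2}(\widehat{F_{1a}}, \widehat{F_{1b}}) \), and the problematic third piece must \emph{not} be charged to \( |\widehat{F_{1}}| \) but to \( |\widehat{G_{2}}| + |\widehat{F_{2}}| \). That bound, \( |E_{2}(\widehat{F_{1a}}, \widehat{F_{1b}})| \leq |\widehat{G_{2}}| + |\widehat{F_{2}}| \), is the genuinely nontrivial step: one shows that a server incurring such an event sits in \( G_{2} \) just before freezing, and then, using Lemma~\ref{le:second-algorithm:on-g-no-defensive} together with the rule that \textsc{Def-Select} prefers \( C_{1} \) over \( C_{2} \), that its vacated location ends the phase occupied by a distinct witness in \( \widehat{G_{2}} \cup \widehat{F_{2}} \). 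Without this matching argument the coefficient \( 2 \) on \( |\widehat{F}| \) (and hence the \( 2k+14 \) ratio downstream) is out of reach along your route, so the proposal as written has a real gap rather than being an alternative proof.
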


\begin{proof}[Proof of \Cref{le:second-algorithm:cost-in-phase-reframed}]
    For the analysis of the events, we need some more notation.
    Let \( E_{2}(\widehat{F_{2}}) \subseteq E_{2} \) be the set of events of \( E_{2} \) that are triggered by a server in \( \widehat{F_{2}} \).
    Similarly, let \( E_{2}(\widehat{F_{1a}}) \subseteq E_{2} \) be the set of events of \( E_{2} \) triggered by a server in \( \widehat{F_{1a}} \).
    Then, \( E_{2} = E_{2}(\widehat{F_{2}}) \cup E_{2}(\widehat{F_{1a}}) \).
    Now, we split up \( E_{2}(\widehat{F_{1a}}) \) further:
    Consider a set \( S \in \{\widehat{C_{2}}, \widehat{G_{2}}, \widehat{F_{1b}}\} \).
    We denote by \( E_{2}(\widehat{F_{1a}}, S) \) the set of events of \( E_{2} \) triggered by a server in \( \widehat{F_{1a}} \) such that the respective server for which the event happens ends up in \( S \).
    Then, \( E_{2}(\widehat{F_{1a}}) = E_{2}(\widehat{F_{1a}}, \widehat{C_{2}}) \cup E_{2}(\widehat{F_{1a}}, \widehat{G_{2}}) \cup E_{2}(\widehat{F_{1a}}, \widehat{F_{1b}}) \).

    For each server for which an event in \( E_{2}(\widehat{F_{1a}}, \widehat{F_{1b}}) \) happens, we can find a matching server in \( \widehat{G_{2}} \cup \widehat{F_{2}} \) as follows:
    If a server \( j \in \widehat{F_{1b}} \) incurs cost of two (and an event in \( E_{2}(\widehat{F_{1a}}, \widehat{F_{1b}}) \) happens), it is in \( G_{2} \) at some location \( \ell \) just before it joins \( F_{1b} \).
    Thereafter, because \( \ell \neq p^{*}(s) \) for all \( s \) (\Cref{le:second-algorithm:on-g-no-defensive}), only a server of \( \widehat{G_{1}} \), \( \widehat{G_{2}} \), or \( \widehat{F_{2}} \) can be on \( \ell \) at the end.
    If there is a server \( s \in \widehat{G_{1}} \) on \( \ell \), \( s \) was already in \( G_{1} \) when \( j \) joined \( G_{2} \), because servers of \( C_{1} \) are preferred over servers of \( C_{2} \).
    Thus, the only way that \( s \) moves on \( \ell \) can be that it was moved back to \( C_{1} \) before due to a server of \( F_{2} \) (due to \Cref{le:second-algorithm:on-g-no-defensive} the server cannot be in \( F_{1} \)).
    In total, for server \( j \), there is a unique server \( s \in \widehat{G_{2}} \cup \widehat{F_{2}} \).
    Therefore, \( |\widehat{G_{2}}| + |\widehat{F_{2}}| \geq |E_{2}(\widehat{F_{1a}}, \widehat{F_{1b}})| \).

    Additionally, we have the following:
    For any server \( s \) the event \( e^{s}_{2} \) can happen at most once, thus for \(  S \in \{\widehat{C_{2}}, \widehat{G_{2}}, \widehat{F_{1b}}\} \) it holds \( E_{2}(\widehat{F_{1a}}, S) \leq |S| \).
    Additionally, each server of \( \widehat{F_{2}} \) triggers at most one event, and thus \( |E_{1}| + |E_{2}(\widehat{F_{2}})| \leq |\widehat{F_{2}}| \).
    Using both inequalities and the bound on \( |E_{2}(\widehat{F_{1a}}, \widehat{F_{1b}})| \), we reframe the bound of \Cref{le:second-algorithm:cost-in-phase}:
    \begin{align*}
        c^{i} & \leq |\widehat{G}| + 2 \, |\widehat{D}| + 2 \, |\widehat{F_{1}}| + 3 \, |\widehat{F_{2}}| + |E_{1}| + 2 \, |E_{2}|                                      \\
              & \leq |\widehat{G}| + 2 \,(|\widehat{D}| + |\widehat{F}|) + |\widehat{F_{2}}| + |E_{1}| + 2 \, (|E_{2}(\widehat{F_{2}})| + |E_{2}(\widehat{F_{1a}})|)    \\
              & \leq |\widehat{G}| + 2 \,(|\widehat{D}| + |\widehat{F}|) + 3 \, |\widehat{F_{2}}| + 2\, |E_{2}(\widehat{F_{1a}}, \widehat{C_{2}})|                      \\
              & \phantom{\leq} + 2\, |E_{2}(\widehat{F_{1a}}, \widehat{G_{2}})| + 2\, |E_{2}(\widehat{F_{1a}}, \widehat{F_{1b}})|                                       \\
              & \leq |\widehat{G}| + 2 \, (|\widehat{C_{2}}| + |\widehat{D}| + |\widehat{F}|) + 3 \, |\widehat{F_{2}}| + 2\, |E_{2}(\widehat{F_{1a}}, \widehat{G_{2}})| \\
              & \phantom{\leq} + 2\, |\widehat{G_{2}}| + 2\, |\widehat{F_{2}}|                                                                                          \\
              & \leq 2 \, (|\widehat{C_{2}}| + |\widehat{G}| + |\widehat{D}| + |\widehat{F}|) + 5 \, |\widehat{F_{2}}| + 3 \, |\widehat{G_{2}}|.
    \end{align*}
\end{proof}

\paragraph{On the Charging-Scheme.}
Next, we charge the cost of \textnormal{\textsc{Def}} of a phase to movement costs of \opt{}.
From now on, we denote by the exponent \( i \) the respective object of the \( i \)-th phase.
First, let us split the cost of \textnormal{\textsc{Def}} of the \( i \)-th phase into the following:
\begin{align*}
     & c^{i}_{1} = 2 \,(|\widehat{C^{i}_{2}}| + |\widehat{G^{i}}| + |\widehat{D^{i}}| + |\widehat{F^{i}_{1}}| + |\widehat{F^{i}_{2}}|) &  & c^{i}_{2} = 5 \, |\widehat{F^{i}_{2}}| \\
     & c^{i}_{3} = 3 \, |\widehat{G_{2}}|
\end{align*}

By \Cref{le:opt-cost:1-for-each-phase}, we know that \opt{} has at least one movement \( o^{i} \) for each phase \( i \) except the last one.
We charge \( c^{i}_{1} \) to \( o^{i} \) for any phase but the last.
We charge \( c^{\text{last}}_{1} \) of the last phase to a movement contributing to \( o^{1} \), because \textnormal{\textsc{Def}} has cost zero during phase \( 0 \).

Next, for any server \( j \) with respect to the current phase, let \( t^{i}_{1} \) be the last time step before phase \( i \) in which \( j \) was specifically requested and let \( p(j, t^{i}_{1}) \) be the location at which it was requested back then.
Regarding any server \( j \in \widehat{F^{i}_{2}} \), we know that \( j \)'s location at the end of the current phase is different from \( p(j, t^{i}_{1}) \) and thus, \opt{} must have moved \( j \).
We charge \( c^{i}_{2} = 5 \, |\widehat{F^{i}_{2}}| \) by charging a cost of \( 5 \) to \opt{}'s last movement of \( j \) for each \( j \in \widehat{F^{i}_{2}} \).

The charging of \( c^{i}_{3} \) is a bit more complicated.
First, we charge additional costs to the movements of servers in \( \widehat{F^{i}_{2}} \) by matching \( |\widehat{F^{i}_{2}}| \) servers of \( \widehat{G^{i}_{2}} \) to the respective movement of \opt{}.
For the remaining \( |\widehat{G^{i}_{2}}| - |\widehat{F^{i}_{2}}| \) servers, we show \Cref{le:second-algorithm:movement-mapping-G-two}.
Intuitively, for the remaining servers, we observe that our algorithm needed to move them (as they are not in \( C_{2} \), and the servers of \( C_{1} \) were already used).
As a consequence, \opt{} also needs some movement as it needs to serve the same requests.
Using the servers of \( \widehat{F_{2}} \) and \Cref{le:second-algorithm:movement-mapping-G-two}, for each server of \( G^{i}_{2} \), there is exactly one movement of \opt{} of a server \( s \) since \( s \) was lastly specifically requested before the phase until the end of the current phase.
We charge the cost of \( 3 \) to that movement, and none of these movements receives more than one charge of the current phase.
However, if a server ends up multiple times in \( \widehat{C_{2}} \cup \widehat{G_{2}} \) without being specifically requested in between, the respective movement of \opt{} could potentially receive multiple charges.
We show that in between any two times in which a server ends up in \( \widehat{C_{2}} \cup \widehat{G_{2}} \), either the server is specifically requested, or the server triggering the event is in \( \widehat{F_{2}} \) (see \Cref{le:second-algorithm:no-degrations-without-partner}).
In the former case, we charge the cost of \( 3 \) as explained above.
In the latter case, we charge the respective cost of \( 3 \) due to the later event for server \( s \) to \( j \).

\begin{restatable}{lem}{SecondAlgorithmMovementMappingGTwo}\label{le:second-algorithm:movement-mapping-G-two}
    Assume \( 0 \leq x < |\widehat{G_{2}}| - |\widehat{F_{2}}| \) servers of \( \widehat{C_{2}} \cup \widehat{G_{2}} \) were moved by \opt{} since they were lastly specifically requested until the end of a phase.
    Then, \opt{} has \( |\widehat{G_{2}}| - |\widehat{F_{2}}| - x \) movements during the phase.
\end{restatable}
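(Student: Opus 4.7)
The plan is to prove $M \geq N - x$, where $M$ is the number of OPT movements during the phase and $N := |\widehat{G_2}| - |\widehat{F_2}|$, by constructing an injection from the $N$ remaining servers of $\widehat{G_2}$ (those not matched to a movement of an $\widehat{F_2}$-server in the preceding step of the charging scheme) into a pool of OPT \emph{witnesses}. Each witness is either a distinct OPT movement performed during the phase or a distinct server of $\widehat{C_2} \cup \widehat{G_2}$ that OPT has relocated at least once since that server's last specific request. Once such an injection is built, at most $x$ of the witnesses can be of the second type, forcing at least $N - x$ of them to be OPT movements during the phase.

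For each such $j$, I build the witness around the position $\ell_j := p(j)$ at the end of the phase, which is the location of the general request whose service via \textnormal{\textsc{Def-Select}} placed $j$ into $\widehat{G_2}$ from $C_2$. At the moment of that request, \Cref{le:second-algorithm:on-g-no-defensive} guarantees that no server $s$ satisfies $p^*(s) = \ell_j$, so OPT's server $q_j$ covering the request has $p^*(q_j) \neq \ell_j$ and must have been moved to $\ell_j$ by OPT at some point since $q_j$'s own last specific request. I then split on whether OPT performs any movement into or out of $\ell_j$ during the phase: if so, $j$'s witness is one such in-phase movement, and since distinct $j$'s have distinct $\ell_j$'s, these witnesses are pairwise different. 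Otherwise OPT keeps a single server $q_j$ stationary at $\ell_j$ throughout the phase, so its relocation onto $\ell_j$ happened strictly before the phase started, and $q_j$ itself is taken as $j$'s witness.

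The main obstacle is this stationary sub-case, where I must certify that $q_j \in \widehat{C_2} \cup \widehat{G_2}$ in ALG's end-of-phase classification, so that $q_j$ legitimately counts towards $x$. I plan to rule out the other possibilities by leveraging the structure of \textnormal{\textsc{Def}}: membership in $\widehat{F}$ is excluded because it would force a specific request on $q_j$ during the phase that either drags OPT's $q_j$ away from $\ell_j$ or updates $p^*(q_j)$ to $\ell_j$, contradicting \Cref{le:second-algorithm:on-g-no-defensive} at the moment $j$ joined $\widehat{G}$ at $\ell_j$. Ruling out $\widehat{D}$ and $\widehat{C_1} \cup \widehat{G_1}$ is the delicate core of the argument, where I would use the \textnormal{\textsc{Def-Select}} preference for $C_1$ over $C_2$ together with the fact that $C_1$ was empty at the moment $j$ was picked from $C_2$, so $q_j$ must have been displaced from $D$ (or from an earlier sojourn in $G_2$) and hence entered $\widehat{C_2} \cup \widehat{G_2}$ before the end of the phase. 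Residual sub-cases in which this elimination is inconclusive would be handled by reassigning $j$'s witness to an additional OPT in-phase movement that the algorithm's execution forces to exist. Injectivity of the overall mapping is then immediate, since the $\ell_j$'s are pairwise distinct and each stationary witness $q_j$ occupies a unique $\ell_j$ at the end of the phase.
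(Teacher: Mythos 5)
Your plan has two genuine gaps, and they sit exactly at the points you flag as delicate. First, the injectivity claim in the ``moving'' case is false as stated: an OPT movement has a source and a destination, so a single in-phase move of OPT from \( \ell_{j} \) to \( \ell_{j'} \) (both locations of servers of \( \widehat{G_{2}} \)) is a movement ``into or out of'' both marked locations and can be claimed as the witness of two different servers; distinctness of the \( \ell_{j} \)'s does not prevent this, and in the worst case it halves your count. Second, and more seriously, in the stationary case the server \( q_{j} \) that \opt{} parks at \( \ell_{j} \) for the whole phase need \emph{not} end the phase in \( \widehat{C_{2}} \cup \widehat{G_{2}} \). The end-of-phase classification of \( q_{j} \) is determined by what \textnormal{\textsc{Def}} does with \emph{its} copy of \( q_{j} \), which is completely decoupled from where \opt{} keeps \( q_{j} \): for instance, \( q_{j} \)'s last specific request may lie at some old location \( u \neq \ell_{j} \), \opt{} may have moved \( q_{j} \) to \( \ell_{j} \) long before the phase, and during the phase a general request at \( u \) makes \textnormal{\textsc{Def}} move its \( q_{j} \) defensively into \( D \), where it stays; then \( q_{j} \in \widehat{D} \) and contributes nothing to \( x \). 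Your exclusion argument only works for \( \widehat{F} \) (there the specific request would have to land on \( \ell_{j} \), contradicting \Cref{le:second-algorithm:on-g-no-defensive}); the claim that \( q_{j} \) ``must have been displaced from \( D \) and hence entered \( \widehat{C_{2}} \cup \widehat{G_{2}} \)'' does not follow, and the fallback of ``reassigning \( j \)'s witness to an additional \opt{} in-phase movement that the execution forces to exist'' is precisely the statement that still needs a proof -- locally, around \( \ell_{j} \), no such movement is visible.

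The paper avoids per-location witnesses altogether and argues globally: the \( |\widehat{C_{2}}| + |\widehat{G_{2}}| - x \) servers of \( \widehat{C_{2}} \cup \widehat{G_{2}} \) that \opt{} did not move since their last specific request are pinned to locations now occupied by frozen servers, and the correctly placed servers of \( \widehat{F} \) are pinned as well; counting how many \opt{} servers remain to cover the \( |\widehat{G}| + |\widehat{D}| \) locations that saw only general requests, and using \( |\widehat{C_{1}}| \leq |\widehat{F_{2}}| \) (which follows from the \( C_{1} \)-over-\( C_{2} \) preference plus \Cref{le:second-algorithm:on-g-no-defensive}), one gets at least \( |\widehat{G_{2}}| - |\widehat{F_{2}}| - x - p_{1} \) such locations left uncovered at the end of the phase; \Cref{le:opt-cost:high-when-not-all-covered} then converts uncovered locations and misplaced frozen servers into in-phase movements of \opt{}. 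The ``forced extra movements'' you hope to invoke are exactly what this availability count produces, and it produces them only in aggregate, not attached to individual locations \( \ell_{j} \). If you want to salvage a local argument, you would essentially have to re-derive this global pigeonhole bound, so I recommend restructuring your proof around it rather than around a witness injection.
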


\begin{proof}
    If \( x \) servers of \( \widehat{C_{2}} \cup \widehat{G_{2}} \) were moved by \opt{}, we know that \( |\widehat{C_{2}}| + |\widehat{G_{2}}| - x \) servers of \(  \widehat{C_{2}} \cup \widehat{G_{2}} \) are the entire phase on the location at which they were lastly specifically requested.
    For all these servers, another server was specifically requested at their location.
    By the algorithm, we know that there are \( |\widehat{G}| + |\widehat{D}| \) locations where only general requests appeared during the phase.
    Assume that at the end of the phase, \opt{} has \( p_{1} \geq 0 \) many servers of \( \widehat{F} \) not at the location at which they were specifically requested.
    Then the optimal solution can cover at the end \( |\widehat{C}| + |\widehat{G}| + |\widehat{D}| - |\widehat{C_{2}}| - |\widehat{G_{2}}| + x + p_{1} \) locations of \( \widehat{G} \cup \widehat{D} \).
    Therefore, the number of locations of \( \widehat{G} \cup \widehat{D} \) that are not covered by \opt{} is \( |\widehat{G_{2}}| - |\widehat{C_{1}}| - x - p_{1} \).

    Next, we show that \( |\widehat{C_{1}}| \leq |\widehat{F_{2}}| \).
    Observe that at the point in time where the first server \( j \) joins \( G_{2} \), \( C_{1} = \emptyset \).
    Else, \( j \) would not have moved because it is in \( C_{2} \) and servers of \( C_{1} \) are always preferred over servers of \( C_{2} \).
    Thus, any server \( j \in \widehat{C_{1}} \) must have been in \( G_{1} \) at location \( \ell \) before.
    Such a server can only join \( C_{1} \) again if another server \( s \) joins \( F_{2} \) on \( \ell \) (due to \Cref{le:second-algorithm:on-g-no-defensive}, the other server cannot join \( F_{1} \)).
    Therefore, \( |\widehat{C_{1}}| \leq |\widehat{F_{2}}| \).
    Using this in the above yields that there are at least \( |\widehat{G_{2}}| - |\widehat{F_{2}}| - x - p_{1} \) locations of \( \widehat{G} \cup \widehat{D} \) which are not covered at the end of the phase.

    Then, \Cref{le:opt-cost:high-when-not-all-covered} tells us that \opt{} had \( |\widehat{G_{2}}| - |\widehat{F_{2}}| - x - p_{1} + p_{1} \) movements during the phase and the lemma holds.
\end{proof}

For \Cref{le:second-algorithm:no-degrations-without-partner}, consider \Cref{figure:no-degrations-without-partner} for an intuitive depiction.
\\

\begin{minipage}[c]{0.46\textwidth}
    \centering
    \includegraphics[page=2, width=\textwidth, clip=true, trim=0cm 8cm 20cm 0cm]{figures/figures.pdf}
\end{minipage}
\begin{minipage}[c]{0.02\textwidth}
    \hfill
\end{minipage}
\begin{minipage}[c]{0.46\textwidth}
    \centering
    
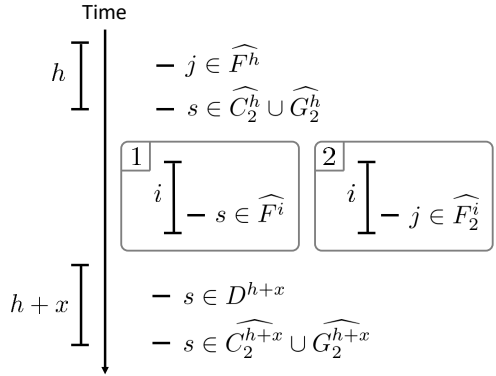
\captionof{figure}{A depiction of the statement of \Cref{le:second-algorithm:no-degrations-without-partner}.
    Time goes from the top to the bottom.
    If the server \( s \) is in \( \widehat{C^{h}_{2}} \cup \widehat{G^{h}_{2}} \) and \( \widehat{C^{h+x}_{2}} \cup \widehat{G^{h+x}_{2}} \), then either (1) in a phase \( h < i < h + x \), \( s \in \widehat{F^{i}} \), or (2) the server \( j \) triggering the event \( e^{s}_{2} \) in \( h \) is in a phase \( h < i \leq h + x \) in \( \widehat{F^{i}_{2}} \).
    Intuitively, this must be because \( s \) must have been in \( D^{h+x} \).
    This implies that \( s \) was the server which was lastly specifically requested on \( p^{*}(s) \), but after \( h \), \( j \) was lastly specifically requested on \( p^{*}(s) \).
    }
    \label{figure:no-degrations-without-partner}
\end{minipage}
\noindent\\

\begin{restatable}{lem}{SecondAlgorithmNoDegrationsWithoutPartner}\label{le:second-algorithm:no-degrations-without-partner}
    Consider a server \( s \in \widehat{C^{h}_{2}} \cup \widehat{G^{h}_{2}} \) with \( s \in \widehat{C^{h+x}_{2}} \cup \widehat{G^{h+x}_{2}} \)
    for minimal \( x > 0 \).
    Either there is a phase \( h < i < h + x \) such that \( s \in \widehat{F^{i}} \), or there there is a phase \( h < i \leq h+x \) such that \( j \in \widehat{F^{i}_{2}} \) for the server \( j \) that triggered the event \( e^{s}_{2} \) in phase \( h \).
\end{restatable}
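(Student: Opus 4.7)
The plan is to argue by contradiction: suppose both alternatives fail, and derive that \( s \) cannot in fact have been placed into \( D^{h+x} \). Since \( C_2 \) and \( G_2 \) are only entered via \( D \) (a candidate server joins \( C_2 \) or \( G_2 \) only if it previously belonged to \( D \) in the same phase), this contradicts \( s \in \widehat{C^{h+x}_2} \cup \widehat{G^{h+x}_2} \). Let \( v \) denote the common position of \( s \) and \( j \) at the instant of \( e^s_2 \) in phase \( h \); after the specific request triggering the event, \( p^*(j) = v \), and already before that event \( p^*(s) = v \) because \( s \) was in \( D^h \) at \( v \).

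First I would combine the failure of the first alternative with \( s \in \widehat{C^{h+x}_2} \cup \widehat{G^{h+x}_2} \) (which forces \( s \notin \widehat{F^{h+x}} \)) to conclude that \( s \) receives no specific request in any phase \( i \in (h, h+x] \), so \( p^*(s) \) remains equal to \( v \) throughout. A short induction on \( i \in (h, h+x] \), using the failure of the second alternative to rule out a specific request for \( j \) at a location other than \( v \) in any such phase, then yields that \( p^*(j) = v \) at the start of every phase \( i \in (h, h+x] \).

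Next I would argue that for \( s \) to enter \( D^{h+x} \), the defensive branch of \textnormal{\textsc{Def-Gen}} must fire on a general request at \( p^*(s) = v \) and choose \( s \). The tie-break rule picks the non-frozen candidate with the latest specific request, and \( j \)'s last specific request (at \( v \) in phase \( h \)) is strictly more recent than \( s \)'s, so \( j \) must be absent from the candidate pool at that instant. Two subcases remain: either (A) \( p^*(j) \neq v \) at that moment, in which case \( j \) has been specifically requested at some \( w \neq v \) earlier in phase \( h+x \), and the start-of-phase value \( p^*(j) = v \) implies \( j \in \widehat{F_2^{h+x}} \), which is exactly case (2) with \( i = h+x \); or (B) \( j \in F^{h+x} \) at that moment with \( p^*(j) = v \), in which case \( j \) has been specifically requested at \( v \) in phase \( h+x \) and therefore occupies \( v \), so \( v \in p(F) \). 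Subcase (A) directly contradicts the assumed failure of the second alternative; subcase (B) violates the entry condition \( r \notin p(G \cup D \cup F) \) of the defensive branch, ruling out the very step that was supposed to place \( s \) in \( D^{h+x} \).

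The main obstacle is disentangling the tie-break rule from the outer guard of \textnormal{\textsc{Def-Gen}}: the proof must rule out any window inside phase \( h+x \) — either before or after \( j \) becomes frozen — in which \( s \) could legally outrank \( j \) as the defensive choice at \( v \), unless \( j \) has been frozen at a location different from \( v \), which is precisely the situation captured by case (2).
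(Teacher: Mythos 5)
Your proof is correct and follows essentially the same route as the paper's: re-entry into \( \widehat{C^{h+x}_{2}} \cup \widehat{G^{h+x}_{2}} \) forces \( s \) back into \( D^{h+x} \) at the unchanged location \( p^{*}(s)=v \), and since the tie-break prefers the more recently requested \( j \), either \( p^{*}(j) \) changed (yielding \( j \in \widehat{F^{i}_{2}} \)) or \( s \) must itself have been specifically requested in between. Your version is in fact somewhat more careful than the paper's terse argument, since you explicitly dispose of the case where \( j \) is frozen at \( v \) in phase \( h+x \) via the guard \( r \notin p(G \cup D \cup F) \), which the paper leaves implicit.
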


\begin{proof}
    It holds that \( p^{*}(j) = p^{*}(s) \) at the end of phase \( h \) and \( j \) was lastly specifically requested \emph{after} \( s \).
    Since \( s \in \widehat{C^{h+x}_{2}} \cup \widehat{G^{h+x}_{2}} \), \( s \) must have been in \( D \) in phase \( h + x \).
    For this, \( s \) must be the last server for \( p^{*}(s) \) (with respect to phase \( h + x \)) that was specifically requested.
    If \( s \in \widehat{F^{i}} \) for \( h < i < h + x \) the lemma holds.
    Else, \( p^{*}(s) \) is the same for \( h \) and \( h + x \) and \( j \) must have changed its \( p^{*}(j) \) in between.
    This implies \( j \in \widehat{F^{i}} \) for \( h < i \leq h + x \).
\end{proof}

As briefly sketched above, we can now show that the maximum charges to a movement of \opt{} are limited.
Consider \Cref{figure:maximum-charges} for a depiction.
\\

\begin{minipage}[c]{0.46\textwidth}
    \centering
    \includegraphics[page=3, width=\textwidth, clip=true, trim=0cm 8cm 21.5cm 0cm]{figures/figures.pdf}
\end{minipage}
\begin{minipage}[c]{0.02\textwidth}
    \hfill
\end{minipage}
\begin{minipage}[c]{0.46\textwidth}
    \centering
    
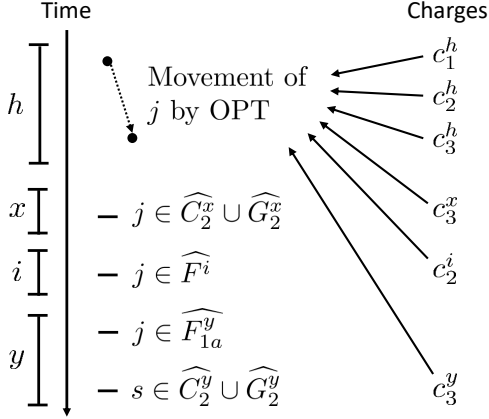
\captionof{figure}{A depiction of the situation analyzed in the proof of \Cref{le:charging:maximum-charge-per-movement-of-opt}.
    Time goes from the top to the bottom.
    A bar represents a point in time for which the adjacent statement holds.
    For details, consider the proof of \Cref{le:charging:maximum-charge-per-movement-of-opt}.
    }
    \label{figure:maximum-charges}
\end{minipage}
\noindent\\

\begin{restatable}{lem}{SecondAlgorithmMaxChargePerMovement}\label{le:charging:maximum-charge-per-movement-of-opt}
    Any movement of \opt{} gets charged at most \( 2\, (|\widehat{C}| + |\widehat{G}| + |\widehat{D}| + |\widehat{F}|) + 14 \) for some phase.
\end{restatable}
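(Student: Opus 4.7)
\textbf{The plan} is to fix an arbitrary OPT movement \( m \) of server \( j_m \) occurring in phase \( i_m \), and bound the total charge \( m \) accrues across \emph{all} phases from the three streams \( c_1, c_2, c_3 \) of the charging scheme defined before the statement. I would separately show that the \( c_1 \)-contribution is at most \( 2(|\widehat{C}|+|\widehat{G}|+|\widehat{D}|+|\widehat{F}|) \), and that \( c_2 \) and \( c_3 \) together contribute at most \( 14 \) per movement.

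\textbf{The \( c_1 \) and \( c_2 \) streams are the easy ones.} Each phase \( i \) ships its \( c^{i}_{1} \) to a single distinguished OPT movement \( o^{i} \) taking place in \( i \); since any movement belongs to a unique phase and \( c^{1}_{1}=0 \) (so the exceptional assignment \( c^{\text{last}}_{1}\mapsto o^{1} \) causes no double charge), \( m \) picks up at most one \( c_1 \)-charge, namely \( c^{i_m}_{1}=2(|\widehat{C^{i_m}_{2}}|+|\widehat{G^{i_m}}|+|\widehat{D^{i_m}}|+|\widehat{F^{i_m}}|)\leq 2(|\widehat{C^{i_m}}|+|\widehat{G^{i_m}}|+|\widehat{D^{i_m}}|+|\widehat{F^{i_m}}|) \). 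For \( c_2 \), the charge of \( 5 \) is placed on OPT's \emph{last} movement of \( j \) before the end of some phase \( i \) with \( j\in\widehat{F^{i}_{2}} \); since \( j\in\widehat{F^{i}_{2}} \) forces OPT to move \( j \) during phase \( i \), any two such phases for \( j_m \) supply two distinct OPT movements of \( j_m \), so \( m \) can play the ``last movement'' role for at most one phase, giving one contribution of \( 5 \). The scheme-1 part of \( c^{i}_{3} \) (which attaches \( 3 \) to a \( \widehat{F^{i}_{2}} \)-movement for each of the first \( |\widehat{F^{i}_{2}}| \) servers of \( \widehat{G^{i}_{2}} \)) inherits this uniqueness and contributes at most \( 3 \).

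\textbf{The main obstacle} is the scheme-2 part of \( c_3 \), whose \( |\widehat{G^{i}_{2}}|-|\widehat{F^{i}_{2}}| \) residual charges of \( 3 \) are distributed to OPT movements via \Cref{le:second-algorithm:movement-mapping-G-two}. The movement \( m \) can be hit here in essentially two independent ways: once in phase \( i_m \) -- either as OPT's own movement of \( j_m \) since \( j_m \)'s last specific request (applicable only if \( j_m\in\widehat{G^{i_m}_{2}} \), where \( j_m \) joins the \( x \)-term of \Cref{le:second-algorithm:movement-mapping-G-two}) or as one of the ``extra movements during \( i_m \)'' matched to some other \( s\in\widehat{G^{i_m}_{2}} \); these two subcases are mutually exclusive by the bookkeeping of \Cref{le:second-algorithm:movement-mapping-G-two}, so they jointly contribute at most \( 3 \); and once via a re-routing, when \( j_m \) plays the role of the triggering server \( j \) in case~(b) of \Cref{le:second-algorithm:no-degrations-without-partner}, which forces \( j_m\in\widehat{F^{i_m}_{2}} \) and thereby pins down a unique specific request of \( j_m \) at a new location, contributing at most \( 3 \). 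Any other earlier reappearance of \( j_m \) in \( \widehat{C_{2}}\cup\widehat{G_{2}} \) that would double-charge \( m \) is absorbed by \Cref{le:second-algorithm:no-degrations-without-partner}: case~(a) supplies a fresh OPT movement of \( j_m \), case~(b) re-routes the charge to another movement. Summing \( 5+3+3+3=14 \) for the non-\( c_1 \) streams and adding the \( c_1 \)-term yields the claimed bound.

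\textbf{The hardest step} is showing the at-most-one re-routing bound on \( m \). The crucial ingredient is the canonical choice of target phase \( i\in(h_l, h_l+x_l] \) in case~(b) of \Cref{le:second-algorithm:no-degrations-without-partner} (cf.~\Cref{figure:no-degrations-without-partner}): whenever \( j_m \) has triggered several earlier events \( e^{s_l}_{2} \), every trigger must itself place \( j_m \) into some \( \widehat{F^{h_l}_{2}} \) if the new specific request of \( j_m \) is at a location different from \( p^{*}(j_m) \) before \( h_l \); hence the intermediate \( F_2 \)-phases of \( j_m \) can absorb the re-routings of all earlier triggers, leaving at most one re-routing whose canonical target is exactly \( i_m \) and hence \( m \). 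This structural argument is precisely what \Cref{le:second-algorithm:no-degrations-without-partner} was designed to support.
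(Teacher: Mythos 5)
Your overall route is the same as the paper's: fix one movement of \opt{}, argue it receives at most one \( c_1 \)-charge and at most one \( c_2 \)-charge of \( 5 \) (both tied to the unique next specific request of the moved server), and then control the charges of \( 3 \) coming from \( c_3 \) via \Cref{le:second-algorithm:movement-mapping-G-two} and \Cref{le:second-algorithm:no-degrations-without-partner}, arriving at \( 5+3+3+3=14 \). The paper's proof has exactly this structure, only with a different allocation of the three \( 3 \)'s: one charge for the phase \( h \) in which the movement occurs (covering, via the per-phase cap ``at most one \( c_3 \)-charge per movement per phase'', both the matching to \( \widehat{F_2} \)-movements and the movements identified by \Cref{le:second-algorithm:movement-mapping-G-two}), one charge for a single later phase \( x \) with \( h\le x<i \) in which the moved server \( j \) ends up in \( \widehat{C^{x}_{2}}\cup\widehat{G^{x}_{2}} \), and one re-routed charge.

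The step of yours that does not hold as written is the claim that every appearance of \( j_m \) in \( \widehat{C_{2}}\cup\widehat{G_{2}} \) outside phase \( i_m \) is ``absorbed'' by \Cref{le:second-algorithm:no-degrations-without-partner}. That lemma only constrains a \emph{second} appearance relative to an earlier one; if \( j_m \) ends up in \( \widehat{C^{x}_{2}}\cup\widehat{G^{x}_{2}} \) for the first time at some phase \( x>i_m \) (before its next specific request), the corresponding charge of \( 3 \) legitimately lands on \( m \), and it can co-occur with your phase-\( i_m \) slot (for instance, \( m \) charged in phase \( i_m \) as one of the ``extra movements during the phase''). The paper's bookkeeping reserves one of its three \( 3 \)'s precisely for this single cross-phase appearance, using \Cref{le:second-algorithm:no-degrations-without-partner} only to cap its multiplicity at one; you instead spend that slot on the matching charge attached to \( \widehat{F_2} \)-movements, which the paper does not itemize separately. (A smaller slip: the own-movement case applies when \( j_m\in\widehat{C^{i_m}_{2}}\cup\widehat{G^{i_m}_{2}} \), not only \( \widehat{G^{i_m}_{2}} \).) So your itemization and the paper's are permutations of each other, each leaving one of the conceivable \( 3 \)-charges implicit; to close your version you would have to either add the cross-phase appearance charge as a fourth item and show it excludes one of your three, or re-allocate as the paper does.
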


\begin{proof}[Proof of \Cref{le:charging:maximum-charge-per-movement-of-opt}]
    Consider a movement of \opt{} for server \( j \) in phase \( h \).
    Let \( i > h \) be the next phase in which \( j \) is specifically requested.

    Due to \( c^{h}_{1} \), the movement gets charged at most \( 2 \, (|\widehat{C^{h}}| + |\widehat{G^{h}}| + |\widehat{D^{h}}| + |\widehat{F^{h}}|) \).
    If \( j \in \widehat{F^{h}_{2}} \) and the movement happens before \( j \) is specifically requested, it receives a charge of at most \( 5 \) due to \( c^{h}_{2} \).
    Else, it receives charges of at most \( 5 \) due to \( c^{i}_{2} \), if \( j \) is in \( \widehat{F^{i}_{2}} \).
    From one server of \( \widehat{C^{h}_{2}} \cup \widehat{G^{h}_{2}} \) (it could also be \( j \) itself), the movement can get an additional charge of \( 3 \) (see \Cref{le:second-algorithm:movement-mapping-G-two}).
    If \( j \) ends up in \( \widehat{C^{x}_{2}} \cup \widehat{G^{x}_{2}} \) for an \( h \leq x < i \), there could be an additional charge.
    The latter charge can only be applied once because by \Cref{le:second-algorithm:no-degrations-without-partner} any second time \( j \) is in \( \widehat{C_{2}} \cup \widehat{G_{2}} \), the respective phase must be after phase \( i \).
    However, if \( j \) triggers an event \( e^{s}_{2} \) for some server \( s \) by joining \( F^{y}_{1a} \) (in phase \( y \geq i \)), and if \( s \) is in \( \widehat{C^{y}_{2}} \cup \widehat{G^{y}_{2}} \), the movement receives an additional charge of \( 3 \) due to \( c^{y}_{3} \).
    After that, \( j \) was specifically requested in \( y \) and any more charges to \( j \) affect a later movement of \( j \) by \opt{}.
\end{proof}

\paragraph{On the Competitive Ratio.}
Finally, we use that our algorithm ensures that each server is in precisely one of the sets \( C \), \( G \), \( D \), or \( F \) at any point in time, i.e., \( |C| + |G| + |D| + |F| \leq k \) always holds.

\begin{proof}[Proof of \Cref{th:defensive-algorithm:competitive-ratio}]
    Due to \Cref{le:charging:maximum-charge-per-movement-of-opt}, the algorithms cost can be charged to \opt{}'s cost such that each movement of \opt{} receives a maximum charge of \( 2\, ( |\widehat{C}| + |\widehat{G}| + |\widehat{D}| + |\widehat{F}|) + 14 \leq 2k + 14 \).
\end{proof}

\section{Algorithms for Non-uniform Metrics}\label{sec:general}

In the following section, we present two minor results on non-uniform metrics.
First, we present an algorithm for \( 2 \) servers in \Cref{sec:general:Real-Line}.
Second, we introduce an algorithm achieving a competitive ratio of \( 4 k \) on general metrics in \Cref{sec:general:WFA}.
Both algorithms work by treating each request as a general request and afterward correcting themselves if the request was specific.
While this approach did not work on uniform metrics, it gives rough upper bounds on non-uniform ones.

\subsection{The Real Line}\label{sec:general:Real-Line}

The following approach is based on the Double Coverage algorithm as presented originally for the \( k \)-server problem in \cite{chrobak_new_1990}.
\bigskip

\textbf{The Algorithm.}
Every request will first be treated as in the classical \( k \)-server problem with the DC algorithm.
If a request is a specific request for server \( j \) and server \( i \) was moved on the request, move \( i \) halfway towards \( j \) and then \( j \) on \( r \).
\bigskip

\begin{theorem}
    The algorithm above achieves a competitive ratio of \( 6 \) for \( k = 2 \) servers on the real line.
\end{theorem}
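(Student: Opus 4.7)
My plan is to adapt the classical potential-function proof of Double Coverage on the real line, augmented to handle server identities and the extra ``correction step'' the algorithm performs whenever a specific request is not served by its targeted server under DC. Let $a_1, a_2$ denote the algorithm's server positions and $o_1, o_2$ those of~$\opt$ (with matched identities). I would work with the weighted potential
\[
\Phi \;=\; \alpha\,\bigl(d(a_1,o_1) + d(a_2,o_2)\bigr) \;+\; \beta\,d(a_1,a_2),
\]
and prove the amortized inequality $\Delta\Phi + \alg(r) \leq 6\,\opt(r)$ on every request~$r$; telescoping then yields the ratio~$6$. The constants $\alpha,\beta$ are to be pinned down by the hardest case; a natural first guess, motivated by the factor~$2$ of the classical DC analysis and the factor-$3$ overhead of the halfway-then-full correction, is $\alpha=3,\ \beta=1$.

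The argument proceeds by case analysis on the request type and on the position of~$r$ relative to the algorithm's two servers. A general request~$r$ is served by DC alone, so the standard two subcases ($r$ between $a_1,a_2$, or $r$ outside both) go through essentially as in the classical DC analysis: $\opt$'s move inflates $\Phi$ by at most $\alpha\,\opt(r)$ through the triangle inequality on the matching term, while the subsequent DC sweep either shrinks $d(a_1,a_2)$ by twice its own cost (both-sides subcase) or shrinks the matching term against its own cost (outside subcase). Together these yield $\Delta\Phi + \alg(r) \leq 6\,\opt(r)$ with room to spare. For a specific request for server~$j$ at~$r$, the easy subcase is when $a_j$ is already the closer server: DC places $a_j$ onto $r$ with no correction firing, and the analysis mirrors the general case but with the sharper identity $\opt(r) = d(o_j,r)$.

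The hard subcase is a specific request for~$j$ at~$r$ when the other server $a_i$ is strictly closer under DC. Writing $\delta_0 := d(a_i,r)$ and $d_0 := d(a_j,r)$, one computes $\alg(r) = \tfrac{1}{2}\delta_0 + \tfrac{3}{2}d_0$ in the ``between'' subcase and $\alg(r) = \delta_0 + \tfrac{3}{2}d_0$ in the ``outside'' subcase. After the move, $d(a_j,o_j) = 0$, so the matching term of $\Phi$ releases a credit of $\alpha\,d(a_j,o_j)^{\text{old}}$; on the debit side, $d(a_i,o_i)$ may grow by up to $\delta_0 + (d_0 - \delta_0)/2$, and $d(a_1,a_2)$ ends at $(d_0 - \delta_0)/2$ rather than collapsing to zero. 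The main obstacle I expect is pinning $\alpha,\beta$ so that this released credit, together with $6\,d(o_j,r)$, covers the oversized $\tfrac{3}{2}d_0$ correction cost; doing so should require chaining the triangle bound $d(o_j,r) \geq d(a_j,r) - d(a_j,o_j)$ with a second triangle bound of the form $d(a_j,o_j) \geq d_0 - \delta_0 - d(a_i,o_i) - d(o_j,o_i)$ along the chain $a_j \to a_i \to o_i \to o_j$, so that whichever of $d(a_j,o_j)$ and $d(a_i,o_i)$ is small forces $d(o_j,r)$ to be large. If the initial guess $(\alpha,\beta) = (3,1)$ does not close the inequality, I would tune the two constants jointly (rechecking the general-request subcases, where the outside case is sensitive to~$\beta$ and the matching inflation is sensitive to~$\alpha$) until all four cases simultaneously close at ratio~$6$.
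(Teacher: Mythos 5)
Your accounting of the correction cost is right (DC plus the halfway/full correction gives \( \tfrac12\delta_0+\tfrac32 d_0 \), resp.\ \( \delta_0+\tfrac32 d_0 \)), and the general idea of a weighted potential with an identity-paired term is in the right spirit. But your two-term potential \( \Phi=\alpha\,(d(a_1,o_1)+d(a_2,o_2))+\beta\,d(a_1,a_2) \) has a genuine gap that no tuning of \( \alpha,\beta \) can repair: the claim that the pure-DC subcases ``go through essentially as in the classical DC analysis'' silently replaces the classical matching term (the \emph{minimum-weight} matching between the algorithm's and \opt{}'s servers, which on the line is the order-preserving one) by the \emph{identity} pairing, and these behave very differently precisely when identities are crossed -- which is the whole point of this problem. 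Concretely, take a general request at \( r \) with \( a_1\le a_2<r \), suppose \opt{} already has \( o_1 \) sitting at \( r \) (so \( \opt(r)=0 \)) and \( o_2\le a_2 \). DC moves \( a_2 \) right by \( m=d(a_2,r) \); then \( d(a_1,a_2) \) grows by \( m \), \( d(a_2,o_2) \) grows by \( m \), and \( d(a_1,o_1) \) is unchanged, so \( \Delta\Phi+\alg(r)=(\alpha+\beta+1)\,m>0=6\,\opt(r) \) for every choice of nonnegative constants. The minimum-matching term is exactly what absorbs the cost of outward DC moves (the moving server always approaches its matched \opt{} server under that matching); the identity-paired sum does not have this property, so the amortized inequality already fails on instances with no specific requests at all.

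The paper's proof keeps \emph{both} ingredients: it uses \( \psi=\alpha\, d(s_1,s_2)+\beta\cdot\textsc{Match}+\gamma\,(d(s_1,o_1)+d(s_2,o_2)) \), where \textsc{Match} is the minimum-weight matching. The \textsc{Match} term (as in classical DC) pays for the general/DC moves, while the identity term is what makes the swap/correction move profitable, since after it the specifically requested server coincides with its \opt{} counterpart at the request; the four resulting linear constraints are satisfied by \( \alpha=1 \), \( \beta=4 \), \( \gamma=2 \), and \opt{}'s own moves inflate the potential by at most \( (\beta+\gamma)=6 \) times their length, giving the ratio \( 6 \). So the fix for your proposal is not a different pair \( (\alpha,\beta) \) but reintroducing the minimum-matching term as a third component; without it the approach as stated does not close.
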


\begin{proof}
    Let \( s_{1} \) and \( s_{2} \) be the two servers of the online algorithm and let \( o_{1} \) and \( o_{2} \) be the servers of the optimal solution.
    The potential for \( k = 2 \) servers in the analysis of the Double Coverage algorithm \cite{chrobak_new_1990} is given by
    \begin{align*}
        \phi = d(s_{1}, s_{2}) + 2\, ( d(s_{1}, o_{1}) + d(s_{2}, o_{2})).
    \end{align*}

    The potential consists of the distance of the algorithm's servers to each other and the distance of the algorithm's servers to the optimal servers.
    Note, that in the original analysis, the server types do not matter and the servers are always numbered in increasing order of the line metric.
    Therefore, the second summand of \( \phi \) is better interpreted as the weight of a minimal matching \( \textnormal{\textsc{Match}} \) between the servers of the online algorithm and the optimal servers and thus,
    \begin{align*}
        \phi = d(s_{1}, s_{2}) + 2\cdot \textnormal{\textsc{Match}}.
    \end{align*}

    One can see here that the original potential does not depend on the servers' types simply because it does not need to.
    We extend the potential by relating the servers of the online algorithm with their counterpart of the optimal solution again.
    Then, we end up with the following extension and generalization of \( \phi \):
    \begin{align*}
        \psi
        = \alpha \cdot d(s_{1}, s_{2}) + \beta \cdot \textnormal{\textsc{Match}} + \gamma \cdot (d(s_{1}, o_{1}) + d(s_{2}, o_{2}))
    \end{align*}

    In the following, we determine the values for \( \alpha \), \( \beta \), and \( \gamma \) by going through all possible cases.

    First, consider the DC moves.
    Assume the algorithm moves \( s_{1} \) outwards by \( 1 \).
    Of course, the algorithm could move a server farther but we can argue on the value normalized to \( 1 \), because only the relations between \( \alpha \), \( \beta \), and \( \gamma \) in the potential matters.
    In this case, the term \( d(s_{1}, s_{2}) \) increases by 1, the matching decreases by 1, and the term \( d(s_{1}, o_{1}) \) may also increase up to 1.
    Hence, \( \Delta \psi \leq \alpha - \beta + \gamma \).
    The cost of the algorithm (\( 1 \)) is canceled if
    \begin{align}
        1 + \Delta \psi \leq 0
        \Leftarrow
        1 + \alpha - \beta + \gamma \leq 0
        \Leftrightarrow
        \alpha - \beta + \gamma \leq - 1.
        \label{ineq_Line_constraint-1}
    \end{align}

    Now consider the case where both servers move inwards, both by distance 1.
    The matching \( \textnormal{\textsc{Match}} \) remains neutral as at least one optimal server lies between the algorithm's servers.
    With the same argument, at least one of the terms \( d(s_{1}, o_{1}) \) and \( d(s_{2}, o_{2}) \) decreases by 1 as well, making the change with regard to their sum at most 0.
    Meanwhile, the distance \( d(s_{1}, s_{2}) \) decreases by 2, giving \( \Delta \psi \leq - 2 \, \alpha \).
    The cost of the algorithm (\( 2 \)) is again canceled if
    \begin{align}
        2 + \Delta \psi \leq 0
        \Leftarrow
        2 - 2 \, \alpha \leq 0
        \Leftrightarrow
        - \alpha \leq - 1.
        \label{ineq_Line_constraint-2}
    \end{align}

    Finally, we have to consider the swap move which is performed if the wrong server is on the request after the Double Coverage move.
    Consider the following setup:
    The server \( s_{1} \) is at the location of the request, but server \( s_{2} \) is needed.
    The server \( o_{2} \) of the optimal solution is on the request.

    Now, \( s_{2} \) moves distance 2 onto the request while \( s_{1} \) moves distance 1 in the opposite direction in which \( s_{2} \) moves.
    We can map the locations onto a number line as follows: The request is at 0 and \( s_{2} \) is at 2.
    During the move, \( s_{2} \) moves towards 0 and \( s_{1} \) towards 1.
    Going at equal speed, both servers arrive at 1 at the same time, where \( s_{1} \) stops and \( s_{2} \) continues to go to 0.

    In any case, we can see that \( d(s_{1}, s_{2}) \) decreases by 1.
    The rest of the potential change now depends on the location of \( o_{1} \).
    First case: \( o_{1} \) is on or to the right of location 1.
    This means \( s_{1} \) moves towards \( o_{1} \) the entire time. Since \( s_{2} \) moves onto the location of \( o_{2} \), we get a total decrease of 3 in the term \( d(s_{1}, o_{1}) + d(s_{2}, o_{2}) \).
    The change in \( \textnormal{\textsc{Match}} \) can be best observed from the perspective of \( o_{1} \): First, a server moves away from it by at most 2.
    Then a server moves towards it by distance 1.
    There is no difference in the potential involving \( o_{2} \) as there is a server at its location at the beginning and the end.
    Therefore \textnormal{\textsc{Match}} increases by at most 1.
    In total, \( \Delta \psi \leq -\alpha + \beta -3 \gamma \).
    The cost of the algorithm (\( 3 \)) is canceled by the potential if
    \begin{align}
        3 + \Delta \psi \leq 0
        \Leftarrow
        3 - \alpha + \beta -3 \gamma \leq 0
        \Leftrightarrow
        - \alpha + \beta -3 \gamma \leq - 3.
        \label{ineq_Line_constraint-3}
    \end{align}

    Second case: \( o_{1} \) is to the left of 1.
    Now \( d(s_{1}, o_{1}) \) increases by up to 1 while \( d(s_{2}, o_{2}) \) again decreases by 2.
    The change in the matching is as follows:
    If \( o_{1} \) is to the left of 0, then \( s_{1} \) increases the distance towards both servers by 1 while \( s_{2} \) decreases it by 2, making an overall decrease by 1.
    If \( o_{1} \) is between 0 and 1, observer that when the servers \( s_{1} \) and \( s_{2} \) meet at 1, they switch the partners in \textnormal{\textsc{Match}}.
    Hence, \( s_{2} \) moves towards its matching partner the entire time.
    Overall we have \( \Delta \psi \leq - \alpha - \beta - \gamma \) and the cost of the algorithm (\( 3 \)) is canceled by the potential if
    \begin{align}
        3 + \Delta \psi \leq 0
        \Leftarrow
        3 - \alpha - \beta - \gamma \leq 0
        \Leftrightarrow
        - \alpha - \beta - \gamma \leq - 3.
        \label{ineq_Line_constraint-4}
    \end{align}

    Whenever \opt{} moves its servers, the potential increases by at most \( (\beta + \gamma) \) times the moved distance, because the first term is independent of \opt{}.
    Choosing \( \alpha = 1 \), \( \beta = 4 \) and \( \gamma = 2 \) ensures that \Cref{ineq_Line_constraint-1,ineq_Line_constraint-2,ineq_Line_constraint-3,ineq_Line_constraint-4} hold while \( (\beta + \gamma) \) is minimized.
    Since the increase in the potential is upper bounded by \( (\beta + \gamma) = 6 \), the competitive ratio is at most \( 6 \).
\end{proof}

\subsection{General Metrics}\label{sec:general:WFA}

The following algorithm is based on the Work Function Algorithm~\cite{koutsoupias_k-server_1995}.
\bigskip

\textbf{The algorithm.}
Upon the arrival of a request, treat it as a general request and execute a step of the Work Function algorithm~\cite{koutsoupias_k-server_1995}.
If the request is specific to a server, move that server to the request afterward.
\bigskip

\begin{theorem}
    The above algorithm achieves a competitive ratio of at most \( 4 \, k \) on general metrics.
\end{theorem}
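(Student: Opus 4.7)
The plan is to decompose the algorithm's total cost as $A = A_{\text{WFA}} + A_{\text{corr}}$, where $A_{\text{WFA}}$ is the total distance moved by the Work Function step at each request and $A_{\text{corr}}$ is the total cost of the follow-up correction moves triggered by specific requests. The two parts are bounded separately against $\opt$.

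For $A_{\text{WFA}}$, the Work Function step always treats the current request as a general \(k\)-server request. Since $\opt$ for the \(k\)-Server with Preferences Problem is a feasible offline solution for the induced general \(k\)-server instance on the same request sequence, the classical $(2k-1)$-competitive analysis of WFA yields $A_{\text{WFA}} \le (2k-1)\,\opt$.

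For $A_{\text{corr}}$, I would use the potential
\[
    \Phi \;=\; \sum_{j \in K} d\bigl(p(j),\, p_{\opt}(j)\bigr),
\]
summing, over all $k$ servers, the distance between our server $j$ and $\opt$'s server $j$. By the initial-configuration assumption $\Phi = 0$ at the start, and $\Phi \ge 0$ at all times. Every move of $\opt$ or of WFA by distance $d$ changes $\Phi$ by at most $d$ (triangle inequality). In contrast, a correction move for a specific request on server $j$ at location $r$ moves $p(j)$ to $r$, which is exactly the position to which $\opt$ must have placed server $j$ in order to satisfy the same specific request; hence the $j$-th summand of $\Phi$ drops to $0$, and $\Phi$ decreases by \emph{exactly} the correction cost. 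Telescoping over the whole sequence yields $0 \le \Phi_{\text{end}} - \Phi_{\text{start}} \le \opt + A_{\text{WFA}} - A_{\text{corr}}$, and hence $A_{\text{corr}} \le A_{\text{WFA}} + \opt$.

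Combining the two bounds gives $A \le 2\,A_{\text{WFA}} + \opt \le 2(2k-1)\,\opt + \opt = (4k-1)\,\opt \le 4k\,\opt$, which proves the theorem. The main obstacle is making the $A_{\text{WFA}}$ bound precise: our algorithm invokes WFA on the actual (post-correction) configuration rather than on the configuration a pure run of WFA would maintain, so one must check that the per-step amortised analysis of WFA -- whose ingredients are only the current state and the state-independent work function -- still delivers the $(2k-1)$ guarantee under these perturbations.
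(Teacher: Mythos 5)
Your proof is essentially the paper's own: the same decomposition of the cost into Work-Function moves plus correction moves, the same potential \( \sum_{j} d\bigl(p(j), p_{\opt{}}(j)\bigr) \) (the paper's \( \psi \)) yielding \( A_{\text{corr}} \le A_{\text{WFA}} + \opt{} \), and the same combination \( 2(2k-1)+1 \le 4k \). The one point you flag as the remaining obstacle -- that the extended-cost analysis of the Work Function Algorithm still bounds \( A_{\text{WFA}} \) by \( (2k-1)\,\opt{} \) even though the correction moves perturb the configuration from which each WFA step starts -- is exactly the step the paper also imports from Theorem 4.3 of Koutsoupias--Papadimitriou without further elaboration, so your proposal matches the paper's argument at the same level of detail.
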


\begin{proof}
    Let \alg{} be the algorithm above.
    Using the analysis in~\cite{koutsoupias_k-server_1995}, we get as in the proof of Theorem 4.3 of~\cite{koutsoupias_k-server_1995} that \( C(\opt{}) + C(\alg{})^{\textnormal{WFA}} \) regarding only the Work Function movements is bounded by the potential difference in each step and sums up to at most \( \sum \Delta \phi \leq 2k\cdot C(\opt{}) \) in total (plus a constant dependent on the initial configuration that we ignore in the following).
    Let \( \phi \) be the potential from that analysis.
    Now we introduce an additional potential \( \psi=\sum_{i=1}^{k} d(s_{i}, o_{i}) \) that reflects the sum over the distances between equivalent servers \( s_{i} \) of \alg{} and \( o_{i} \) of \opt{}.
    Note, how \( \psi \) is at least zero at any time.

    Let \( P \) be the cost to move a server of the algorithm towards a request if it is specifically for that server.
    \( \psi \) decreases by \( P \) since the server must be at the location of the request in \alg{} and \opt{}.
    Therefore, \( \Delta \psi^{\textnormal{Spec}} \) cancels the cost \( C(\alg{})^{\textnormal{WFA}} \) of \alg{} during such a move.
    During the regular Work Function move, \( \psi \) increases by at most \( \Delta \psi^{\textnormal{WFA}} \leq C(\opt{}) + C(\alg{})^{\textnormal{WFA}} \) with regard to that move by definition.
    Therefore, the total cost of \alg{} is at most
    \begin{align*}
        C(\alg{})
        & \leq C(\alg{}) + \sum \Delta \psi
        \leq
        C(\alg{})^{\textnormal{WFA}} + \sum \Delta \psi^{\textnormal{WFA}}
        \\
         & \leq C(\alg{})^{\textnormal{WFA}} + C(\opt{}) + C(\alg{})^{\textnormal{WFA}}
        = 2 \, C(\alg{})^{\textnormal{WFA}} + C(\opt{})
        \\
         & \leq 2 \sum \Delta \phi
        \leq 4 \, k \cdot C(\opt{}).
    \end{align*}
\end{proof}

\section{Closing Remarks}\label{sec:closing-remarks}

In this paper, we introduced the \( k \)-Server with Preferences Problem, a generalization of the \( k \)-Server Problem, that poses new difficulties already on uniform metrics.
While the competitive ratio is, by definition, a worst-case ratio, our results show a parameterization that connects the different input types of our generalization in all its forms (classical \( k \)-Server inputs vs. mixed inputs vs. trivial inputs).

It would greatly complement our work to have a solution for the offline problem.
Already determining the complexity of solving the problem offline seems to be challenging.

We assumed that the initial configuration of the online algorithm was identical to the optimal one.
This assumption can be dropped without problems, increasing the competitive ratio of our algorithms by an additive term independent of the optimal solution.
All of the lower bounds end up in a configuration equivalent to the initial one.
Hence, their sequence can repeat arbitrarily often, such that the same bounds hold in the limit.

An important question for future work is if and how our results extend to other metric spaces.
Presumably, for the case of general metric spaces, the severity of a mismatch of the servers after the optimal positions are covered is increased.
We analyzed how the competitive ratio increases if an algorithm neglects to keep its servers close to their last known configuration (revealed and rendered significant by specific requests).
On a uniform metric, the influence is significant, even though each server is at most a distance \( 1 \) from its position in the optimal solution.
From a technical perspective, the case of a general metric probably needs a creative way for an algorithm to measure its deviation from the last (partially known) configuration of the optimum.
As we have seen, an adaption for the Work-Function-Algorithm that treats every request as a general one and, afterward, corrects itself if the request was specific, yields a competitive ratio of \( \mathcal{O}(k) \).
However, based on our observations for uniform metrics, we believe that more involved techniques are required to come close to the lower bound.
Probably there is a similar trade-off between instances of the \( k \)-Server Problem and instances including specific requests.

Regarding the model, there are various possible extensions due to the heterogeneity.
For example, one could think that requests bring forth a set of servers of which not all but a subset is needed.
This makes the problem even more difficult for an online algorithm.
It is also interesting to see if further trade-offs influencing the competitive ratio can be determined.

\bibliography{references}

\end{document}